\documentclass[11pt,letterpaper]{article}
\usepackage[utf8]{inputenc}
\usepackage[margin=1in]{geometry}

\usepackage{amsmath}
\usepackage{amsfonts}
\usepackage{amssymb}
\usepackage{amsthm}
\usepackage[normalem]{ulem}
\usepackage[table,xcdraw,dvipsnames]{xcolor}
\definecolors{RawSienna}
\usepackage{hyperref}
\usepackage[capitalise,nameinlink]{cleveref}

\hypersetup{
    colorlinks=true,
    pdfpagemode=UseNone,
    citecolor=OliveGreen,
    linkcolor=NavyBlue,
    urlcolor=Magenta,
    pdfstartview=FitW
}

\usepackage{bbm}
\usepackage{braket}
\usepackage{graphicx}
\usepackage{algorithm}
\usepackage{algorithmic}

\makeatletter
\renewcommand\@fnsymbol[1]{
\ifcase#1\or$\dagger$\or$\ddagger$\or$\mathsection$\fi
}
\makeatother

\newtheorem{theorem}{Theorem}
\numberwithin{theorem}{section}
\newtheorem{lemma}[theorem]{Lemma}
\newtheorem{definition}[theorem]{Definition}
\newtheorem{proposition}[theorem]{Proposition}
\newtheorem{corollary}[theorem]{Corollary}

\Crefname{ALC@unique}{Line}{Lines}

\newcommand{\LS}{\left(}
\newcommand{\RS}{\right)}
\newcommand{\LM}{\left[}
\newcommand{\RM}{\right]}
\newcommand{\LL}{\left\{}
\newcommand{\RL}{\right\}}

\newcommand{\X}{{\mathcal X}}
\newcommand{\Y}{{\mathcal Y}}
\newcommand{\Dist}[1]{{\mathcal D_{#1}}}
\newcommand{\Real}{{\mathbb R}}
\newcommand{\Chi}[1]{{\chi_{#1}}}
\newcommand{\IP}[2]{{\langle{#1},{#2}\rangle}}
\newcommand{\Deg}{{\mathrm{deg}}}
\newcommand{\Bs}{{\mathrm{bs}}}
\newcommand{\Norm}[1]{{\left\lVert{#1}\right\rVert}}

\newcommand{\Protocol}{{\mathcal P}}
\newcommand{\Abs}[1]{{\left|{#1}\right|}}
\newcommand{\Tr}{{\mathrm{tr}}}
\newcommand{\Trans}{{\mathrm T}}
\newcommand{\Matrix}{{\mathcal M}}
\newcommand{\Indicator}{{\mathbbm1}}
\newcommand{\Argmax}{{\mathrm{argmax}}}
\newcommand{\hatDist}[1]{{\widehat{\mathcal D}_{#1}}}

\title{A Lifting Theorem for Hybrid Classical-Quantum \\ Communication Complexity}

\author{
Xudong Wu\thanks{State Key Laboratory of Novel Software Technology, Nanjing University, Nanjing 210023, China.}\and
Guangxu Yang\thanks{University of Southern California, Los Angeles, California 90089, USA.}\and
Penghui Yao\footnotemark[1]\hspace{1ex}\thanks{Hefei National Laboratory, Hefei 230088, China.}
}
\date{\today}

\begin{document}

\maketitle

\begin{abstract}
We investigates a model of hybrid classical-quantum communication complexity, in which two parties first exchange classical messages and subsequently communicate using quantum messages.
We study the trade-off between the classical and quantum communication for composed functions of the form $f\circ G^n$, where $f:\{0,1\}^n\to\{\pm1\}$ and $G$ is an inner product function of $\Theta(\log n)$ bits.
To prove the trade-off, we establish a novel lifting theorem for hybrid communication complexity.
This theorem unifies two previously separate lifting paradigms: the query-to-communication lifting framework for classical communication complexity and the approximate-degree-to-generalized-discrepancy lifting methods for quantum communication complexity.
Our hybrid lifting theorem therefore offers a new framework for proving lower bounds in hybrid classical-quantum communication models. 

As a corollary, we show that any hybrid protocol communicating $c$ classical bits followed by $q$ qubits to compute $f\circ G^n$ must satisfy $c+q^2=\Omega\big(\max\{\Deg(f),\Bs(f)\}\cdot\log n\big)$, where $\Deg(f)$ is the degree of $f$ and $\Bs(f)$ is the block sensitivity of $f$.
For read-once formula $f$, this yields an almost tight trade-off: either they have to exchange $\Theta\big(n\cdot\log n\big)$ classical bits or $\widetilde\Theta\big(\sqrt n\cdot\log n\big)$ qubits, showing that classical pre-processing cannot significantly reduce the quantum communication required.
To the best of our knowledge, this is the first non-trivial trade-off between classical and quantum communication in hybrid two-way communication complexity.
\end{abstract}

\section{Introduction}

{\em Hybrid quantum computation} delegates part of the computation to classical processors, integrating classical control, memory, and processing alongside quantum subroutines that execute essential quantum computations.
Since fully fault-tolerant quantum computers have not yet been realized, this model effectively captures the current NISQ (Noisy Intermediate-Scale Quantum) era~\cite{Pre18}, where quantum hardware remains limited and the classical resources complement and amplify the computational power.
A plethora of hybrid algorithms have been proposed, including the Variational Quantum Eigensolver (VQE)~\cite{PMSYZLAO14}, Quantum Approximate Optimization Algorithm (QAOA)~\cite{FGG14}.
Various hybrid quantum computation models have also been studied, demonstrating that quantum computers can achieve polynomial or even exponential advantages over classical computation, as seen the models such as $\textsf{DQC}_k$~\cite{KL98,MFF14}, $\textsf{NISQ}$~\cite{CCHL23}.

In a parallel line of research, quantum computing has also shown polynomial and even exponential advantages in communication complexity.
A substantial body of work has sought the weakest quantum communication model that still outperforms the strongest classical communication model~\cite{BCW98,Raz99,BJK08,GKKRW07,RK11,Gav20,GRT22,Gav19,Gav21,GGJL25,YZ25}.
Unlike time complexity, we have plenty of mathematical tools to prove lower bounds on communication complexity.
Consequently, the quantum advantage in communication complexity is unconditional, which does not rely on any unproven computational assumptions. 

In the NISQ era, researchers have proposed leveraging the exponential gap between classical and quantum communication to demonstrate quantum advantages.
For instance, Kumar, Kerenidis, and Diamanti experimentally implemented a quantum communication protocol for~\textsf{Hidden-matching} problem introduced by Bar-Yossef, Jayram, and Kerenidis~\cite{BJK08}.
Building on this line of work, Aaronson, Buhrman, and Kretschmer introduced the concept of {\em quantum information supremacy}~\cite{ABK24}: an experimental demonstration in which a quantum device solves a task using significantly fewer qubits than the number of bits required by any classical algorithm.
Their proposed task was again based on the classical–quantum exponential separation exhibited in the~\textsf{Hidden-matching} problem.
Most recently, quantum information supremacy has been experimentally demonstrated using a trapped-ion quantum computer~\cite{KGDGGGHMNHA25}.

In this work, we investigate the power of hybrid classical-quantum communication through the lens of communication complexity, aiming to characterize the advantages offered by combining classical and quantum communication resources.
We adopt the standard model of communication complexity~\cite{Yao79} and that of quantum communication complexity~\cite{Yao93}, both originally introduced by Yao.
In the hybrid classical–quantum communication complexity model, a protocol proceeds in two stages.
In the first stage, the parties exchange classical messages and perform local classical computations.
In the second stage, they perform local quantum operations and exchange quantum messages.
Our focus is on understanding the trade-off between classical and quantum communication costs within this hybrid framework.

\subsection{Our results}

In this paper, we study the trade-off between classical and quantum communication complexities within the hybrid classical–quantum communication model.
A central question we address is whether it is possible to simultaneously reduce both classical and quantum communication costs compared to purely classical or purely quantum protocols.
To this end, we investigate the hybrid classical-quantum communication complexity of the function family $f\circ G^n$, where $f$ takes an $n$-bit input, and $G$ is an inner product function on $\Theta(\log n)$ bits.
We establish a hybrid query-to-communication lifting theorem.

\begin{theorem}[Informal]
If $f\circ G^n$ can be computed by first communicating $c$ classical bits, followed by communicating $q$ qubits, then there is an $O\big(c/b)$-depth deterministic decision tree for the outer function $f$ such that $f$ restricted to any outcome of the query has approximate degree $O\big(q/b\big)$.
\end{theorem}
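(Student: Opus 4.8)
The plan is to process the two communication phases in order: translate the classical phase into a shallow deterministic decision tree via the standard machinery of deterministic query-to-communication lifting for the inner-product gadget, and then translate the residual quantum phase into an approximate-degree bound on the leaves of that tree via the generalized discrepancy (pattern-matrix) method. Write $b=\Theta(\log n)$ for the number of bits per gadget, so the domain is $(\{0,1\}^b)^n\times(\{0,1\}^b)^n$ with Alice holding $x=(x_1,\dots,x_n)$, Bob holding $y=(y_1,\dots,y_n)$, and the $i$-th gadget value being $G(x_i,y_i)=\IP{x_i}{y_i}\bmod 2$.

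For the classical stage, fix a transcript $\pi$ of length at most $c$ (first fixing any shared randomness so that overall correctness is preserved on average, then conditioning on $\pi$); it determines a combinatorial rectangle $A_\pi\times B_\pi$, one of at most $2^c$. I would run the usual lifting simulation to build a decision tree $T$ that, on input $z\in\{0,1\}^n$, walks down the protocol tree of $\pi$ while maintaining a ``live'' subrectangle that stays \emph{blockwise dense} on the coordinates not yet queried and consistent with $z$ on those already queried. Every transmitted bit halves the relevant densities, and whenever some block's density drops below the level at which $G$ stops acting as a good two-source extractor on that block, $T$ queries the corresponding coordinate of $z$ and fixes that gadget input; since each query pays for roughly $b$ transmitted bits, $T$ asks only $O(c/b)$ queries. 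The queried set $I$ together with the answers $\rho:=z_I$ is the leaf that $z$ reaches, and the final live rectangle $A'_\pi\times B'_\pi\subseteq A_\pi\times B_\pi$ remains blockwise dense on $[n]\setminus I$, so that the joint law of $(G(x_i,y_i))_{i\notin I}$ is $\epsilon$-close to uniform on $\{0,1\}^{[n]\setminus I}$ when $(x,y)$ is uniform on $A'_\pi\times B'_\pi$, with $\epsilon$ inverse-polynomial in $n$.

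For the quantum stage, fix a leaf $\rho$, put $m:=n-|I|$, and let $F:=f|_{\rho}\colon\{0,1\}^m\to\{\pm1\}$. On the structured subrectangle $A'_\pi\times B'_\pi$ the second stage---$q$ qubits of communication on top of whatever (possibly entangled) state is carried over---computes $F\circ G^m$ with bounded error, and I want to conclude $\widetilde{\deg}(F)=O(q/b)$. For this I invoke the lifting half of the generalized discrepancy / pattern-matrix method: take a dual witness $\psi\colon\{0,1\}^m\to\Real$ certifying the approximate-degree bound for $F$ ($\ell_1$-normalized, well correlated with $F$, orthogonal to every monomial of degree below $d:=\widetilde{\deg}(F)$), and lift it through $G$ to a signed measure on the gadget inputs. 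Since a degree-$<d$ monomial becomes, after composition with $\IP{\cdot}{\cdot}\bmod 2$, a Fourier character on fewer than $db$ bits, on which inner product has exponentially small bias, the lifted witness has discrepancy (equivalently, $\gamma_2$ factorization norm) too small to be consistent with a bounded-error $q$-qubit protocol---even one using prior entanglement---unless $q=\Omega(d\cdot b)$, i.e.\ $d=O(q/b)$.

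The main difficulty is the seam between the two stages: the discrepancy argument wants its hard distribution supported, at least approximately, inside $A'_\pi\times B'_\pi$, because that is the only region where the residual quantum protocol is promised to work, whereas the lifted dual witness is naturally spread over the entire gadget domain. This is exactly where blockwise density earns its keep---because $A'_\pi$ and $B'_\pi$ retain all but $O(c/b)$ blocks' worth of min-entropy, one can couple (or re-weight) the lifted measure onto $A'_\pi\times B'_\pi$ while perturbing its correlation with $\mathrm{sign}(F\circ G^m)$ and its orthogonality to low-degree characters by only an inverse-polynomial amount. Carrying out this coupling cleanly, tracking how the simulation error $\epsilon$ propagates through the discrepancy estimate, and checking that the resulting approximation error stays bounded away from $1/2$ so that standard error reduction recovers the usual $1/3$-approximate degree, is where essentially all of the technical work lies; the gadget size $b=\Theta(\log n)$ is picked precisely so that the two amortizations---the $O(c/b)$ query bound and the $\Omega(db)$ communication bound---both go through at once. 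Everything else is an orchestration of the two existing lifting toolkits.
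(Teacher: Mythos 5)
Your high-level blueprint matches the paper's: run the density-preserving simulation of deterministic query-to-communication lifting to extract a depth-$O(c/b)$ decision tree with a dense live rectangle at each leaf, then show that on a dense rectangle the residual $q$-qubit protocol forces $\Deg_{1/3}(f_{J,z}) = O(q/b)$ via a lifted dual polynomial and the generalized discrepancy bound. The classical half of your sketch (potential function, query costs $\approx b$ bits, blockwise density) is essentially what \Cref{lem:construct_rectangle} does, and your choice to normalize the lifted witness onto the rectangle is exactly how the paper defines $\Psi = \frac{2^m}{|R|}\,\psi\circ G^J$ on $R$.

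Where your sketch goes astray is in the description of the quantum half's technical core. You write that the coupling perturbs the lifted witness's ``orthogonality to low-degree characters by only an inverse-polynomial amount,'' which suggests you expect the strong orthogonality of the matrices $M_S = \Chi S\circ G^J$ (i.e., $M_S M_T^\Trans = 0$ for $S\ne T$) to survive the restriction to $R$ up to small error, so that a Sherstov-style single-moment bound on $\Norm\Psi$ would go through. That is not what happens, and it is not how the paper argues. The paper explicitly notes that restricting to an arbitrary (even dense) rectangle destroys strong orthogonality outright; the $M_S$ restricted to $R$ have large cross-traces in the usual sense. The actual mechanism is different: the paper controls $\Norm\Psi$ via $\Norm\Psi \le \bigl(\Tr(\Psi\Psi^\Trans)^2\bigr)^{1/4}$ and proves the new \Cref{prop:bound_each_trace}, which bounds $\frac{1}{|R|^2}\bigl|\Tr\bigl(M_{S_1}M_{T_1}^\Trans M_{S_2}M_{T_2}^\Trans\bigr)\bigr|$ by $2^{-0.11b(|S_1|+|T_1|+|S_2|+|T_2|)}$ for \emph{every} quadruple $(S_1,T_1,S_2,T_2)$, diagonal or not, using the $0.99$-density of $X$ and $Y$ plus the linearity of the inner-product gadget (the ``valid $w_1w_2$'' counting in \Cref{prop:count_valid}). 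The final $\Norm\Psi\sqrt{|R|}\le n^{-1.1d}$ then comes from summing these bounds over $|S_j|\ge d$, and it is this sum that is dominated by the minimal-degree terms, not an orthogonality-preservation argument. In short: the seam you correctly identify is real, but your proposed repair (approximate orthogonality plus re-weighting) would not close it; the genuinely new step is the fourth-moment trace estimate on dense rectangles, which your sketch does not contain.
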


As an application, we prove a lower bound on the classical-quantum trade-off.

\begin{theorem}[Informal]
\label{thm:informal_tradeoff}
If $f\circ G^n$ can be computed by first communicating $c$ bits deterministically, followed by communicating $q$ qubits, then $c+q^2=\Omega\big(\max\{\Deg(f),\Bs(f)\}\cdot\log n\big)$.
In particular, for a read-once formula $f$, we have a tight trade-off: either $c=\Omega\big(n\log n\big)$ or $q=\Omega\big(\sqrt n\log n\big)$.
\end{theorem}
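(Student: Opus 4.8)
The plan is to derive the corollary from the hybrid lifting theorem (the first theorem above) by combining it with two classical ``robustness under shallow restriction'' facts from query complexity. Suppose a hybrid protocol computes $f\circ G^n$ with $c$ classical bits followed by $q$ qubits, and write $b=\Theta(\log n)$ for the gadget size. The lifting theorem hands us a deterministic decision tree $T$ of depth $d=O(c/b)$ on the $n$ variables of $f$ such that, for every leaf $\ell$ with partial assignment $\rho_\ell$ along its root-to-leaf path, the restriction $f|_{\rho_\ell}$ has approximate degree $\widetilde{\Deg}(f|_{\rho_\ell})\le k:=O(q/b)$. So the task is to argue that $f$ cannot be simultaneously complex and ``shallowly decomposable into low-approximate-degree pieces'': restricting away $d$ coordinates destroys at most $d$ units of block sensitivity and at most $d$ units of degree, so some residual $f|_{\rho_\ell}$ still carries almost all the complexity of $f$, contradicting $\widetilde{\Deg}(f|_{\rho_\ell})\le k$ unless $d$ or $k$ is large.

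First I would handle block sensitivity. Pick an input $z$ realizing $\Bs(f)$ with disjoint sensitive blocks $B_1,\dots,B_s$, $s=\Bs(f)$, and run $T$ on $z$ to a leaf $\ell$. Disjointness means the $\le d$ queried coordinates meet at most $d$ of the $B_i$, so at least $s-d$ blocks survive untouched and remain sensitive for $f|_{\rho_\ell}$ at the restriction of $z$ to the free coordinates; hence $\Bs(f|_{\rho_\ell})\ge s-d$. The polynomial-method inequality $\widetilde{\Deg}(g)=\Omega\big(\sqrt{\Bs(g)}\big)$ together with $\widetilde{\Deg}(f|_{\rho_\ell})\le k$ then gives $s-d=O(k^2)$, i.e. $\Bs(f)=O\big(c/b+q^2/b^2\big)$; multiplying by $b$ yields $c+q^2=\Omega(\Bs(f)\cdot b)=\Omega(\Bs(f)\log n)$ (keeping the extra factor of $b$ gives the sharper $c\cdot b+q^2=\Omega(\Bs(f)\log^2 n)$, which is what the read-once statement really exploits).

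Next, for degree, I would expand $f$ as a multilinear polynomial over the leaves of $T$: $f=\sum_\ell\big(\text{path indicator of }\ell\big)\cdot f|_{\rho_\ell}$, where each path indicator is a product of at most $d$ literals. Thus $\Deg(f)\le d+\max_\ell\Deg(f|_{\rho_\ell})$, so some leaf $\ell^\ast$ has $\Deg(f|_{\rho_{\ell^\ast}})\ge\Deg(f)-d$ while still $\widetilde{\Deg}(f|_{\rho_{\ell^\ast}})\le k$. For a read-once formula $f$ this is conclusive, since every restriction of a read-once formula is a read-once formula on some $m'$ variables, for which $\Deg=m'$ and $\widetilde{\Deg}=\widetilde\Theta(\sqrt{m'})$; as $\Deg(f)=n$ this forces $\widetilde\Theta(\sqrt{n-d})\le k$, i.e. $n=\widetilde O(d+k^2)=\widetilde O\big(c/b+q^2/b^2\big)$, which rearranges to: either $c=\widetilde\Omega(n\log n)$ or $q=\widetilde\Omega(\sqrt n\log n)$ — the claimed trade-off (and since $\Bs$ of a read-once formula can be as small as $\Theta(\sqrt n)$, the degree branch is genuinely needed to get the $n\log n$ classical bound). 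For general $f$ one feeds in the best available quantitative relation between the exact and approximate degree of the residuals in place of the read-once fact.

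I expect the degree case to be the main obstacle: passing from $\Deg(f|_{\rho_\ell})$ back to $\widetilde{\Deg}(f|_{\rho_\ell})$ is clean only for structured $f$ such as read-once formulas, and for arbitrary $f$ one must either restrict to such classes or accept a weaker dependence, so pinning down exactly $\Omega(\Deg(f)\log n)$ in the displayed bound is the delicate point. Everything else is bookkeeping: instantiating the lifting theorem with $b=\Theta(\log n)$, checking that the decision tree it produces is for $f$ rather than $f\circ G^n$, and tracking the $\log n$-versus-$\log^2 n$ discrepancy between the displayed bound and the read-once trade-off.
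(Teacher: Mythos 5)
Your proposal is correct in all essentials and mirrors the paper's overall strategy (apply the hybrid lifting theorem, then argue that some leaf of the resulting shallow decision tree still carries most of the complexity of $f$), but your two branches are handled a bit differently from the paper. For block sensitivity, your argument of running the tree on the input realizing $\Bs(f)$ and counting untouched blocks is essentially a direct re-derivation of what the paper packages abstractly as ``$\Bs(\cdot)$ is entropic'' (\Cref{prop:bs_is_entropic} together with \Cref{prop:entropic_transitivity}), so this part is the same in substance. For degree, however, your route is genuinely different: you expand $f=\sum_\ell \Indicator_\ell\cdot f|_{\rho_\ell}$ over the leaves and read off $\Deg(f)\le d+\max_\ell\Deg(f|_{\rho_\ell})$, whereas the paper instead proves a Fourier-analytic single-variable restriction lemma (\Cref{prop:deg_is_entropic}) and walks down the tree choosing $z_i$ adaptively to preserve degree. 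Your decomposition argument is arguably cleaner as a one-shot existential statement, while the paper's entropic framework has the advantage of being uniform across any complexity measure with the restriction property. Where you hedge unnecessarily is the passage from $\Deg(f|_{\rho_{\ell^\ast}})$ to $\Deg_{1/3}(f|_{\rho_{\ell^\ast}})$ for general $f$: the paper invokes $\Deg_{1/3}(g)=\Omega(\sqrt{\Deg(g)})$ for \emph{every} total Boolean function $g$, which is a consequence of Huang's sensitivity theorem (cited as \cite[Theorem 4]{ABKRT21}), so the degree branch goes through in full generality — no restriction to structured outer functions or to read-once formulas is needed, and the ``delicate point'' you flag is in fact already settled. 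Your observations about the sharper $\log^2 n$ form being what the read-once corollary really uses, and about block sensitivity alone being insufficient for read-once formulas, are both correct.
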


Our results refute the possibility that classical pre-processing can reduce the number of subsequent quantum bits.
This resembles the result about the trade-off between classical and quantum memory in memory-sample lower bounds for learning studied by Liu, Raz, and Zhan \cite{LRZ23}.

To the best of our knowledge, this is the first non-trivial trade-off between classical and quantum communication in a hybrid two-way communication model.
The research about the lower bounds on classical and quantum communication complexity has a long history \cite{LS09}, while the underlying techniques differ substantially.
Classical lower bounds are typically established via combinatorial methods \cite{KN97}, whereas quantum lower bounds often rely on analytic techniques \cite{LS09}.
Moreover, the query-to-communication lifting theorems for the two models were also developed independently.
In this paper, we unify these approaches through a novel lifting mechanism, which we believe offers new insights into establishing lower bounds in hybrid communication complexity.

\bigskip

Our main technical tool is the following theorem, which lifts the approximate degree to quantum communication complexity for composed functions restricted to rectangles.

\begin{theorem}[Informal]
Let $R=U\times V$ be a rectangle in the input domain of $f\circ G^n$.
If the uniform random variables on $U$ and $V$ are both $0.99$-dense, the quantum communication cost of $f\circ G^n$ restricted to $R$ is $\Omega\big(\Deg_\varepsilon(f)\cdot\log n\big)$.
\end{theorem}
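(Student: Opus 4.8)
I would argue via the generalized discrepancy method for quantum communication complexity (equivalently, the approximate $\gamma_2$-norm bound): writing $F := f\circ G^n$ for the communication matrix and $b = \Theta(\log n)$ for the gadget size, it suffices to exhibit a real matrix $\Psi$ supported on $R$ with $(i)$ $\IP{\Psi}{F}$ bounded below by a constant, $(ii)$ $\Norm{\Psi}_1 = \sum_{(u,v)\in R}\Abs{\Psi(u,v)} \le 1$, and $(iii)$ operator norm $\Norm{\Psi} \le (|U|\,|V|)^{-1/2}\cdot 2^{-\Omega(\Deg_\varepsilon(f)\cdot b)}$; such a $\Psi$ certifies $Q_{1/3}(F|_R) = \Omega\!\left(\log\frac{1}{\Norm{\Psi}\sqrt{|U|\,|V|}}\right) = \Omega(\Deg_\varepsilon(f)\cdot\log n)$. (We may assume $\varepsilon$ is a small absolute constant, since approximate degree changes only by constant factors across constant error levels; this keeps the advantage comfortably above $1/3$.) The matrix $\Psi$ comes from lifting a dual polynomial: by LP duality there is $\psi\colon\{0,1\}^n\to\Real$ with $\Norm{\psi}_1 = 1$, $\sum_z\psi(z)f(z) > \varepsilon$, and $\psi$ of \emph{pure high degree} $d := \Deg_\varepsilon(f)$, i.e. $\widehat\psi(S) = 0$ for all $|S| < d$.

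\textbf{The lift and the easy conditions.} Set $\lambda(z) := \Abs{\psi(z)}$, a probability distribution on $\{0,1\}^n$, and $\sigma(z) := \operatorname{sgn}\psi(z)$. Since the uniform distributions on $U$ and on $V$ are each $0.99$-dense, and $G$ is the inner product function on $b = \Theta(\log n)$ bits --- a strong two-source extractor once the min-entropy rates sum to more than $1$, which $0.99 + 0.99$ does --- the random variable $G^n(u,v)$ for $(u,v)$ uniform in $R$ is $1/\mathrm{poly}(n)$-close to uniform on $\{0,1\}^n$, and the fibers $\{(u,v)\in R : G^n(u,v) = z\}$ all have nearly equal size. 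I would then take $\Psi(u,v) := \mu(u,v)\cdot\sigma(G^n(u,v))$, where $\mu$ is the distribution on $R$ produced by sampling $z\sim\lambda$ and then a uniformly random preimage of $z$ inside $R$. Then $(ii)$ is $\Norm{\Psi}_1 = 1$, and since $G^n(u,v) = z$ forces $F(u,v) = f(z)$, $(i)$ reads $\IP{\Psi}{F} = \sum_z\lambda(z)\sigma(z)f(z) = \sum_z\psi(z)f(z) > \varepsilon$. Up to the $1/\mathrm{poly}(n)$ errors from near-uniformity and near-regularity of the fibers, $\Psi$ also agrees with the cleaner object $\frac{2^n}{|U|\,|V|}\,\psi(G^n(u,v))\cdot\Indicator\{(u,v)\in R\}$, which is the convenient form for the spectral estimate.

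\textbf{The spectral bound --- the main obstacle.} Condition $(iii)$, the bound on $\Norm{\Psi}$, is the heart of the argument and the place where the rectangle $R$ and the density hypothesis genuinely enter. In the \emph{unrestricted} setting ($U = V = (\{0,1\}^b)^n$) the matrix $\psi\circ G^n = \sum_{|S|\ge d}\widehat\psi(S)\,(\chi\circ G)^{\otimes S}\otimes J^{\otimes [n]\setminus S}$ (with $J$ the all-ones block and $\chi\circ G = ((-1)^{\langle u_i,v_i\rangle})$, which has operator norm $2^{b/2}$ versus $\Norm{J} = 2^b$, a gain of $2^{-b/2}$ per nontrivial block) is bounded \emph{not} by a triangle inequality over Fourier levels --- which would lose a factor $2^{\Omega(n)}$ and only yield a trade-off for $d = \Omega(n/\log n)$ --- but by the structured spectral estimate underlying the pattern-matrix method, which exploits that every surviving monomial has degree $\ge d$ to conclude $\Norm{\psi\circ G^n} \le 2^{bn}\cdot 2^{-\Omega(bd)}$, hence $(iii)$. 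Restricting to $R = U\times V$ breaks two things: we pass to an arbitrary submatrix (which can only increase the normalized norm), and, since $U,V$ need not respect the coordinate decomposition, the exact product structure across the $n$ blocks is lost. The $0.99$-density hypothesis is exactly the repair: it keeps $|U|,|V| = 2^{(1-o(1))bn}$ so the normalization $\sqrt{|U|\,|V|}$ is still $2^{(1-o(1))bn}$, and --- crucially --- it supplies an \emph{approximate} product structure, because conditioned on any assignment to a subset of the blocks every remaining block still has min-entropy rate above $1/2$ and so $IP_b$ still extracts on it. I would use this to run a hybrid/decoupling argument across the (at most $n$) blocks in the support of a monomial, transferring the single-block $2^{-\Omega(b)}$ spectral gain to the restricted matrix with only $2^{-\Omega(b)}$ additive slack per step; the accounting then forces $b$ to be a large enough constant multiple of $\log n$, since the total slack is at most $n\cdot 2^{-\Omega(b)} = o(1)$ precisely when $b = \Theta(\log n)$. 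Making this hybrid argument rigorous --- in particular controlling the interaction between the accumulated error and the pure-high-degree structure so that the final estimate is $2^{-\Omega(bd)}$ rather than merely $2^{-\Omega(b)}$, and handling the submatrix passage cleanly (e.g. via a convex decomposition of a dense source into blockwise-dense pieces) --- is the technical core, and I expect it to be the main obstacle.
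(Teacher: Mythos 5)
Your high-level skeleton matches the paper exactly: the paper also defines the witness $\Psi(u,v) = \frac{2^{|J|}}{|R|}\psi(G^J(x,y))$ supported on $R$, and uses the near-uniformity of $G^J(X,Y)$ under the density hypothesis to get $\Norm{\Psi}_1 \le 1.01$ and $\IP{F}{\Psi}\ge 0.33$, then plugs into the generalized discrepancy bound. Your parts $(i)$ and $(ii)$ are therefore fine. But the spectral bound $(iii)$ is the real content, and there your proposal has a genuine gap. You sketch a per-block hybrid/decoupling argument: peel off the $n$ coordinate blocks one at a time, pay a $2^{-\Omega(b)}$ additive slack per step, and hope to recover the pattern-matrix product-structure gain. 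You yourself flag the danger: the slack accumulates additively to $n\cdot 2^{-\Omega(b)} = o(1)$, but $o(1)$ is far larger than the target $2^{-\Omega(bd)}$, so any additive-slack hybrid is dead on arrival unless you can make the slack multiplicative in the degree. You do not say how to do this, and it is not clear such a per-block peeling can work, because the restriction to $R$ destroys the across-block independence that a hybrid would need at every intermediate step — a point you note but do not resolve.

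The paper's actual argument avoids this entirely; it is \emph{not} a hybrid across blocks. It bounds $\Norm{\Psi}$ by the Schatten-4 norm, i.e.\ $\Norm{\Psi}^4 \le \Tr(\Psi\Psi^{\Trans})^2$, expands $\Psi$ in the $M_S = \Chi S\circ G^J$ basis with $|\widehat\psi_S|\le 2^{-m}$, and reduces to bounding each cross-term $\frac{1}{|R|^2}\Abs{\Tr\LS M_{S_1}M_{T_1}^\Trans M_{S_2}M_{T_2}^\Trans\RS}$ by $2^{-0.11b(|S_1|+|T_1|+|S_2|+|T_2|)}$. Since every surviving $|S_i|\ge d$, summing gives $n^{-\Omega(d)}$ — that is where the factor of $d$ in the exponent comes from, with no accumulation issue. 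Each cross-term, after marginalizing the uniform-on-$U\times V$ measure to the coordinates actually appearing, is a bilinear form $\mathcal{D}_X^{\Trans}\Matrix\mathcal{D}_Y$. The density hypothesis is used to bound the $\ell_2$ norms $\Norm{\mathcal{D}_{X_I}}$, $\Norm{\mathcal{D}_{Y_I}}$ (an $\ell_\infty$ bound of $2^{-0.99 b|I|}$ gives an $\ell_2$ bound of $2^{-0.495 b|I|}$), and the linearity of the inner-product gadget is used in a separate collision-counting step (Proposition~\ref{prop:count_valid}): $\Norm{\mathcal{D}_X^{\Trans}\Matrix}^2$ reduces to counting pairs $(x_1x_2, x_1'x_2')$ whose XORs $w_1w_2$ satisfy explicit linear ``validity'' constraints, and the count is at most $2^{-b\cdot|S_1\cup T_1\cup S_2\cup T_2|}$ times the trivial bound. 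This is the missing ingredient: a global fourth-moment calculation in which density enters through $\ell_2$ norms of marginal distributions and gadget linearity enters through collision counting, replacing the per-block hybrid you propose. Without something like it, your proposal does not establish the $2^{-\Omega(bd)}$ spectral bound, which is the theorem's technical core.
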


\subsection{Related works}

\paragraph{Hybrid quantum computation.} Hybrid quantum communication complexity has been explored in several distinct settings.
In 2008, Gavinsky, Regev, and de Wolf studied the hybrid simultaneous message passing (SMP) model, in which one party sends a quantum message to a referee, the other sends a classical message, and the referee computes the function.
They established an almost tight bound on the quantum-classical communication complexity of \textsf{EQUALITY} in this model~\cite{GRW08}.
More recently, Arunachalam, Girish, and Lifshitz~\cite{AGL23} investigated the one-clean-qubit model of quantum communication, inspired by the quantum circuit complexity class $\textsf{DQC}_1$, where one qubit is in a pure state and all other qubits are maximally mixed.
They presented an explicit example demonstrating an exponential separation between the one-clean-qubit model and classical communication.
Lin, Wei, and Yao~\cite{LWY22} examined the hybrid classical-quantum communication complexity of generating classical correlations between two players -- a simpler task than computing a function -- and developed several lower bound techniques based on variants of nonnegative ranks and positive semidefinite (PSD) ranks.
Despite these advances, the field of hybrid classical–quantum communication complexity remains largely unexplored.
In contrast to the well-developed classical and quantum communication frameworks, there currently exist few general techniques for proving lower bounds in the hybrid setting.

The power of hybrid quantum computation has also been studied across a variety of computational models. One particularly relevant line of research concerns hybrid query complexity, a close analog of communication complexity.
In the query model, quantum algorithms are allowed a limited number of quantum queries, often interleaved with classical queries.
Researchers have developed several powerful techniques to prove lower bounds on hybrid query complexity.
Regev and Schiff initiated the study of Grover's search with a faulty oracle~\cite{RS08}, proving that no quantum speedup is possible when each oracle query may fail with small probability.
Built on this work, Rosmanis~\cite{Ros24} derived a tight bound for preimage search in this setting.
Hamoudi, Liu, and Sinha~\cite{HLS24} later extend the compressed-oracle framework introduced by Zhandry~\cite{Zha19} to establish tight bounds for collision finding in a hybrid query setting.
A related line of work~\cite{CCL23,CM20,AGS22,HL22,CH22} have proved lower bounds for hybrid algorithms in the so-called $d$-QC model, where $d$ quantum queries are interleaved with a polynomial number of classical queries.
Sun and Zheng~\cite{SZ19} studied decision trees in which each node corresponds to a quantum circuit that makes at most $q$ quantum queries and ends with a measurement.
They proved a quantum query complexity lower bound of $\Omega\big(\Bs(f)/q+\sqrt{\Bs(f)})$ for any function $f$.

In the context of quantum circuit models, several hybrid computational paradigms have been proposed and studied.
Knill and Laflamme~\cite{KL98}, followed by subsequent works~\cite{MFF14}, introduced the $\textsf{DQC}_k$ model, where the quantum circuits have access to $k$ clean qubits while the remaining qubits are maximally mixed, to capture the NMR approach to quantum computing.
The computational power of $\textsf{DQC}_k$ has been extensively studied by a series of works~\cite{ABKM17,JM24,Gir25}.
Chen, Cotler, Huang, and Li have proposed the complexity class \textsf{NISQ}~\cite{CCHL23} consisting of all problems solvable by a polynomial-time probabilistic classical algorithm equipped with access to a noisy quantum device.

Recently, Liu, Raz, and Zhan~\cite{LRZ23} initiated the study of learning with classical-quantum hybrid memory. They have established a tight trade-off among classical memory, quantum memory, and sample complexity for several learning tasks, refuting the possibility that a small amount of quantum memory significantly reduces the size of classical memory required for efficient learning on these problems.

\paragraph{Query-to-communication lifting.} Query-to-communication lifting theorems are generic methods for translating query complexity lower bounds to communication complexity lower bounds using a suitable base function composed with a gadget.
In classical two-party communication complexity, query-to-communication lifting theorems are known~\cite{RM97,GLMWZ16,GPW20,CFKMP19,CFKMP21,LMMPZ22,MYZ25} with sufficiently large gadgets.
These results have yielded diverse applications in various areas, including monotone circuit complexity, proof complexity, combinatorial optimization, and others.
In contrast, for quantum complexity, obtaining a general lifting theorem that translates the lower bounds of the quantum query into the lower bounds of the quantum communication remains a major open problem~\cite{ABGJKL17,CCMP20}.
A related line of work lifts lower bounds from approximate degree to quantum communication complexity~\cite{She11,SZ09,LZ10}.
Notably, these two lines of work rely on fundamentally different techniques, so insights from one do not directly transfer to the other. 
Motivated by recent research on hybrid quantum computation, we introduce a new hybrid communication model. To our knowledge, neither of the existing techniques for proving communication lower bounds in the classical or the quantum settings can be directly applied to this hybrid model. 

\subsection{Proof overview}

To prove the trade-off between the classical and quantum communication complexity in our hybrid model, we build upon and unify two independent lines of work on lifting theorems.
The first line of work lifts deterministic/randomized query complexity to communication complexity using gadgets of size $\Theta(\log n)$, such as index functions \cite{GPW20}, inner product functions \cite{CFKMP19}, and low discrepancy functions \cite{CFKMP21}.
The second line concerns lifting approximate degree to quantum communication complexity via constant-size gadgets, including index-like functions \cite{She11}, inner product functions \cite{SZ09}, and strongly balanced functions \cite{LZ10}.
We combine these two techniques to analyze both the classical and quantum phases in a hybrid communication protocol. 

More specifically, consider a protocol that first exchanges $c$ bits in the first phase.
It partitions the input domain into $2^c$ disjoint rectangles.
Suppose that thereafter the parties proceed via $q$ qubits of quantum communication and compute a function $F$.
Restricting to any one of the $2^c$ rectangles reduces the problem to a pure quantum communication problem of cost at most $q$.
To lower bound the quantum communication complexity, we apply the generalized discrepancy method: it is known that for any real matrix $\Psi$ supported on rectangle $R$, we have a lower bound on the quantum communication complexity of $F$ restricted to $R$
$$\Omega\left(\log\frac{\IP F\Psi-0.1\Norm\Psi_1}{\Norm\Psi\sqrt{|R|}}\right).$$

We follow arguments of the approximate-degree to quantum communication complexity lifting framework developed by Sherstov \cite{She11}.
However, complication and challenge arises, because, in our hybrid model, the rectangle $R$ is inherited from the classical communication in the first phase and may be arbitrary, whereas prior proofs rely on more structure of the input domain.

Our main technical contribution is to introduce a generic condition on rectangles $R$, which we call {\em density}, under which we can still prove quantum communication complexity lower bounds via the approximate degree to quantum communication complexity lifting approach.
We believe that this contribution is of independent interest and anticipates further applications.
Surprisingly, this ``density" property turns out to coincide exactly with the property of rectangles considered in the prior query-to-communication lifting research \cite{GPW20,CFKMP19,CFKMP21}.
Concretely, let $G$ be an inner product gadget of size $\Theta(\log n)$.
For a composed function $F=f\circ G^n$, we select $\Psi=\psi\circ G^n$, where $\psi$ is a dual polynomial as suggested in \cite{She11}.
On a dense rectangle $R$, the values of $G^n$ over $R$ are nearly uniform.
Moreover, once we obtain a dense rectangle $R$, we could prove a tight lower bound via the generalized discrepancy with more sophisticated analysis.
Our analysis relies crucially on the gadget size being $\Theta(\log n)$ and on the linearity of the inner product gadget.

To argue that there exists a dense rectangle $R$ after the classical phase, we follow the framework of query-to-communication lifting to locate a dense rectangle among the $2^c$ rectangles, where $c$ is the classical communication cost.
The general idea is as follows.
We start with $R$ being the entire input domain.
With each transmitted bit, $R$ is partitioned into two parts, and the larger one is selected.
Whenever $R$ is not dense, we can always fix a set of coordinates $I\subseteq[n]$ such that: by choosing $z\in\{0,1\}^I$ arbitrarily, a rectangle $R'\subseteq R$ can be found, which satisfies that $G^I$ is constant on $R'$ (taking the value $z$), and $R'$ is dense with respect to unfixed coordinates.
Then we replace $R$ by $R'$ to restore density.
For any $f$ and $I\subseteq[n]$, the coordinates in $I$ can be fixed such that the degree of $f$ decreases by at most $|I|$.
After transmitting $c$ bits, the number of fixed coordinates is at most $O(c/\log n)$.
And the outer function $f$ is equivalent to a function of degree $\Deg(f)-O(c/\log n)$ (and of approximate degree $\sqrt{\Deg(f)-O(c/\log n)}$) when restricted to $R$.
Our lifting theorem for hybrid classical-quantum communication is obtained by applying approximate degree lifting to dense rectangles.

\subsection{Discussion and open problems}

Given the advent of the NISQ era, hybrid quantum computation has attracted growing attention in recent years.
It is therefore tempting to understand the computational power of hybrid quantum computation across different computational models, from both theoretical and experimental perspectives. 

\bigskip

To the best of our knowledge, the trade-off between classical and quantum communication complexity for read-once formulas $f$ is the first (non-trivial) tight trade-off between classical and quantum communication in hybrid two-way communication complexity.
Our results refute the possibility that classical pre-processing can substantially reduce the quantum communication required for the function $f\circ G^n$.

We employ the query-to-communication lifting mechanism for both classical and quantum communication complexity—versatile techniques that have been developed over the past decades.
Our results give rise to several interesting open problems for future research.

\begin{enumerate}
\item Our result is obtained by combining the query-to-communication lifting theorem~\cite{GLMWZ16,GPW20,CFKMP19,CFKMP21} for classical communication complexity and the approximate-degree-to-generalized-discrepancy lifting theorem~\cite{She11,SZ09,LZ10}.
Is it possible to improve our results to obtain a better trade-off: if $c\ll D(f)\cdot\log n$, then $q=\Omega\big(\Deg_\varepsilon(f)\cdot\log n\big)$?
Is it possible to prove a general query-to-communication lifting theorem for hybrid classical-quantum communication complexity?

\item Can trade-off analogous to \Cref{thm:informal_tradeoff} be established in the hybrid randomized-quantum communication model?
The statement $c+q^2=\Omega\big(\Bs(f)\cdot\log n\big)$ may hold by considering the hard distribution derived from the generalized discrepancy bound, while the statement for degree is unlikely to hold since the degree can be significantly larger
than the randomized query complexity.
Is is possible to further prove that $q=\Omega\big(\Deg_\varepsilon(f)\cdot\log n\big)$ when $c\ll R(f)\cdot\log n$ by combining query-to-communication lifting for \textsf{BPP} and approximate-degree-to-generalized-discrepancy lifting?

\item Lifting theorems for both classical and quantum communication complexity have been established for a variety of gadget functions.
However, our proofs critically rely on the linearity property of the inner-product gadget.
A natural question is therefore whether our results can be generalized to other gadget functions.

\item This work studies the hybrid classical-quantum communication model, in which players exchange classical messages and quantum messages.
What about the quantum-classical communication model, where quantum communication comes first?
Is it possible to prove a trade-off for this model, or more generally, for any model that consists of a constant number of alternating purely classical phases and purely quantum phases?

\item Proving a query-to-communication lifting theorem with a constant-sized gadget for classical communication complexity is a major open problem in communication complexity. Nevertheless, 
is it possible to prove a lifting theorem for a certain class of outer functions $f$?
For instance, can we prove a tight trade-off for $f=\mathsf{OR}$?
It would imply a trade-off for \textsf{SET DISJOINTNESS}, a central problem in communication complexity~\cite{KS92,Raz92,BJKS04,Raz03,She14}, whose trade-off between classical and quantum communication in hybrid communication complexity is widely open.

\end{enumerate}

\subsection*{Acknowledgements}

Xudong Wu and Penghui Yao were supported by National Natural Science Foundation of China (Grant No. 62332009, 12347104), Quantum Science and Technology-National Science and Technology Major Project (Grant No. 2021ZD0302901), NSFC/RGC Joint Research Scheme (Grant No. 12461160276), Fundamental and Interdisciplinary Disciplines Breakthrough Plan of the Ministry of Education of China (No. JYB2025XDXM118), Natural Science Foundation of Jiangsu Province (No. BK20243060).
\section{Preliminaries}

\paragraph{Notations.} For a random variable $X\in\X$, denote its distribution by $\Dist X$.
So $\Dist X(x)=\Pr[X=x]$ for $x\in\X$.
And for an event $E$, the random variable $X\mid E$ follows the distribution $\Dist X(\cdot\mid E)$.
For a set $U$, we write $X\sim U$ to denote that the random variable $X$ is uniformly distributed over $U$.

Let $n$ be a positive integer and $J\subseteq[n]=\{1,\cdots,n\}$ be a set of coordinates.
We define $\bar J=[n]\setminus J$.
Let $f:\{0,1\}^J\to\{\pm1\}$ be a Boolean function.
For any $K\subseteq J$ and $z\in\{0,1\}^{J\setminus K}$, define $f_{K,z}:\{0,1\}^K\to\{\pm1\}$ such that
$$f_{K,z}(x)=f(x,z),\qquad\forall\ x\in\{0,1\}^K.$$
That is, the function $f_{K,z}$ is obtained by fixing the coordinates outside $K$ to be $z$.

\paragraph{Basic Fourier analysis.} For a Boolean function $f:\{0,1\}^n\to\Real$, the Fourier expansion gives
$$f=\sum_{S\subseteq[n]}\widehat f_S\Chi S,$$
where those $\Chi S:\{0,1\}^n\to\{\pm1\}$ satisfying $\Chi S(x)=\prod_{i\in S}(-1)^{x_i}$ are orthogonal with respect to the inner product $\langle f, g\rangle=\sum_xf(x)g(x)$. $\widehat f_S$ are Fourier coefficients of $f$ satisfying $\widehat f_S=2^{-n}\IP f{\Chi S}$.
For any $1\le p\le\infty$, the $p$-norm of $f$ is defined to be $\Norm f_p=(\sum_x|f(x)|^p)^{1/p}$.
And $\Norm f_\infty=\max_x|f(x)|$.

The degree of $f$, denoted by $\Deg(f)$, is the largest size of $S\subseteq[n]$ such that $\widehat f_S\ne0$.
For $0\le\varepsilon<1$, the $\varepsilon$-approximate degree of $f$, denoted by $\Deg_\varepsilon(f)$, is the smallest degree of any $p:\{0,1\}^n\to\Real$ such that $\Norm{f-p}_\infty\le\varepsilon$.
The block sensitivity of $f$ on input $x$, denoted by $\Bs(f,x)$, is the largest $k$ such that there are disjoint $B_1,\cdots,B_k\subseteq[n]$, $f(x)\ne f\LS x^{\oplus B_i}\RS$ for each $i\in[k]$.
Here for $x\in\{0,1\}^n$ and $S\subseteq[n]$, $x^{\oplus S}\in\{0,1\}^n$ satisfies $x^{\oplus S}_i=x_i\oplus1$ for $i\in S$ and $x^{\oplus S}_i=x_i$ for $i\notin S$.
The block sensitivity of $f$, denoted by $\Bs(f)$, is $\max_{x\in\{0,1\}^n}\Bs(f,x)$.

\bigskip

There is a dual characterization of the approximate degree.
A polynomial $p$ of degree $d$ which approximates $f$ provides a certificate that the approximate degree of $f$ is at most $d$.
Similarly, a dual polynomial for $f$ provides a certificate that the approximate degree of $f$ is \textit{at least} some value.
More precisely, the dual polynomial has the following properties.

\begin{lemma}[{\cite[Theorem 3.2]{She11}}]
\label{lem:dual_polynomial}
For a function $f:\{0,1\}^n\to\Real$ and $0\le\varepsilon<1$, if $\Deg_\varepsilon(f)\ge d$, there is a function $\psi:\{0,1\}^n\to\Real$ such that
\begin{itemize}
\item $\Norm\psi_1=1$ and $\IP f\psi\ge\varepsilon$.
\item $\widehat\psi_S=0$ for any $S\subseteq[n]$ of size smaller than $d$.
\end{itemize}
\end{lemma}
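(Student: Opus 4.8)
The plan is to obtain the dual witness $\psi$ from linear programming duality, the standard route to approximate-degree lower bounds. First I would write the best degree-$(d-1)$ approximation error as a finite LP: letting $\mathcal P_{d-1}=\mathrm{span}\{\Chi S:|S|\le d-1\}$ be the space of functions of degree at most $d-1$, minimize $t$ over $(p,t)$ with $p\in\mathcal P_{d-1}$ subject to $-t\le f(x)-p(x)\le t$ for every $x\in\{0,1\}^n$. Its value is $E_{d-1}(f):=\min_{p\in\mathcal P_{d-1}}\Norm{f-p}_\infty$ (attained, since $\mathcal P_{d-1}$ is finite-dimensional and the objective is coercive in $p$), and the hypothesis $\Deg_\varepsilon(f)\ge d$ is exactly the statement $E_{d-1}(f)>\varepsilon\ge 0$: no degree-$(d-1)$ polynomial $\varepsilon$-approximates $f$.

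Next I would dualize. Introducing nonnegative multipliers $\lambda^+(x),\lambda^-(x)$ for the two inequalities at each $x$ and writing $\psi=\lambda^+-\lambda^-$, stationarity in $t$ gives $\sum_x(\lambda^+(x)+\lambda^-(x))=1$---hence $\Norm\psi_1\le 1$, and conversely any $\psi$ with $\Norm\psi_1\le 1$ arises this way via $\lambda^+=\max(\psi,0),\ \lambda^-=\max(-\psi,0)$ together with a uniform slack correction---while stationarity in the free coefficients $\widehat p_S$ with $|S|\le d-1$ gives $\IP\psi{\Chi S}=0$, i.e.\ $\widehat\psi_S=0$ for $|S|<d$; under these conditions the dual objective is $\IP f\psi$. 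The primal is feasible ($p=0$, $t=\Norm f_\infty$) and bounded, so strong duality yields a feasible dual $\psi$ with $\IP f\psi=E_{d-1}(f)>\varepsilon\ge 0$. In particular $\psi\ne 0$, so $0<\Norm\psi_1\le 1$, and replacing $\psi$ by $\psi/\Norm\psi_1$ preserves $\widehat\psi_S=0$ for $|S|<d$ while multiplying $\IP f\psi$ by a factor $\ge 1$; the resulting $\psi$ has $\Norm\psi_1=1$ and $\IP f\psi\ge\varepsilon$, as required.

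An equivalent geometric argument I would keep in reserve: the closed $\ell_\infty$-ball $B$ of radius $\varepsilon$ around $f$ is compact and, since $E_{d-1}(f)>\varepsilon$, disjoint from the closed subspace $\mathcal P_{d-1}$, so a separating hyperplane gives $\psi\ne 0$ with $\IP\psi g$ bounded above on $\mathcal P_{d-1}$---forcing it to vanish there, i.e.\ $\widehat\psi_S=0$ for $|S|<d$---and $\IP\psi h$ bounded strictly below by that bound for all $h\in B$; since $\min_{h\in B}\IP\psi h=\IP f\psi-\varepsilon\Norm\psi_1$, this reads $\IP f\psi-\varepsilon\Norm\psi_1>0$, which normalizes to the claim. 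There is no genuine obstacle here---it is a textbook duality computation---so the only things to handle with care are the bookkeeping that yields the sharp constraint $\Norm\psi_1\le 1$ (rather than $\le 2$) when passing from the paired inequalities to a single signed function, and verifying that the final rescaling does not disturb the Fourier-support condition.
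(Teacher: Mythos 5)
Your proposal is correct. The paper does not prove this lemma at all---it imports it directly as \cite[Theorem~3.2]{She11}---and Sherstov's proof of that theorem is exactly the LP-duality (equivalently, separating-hyperplane) argument you give: dualize the best degree-$(d{-}1)$ $\ell_\infty$-approximation program, read off $\Norm\psi_1\le1$, orthogonality to $\mathcal P_{d-1}$ (i.e.\ $\widehat\psi_S=0$ for $|S|<d$), and dual value $E_{d-1}(f)>\varepsilon$, then normalize. So your argument is essentially the same as the cited source's, and your handling of the $\Norm\psi_1\le1$ bookkeeping and the final rescaling is sound.
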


We are interested in the complexity measure that satisfies: for any Boolean function, any coordinate can be fixed s.t. the measure decreases by at most $1$.
The formal definition is as follows.

\begin{definition}
\label{def:entropic_measure}
Let $C(\cdot)$ be a complexity measure of the Boolean function $f:\{0,1\}^*\to\Real$.
We say that $C(\cdot)$ is entropic if for any set $J$ of coordinates, any function $f:\{0,1\}^J\to\Real$, and any coordinate $i\in J$, there is a $z_i\in\{0,1\}$ such that $C\LS f_{J\setminus\{i\},z_i}\RS\ge C(f)-1$.
\end{definition}

\begin{proposition}
\label{prop:entropic_transitivity}
Let $C(\cdot)$ be a complexity measure of the Boolean function $f:\{0,1\}^*\to\Real$.
If $C(\cdot)$ is entropic, for any set $J$ of coordinates, any function $f:\{0,1\}^J\to\Real$, and any $K\subset J$, there is a $z\in\{0,1\}^{J\setminus K}$ such that $C\LS f_{K,z}\RS\ge C(f)-(|J|-|K|)$.
\end{proposition}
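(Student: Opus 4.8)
The plan is to prove the statement by induction on the number of coordinates that must be fixed, $m := |J| - |K|$. The base case $m = 0$ is immediate: then $K = J$, the only $z \in \{0,1\}^{J \setminus K}$ is the empty assignment, $f_{K,z} = f$, and $C(f_{K,z}) = C(f) \ge C(f) - 0$.

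For the inductive step, assume the claim holds whenever at most $m - 1$ coordinates are fixed, and take $K \subseteq J$ with $|J| - |K| = m \ge 1$. First I would pick an arbitrary coordinate $i \in J \setminus K$ and invoke the entropic property of $C(\cdot)$ (\Cref{def:entropic_measure}) applied to $f : \{0,1\}^J \to \Real$ and this $i$: it yields a bit $z_i \in \{0,1\}$ with $C\LS f_{J \setminus \{i\}, z_i}\RS \ge C(f) - 1$. Setting $g := f_{J \setminus \{i\}, z_i} : \{0,1\}^{J \setminus \{i\}} \to \Real$, we have $K \subseteq J \setminus \{i\}$ with $|J \setminus \{i\}| - |K| = m - 1$, so the induction hypothesis applied to $g$ produces $z' \in \{0,1\}^{(J \setminus \{i\}) \setminus K}$ such that $C\LS g_{K, z'}\RS \ge C(g) - (m - 1)$.

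To finish, I would combine these two inequalities and check that the two restriction steps compose as expected: $g_{K,z'} = f_{K,z}$, where $z \in \{0,1\}^{J \setminus K}$ is the assignment that sets coordinate $i$ to $z_i$ and agrees with $z'$ on $(J \setminus K) \setminus \{i\} = (J \setminus \{i\}) \setminus K$. This identity is purely definitional (fixing coordinates outside $K$ in two batches is the same as fixing them all at once). Then $C\LS f_{K,z}\RS = C\LS g_{K,z'}\RS \ge C(g) - (m - 1) \ge C(f) - 1 - (m - 1) = C(f) - m$, which is the desired bound. There is no genuine obstacle: the content of the proposition is just that the single-step guarantee of an entropic measure telescopes along a sequence of coordinate fixings, and the only point needing a moment of care is the straightforward bookkeeping that the restrictions compose correctly.
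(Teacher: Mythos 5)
Your proof is correct and is essentially the paper's argument: both telescope the single-coordinate entropic guarantee by induction on the number of fixed coordinates, with the only bookkeeping being that successive restrictions compose (the paper phrases this as a forward induction along an enumeration $i_1,\dots,i_\ell$ of $J\setminus K$, you phrase it as peeling off one coordinate and invoking the induction hypothesis). No substantive difference.
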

\begin{proof}
Let $\ell=|J|-|K|$ and $J\setminus K=\LL i_1,\cdots,i_\ell\RL$.
Let $K^{(0)}=J$ and $K^{(j)}=K^{(j-1)}\setminus\{i_j\}$ for every $j\in[\ell]$.
So $K^{(\ell)}=K$.
For any $z\in\{0,1\}^{J\setminus K}$, let $g^{(0)}=f$ and
$$g^{(j)}=g^{(j-1)}_{K^{(j)},z_{i_j}}=f_{K^{(j)},z_{\{ i_1,\cdots i_j\}}}$$
for every $j\in[\ell]$.
So $g^{(\ell)}=f_{K,z}$
We prove that for every $j\in[\ell]$, there exists a $z_{J\setminus K^{(j)}}$ such that $C\LS g^{(j)}\RS\ge C(f)-j$ by induction on $j$.

The base case $j=0$ holds trivially.
Assume by the induction hypothesis that there exists a $z_{\{i_1,\cdots,i_{j-1}\}}$ such that $C\LS g^{(j-1)}\RS\ge C(f)-(j-1)$ for $j\in[\ell]$.
As $C(\cdot)$ is entropic, there exists a $z_{i_j}\in\{0,1\}$ such that $C\LS g^{(j)}\RS=C\LS g^{(j-1)}_{K^{(j)},z_{i_j}}\RS\ge C\LS g^{(j-1)}\RS-1\ge C(f)-j$.
\end{proof}

\begin{proposition}
\label{prop:deg_is_entropic}
The degree $\Deg(\cdot)$ is entropic.
\end{proposition}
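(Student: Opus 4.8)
The statement is a routine Fourier-analytic fact, and I would prove it directly from the Fourier expansion of $f$ by tracking what happens to the top-degree monomial under a restriction of the single coordinate $i$. Set $d=\Deg(f)$ and fix some $S\subseteq J$ with $|S|=d$ and $\widehat f_S\ne0$. The plan is to split the Fourier expansion of $f$ according to whether a character contains the index $i$, observe that fixing $x_i=z_i$ merely replaces $\Chi{\{i\}}$ by the sign $\sigma=(-1)^{z_i}\in\{\pm1\}$, and then show that at least one of the two choices $\sigma=\pm1$ keeps a degree-$(d-1)$ (or degree-$d$) Fourier coefficient nonzero.

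\textbf{Key steps.} First, write
$$f=\sum_{T\subseteq J\setminus\{i\}}\widehat f_T\,\Chi T\;+\;\Chi{\{i\}}\sum_{T\subseteq J\setminus\{i\}}\widehat f_{T\cup\{i\}}\,\Chi T,$$
so that for $z_i\in\{0,1\}$ with $\sigma=(-1)^{z_i}$,
$$f_{J\setminus\{i\},z_i}=\sum_{T\subseteq J\setminus\{i\}}\bigl(\widehat f_T+\sigma\,\widehat f_{T\cup\{i\}}\bigr)\Chi T.$$
Second, case on whether $i\in S$. If $i\notin S$, then the coefficient of $\Chi S$ in $f_{J\setminus\{i\},z_i}$ is $\widehat f_S+\sigma\,\widehat f_{S\cup\{i\}}$; were this zero for both $\sigma=+1$ and $\sigma=-1$, adding the two identities would force $\widehat f_S=0$, a contradiction, so some choice of $z_i$ yields a nonzero coefficient on a size-$d$ set and hence $\Deg\bigl(f_{J\setminus\{i\},z_i}\bigr)\ge d\ge d-1$. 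If $i\in S$, put $T_0=S\setminus\{i\}$, so $|T_0|=d-1$; the coefficient of $\Chi{T_0}$ in $f_{J\setminus\{i\},z_i}$ is $\widehat f_{T_0}+\sigma\,\widehat f_S$, and the same two-choice argument (using $\widehat f_S\ne0$) shows that some $z_i$ makes it nonzero, giving $\Deg\bigl(f_{J\setminus\{i\},z_i}\bigr)\ge d-1$. In either case the required $z_i$ exists.

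\textbf{Main obstacle.} There is no real obstacle here; the only thing that needs care is the sign bookkeeping in the restriction formula and keeping the two cases ($i\in S$ versus $i\notin S$) cleanly separated, since the surviving monomial has size $d$ in one case and size $d-1$ in the other. Once \Cref{prop:deg_is_entropic} is established, combining it with \Cref{prop:entropic_transitivity} immediately gives that a set $K\subseteq J$ of coordinates can be fixed so that the degree of the restricted outer function drops by at most $|J|-|K|$, which is the form in which the fact is used downstream.
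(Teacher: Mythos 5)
Your proof is correct and follows essentially the same route as the paper's: both expand $f$ in the Fourier basis, observe that fixing $x_i=z_i$ yields coefficients $\widehat f_T+(-1)^{z_i}\widehat f_{T\cup\{i\}}$, and case on whether $i$ lies in the maximal nonzero Fourier set. The only cosmetic difference is that you spell out the "both choices cannot vanish" step by adding the two identities, where the paper leaves it implicit.
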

\begin{proof}
Let $d=\Deg(f)$, there is an $I\subseteq J$ such that $|I|=d$ and $\hat f_I\ne0$.
For $i\in J$ and $z_i\in\{0,1\}$, let $g=f_{J\setminus\{i\},z_i}$.
The Fourier expansion yields
$$g=\sum_{S\subseteq J\setminus\{i\}}\LS\hat f_S+\hat f_{S\uplus\{i\}}(-1)^{z_i}\RS\Chi S.$$
So $\hat g_S=\hat f_S+\hat f_{S\uplus\{i\}}(-1)^{z_i}$ for any $S\subseteq J\setminus\{i\}$.
\begin{itemize}
\item If $i\in I$, we have $I\setminus\{i\}\subseteq J\setminus\{i\}$ and $\hat g_{I\setminus\{i\}}=\hat f_{I\setminus\{i\}}+\hat f_I(-1)^{z_i}$.
As $\hat f_I\ne0$, there exists a $z_i\in\{0,1\}$ such that $\hat g_{I\setminus\{i\}}\ne0$, and $\Deg(g)\ge|I\setminus\{i\}|=d-1$.
\item If $i\notin I$, we have $I\subseteq J\setminus\{i\}$ and $\hat g_I=\hat f_I+\hat f_{I\uplus\{i\}}(-1)^{z_i}$.
As $\hat f_I\ne0$, there exists a $z_i\in\{0,1\}$ such that $\hat g_I\ne0$, and $\Deg(g)\ge|I|=d$.
\end{itemize}
In conclusion, there is a $z_i\in\{0,1\}$ such that $\Deg\LS f_{J\setminus\{i\},z_i}\RS\ge d-1$.
\end{proof}

\begin{proposition}
\label{prop:bs_is_entropic}
The block sensitivity $\Bs(\cdot)$ is entropic.
\end{proposition}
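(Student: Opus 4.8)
The plan is to realize the block sensitivity of $f$ on a single worst-case input and then fix the coordinate $i$ to the value that input already takes there, so that all but at most one of the witnessing blocks survive the restriction. Concretely, let $x\in\{0,1\}^J$ be an input with $\Bs(f,x)=\Bs(f)=:b$, and let $B_1,\dots,B_b\subseteq J$ be pairwise disjoint blocks with $f(x)\ne f\LS x^{\oplus B_j}\RS$ for every $j\in[b]$. Given the coordinate $i\in J$ that we must fix, I would set $z_i:=x_i$, put $g:=f_{J\setminus\{i\},z_i}$, and let $x'\in\{0,1\}^{J\setminus\{i\}}$ be the restriction of $x$ to the coordinates other than $i$; then by the choice $z_i=x_i$ we have $g(x')=f(x', z_i)=f(x)$.

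Next I would exploit disjointness of the blocks: since the $B_j$ are pairwise disjoint, at most one of them contains $i$, so after relabeling we may assume $i\notin B_j$ for all $j\in[b-1]$, and hence $B_1,\dots,B_{b-1}\subseteq J\setminus\{i\}$ are valid blocks for the domain of $g$. Because flipping the coordinates in $B_j$ leaves coordinate $i$ untouched, the string $(x')^{\oplus B_j}$ together with the fixed value $z_i=x_i$ on coordinate $i$ is exactly $x^{\oplus B_j}$; therefore $g\big((x')^{\oplus B_j}\big)=f\big(x^{\oplus B_j}\big)\ne f(x)=g(x')$ for each $j\in[b-1]$. Thus $B_1,\dots,B_{b-1}$ witness $\Bs(g,x')\ge b-1$, giving $\Bs\LS f_{J\setminus\{i\},z_i}\RS\ge\Bs(f)-1$, which is precisely the entropic condition of \Cref{def:entropic_measure}.

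There is no substantive obstacle here; the single $-1$ loss is exactly accounted for by the (at most one) block that contains the fixed coordinate $i$, which we simply discard, while every block avoiding $i$ behaves identically before and after the restriction. If no block contains $i$ the bound even improves to $\Bs(g)\ge\Bs(f)$. Since the argument exhibits the value $z_i$ explicitly and uses only the definition of block sensitivity, the proof is short; the only point requiring a little care is to phrase the identification of $(x')^{\oplus B_j}$ (extended by $z_i$) with $x^{\oplus B_j}$ cleanly, which is where disjointness of the blocks is used.
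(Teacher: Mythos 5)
Your proof is correct and matches the paper's argument essentially verbatim: both fix $z_i=x_i$ at a block-sensitivity-maximizing input $x$, observe that disjointness means at most one witnessing block contains $i$, and keep the remaining blocks as witnesses for the restricted function. No issues.
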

\begin{proof}
Let $k=\Bs(f)$, there is an $x\in\{0,1\}^J$ and disjoint $B_1,\cdots,B_k$ such that $f(x)\ne f\LS x^{\oplus B_j}\RS$ for each $j\in[k]$.
For $i\in J$, we choose $z_i=x_i$, and we have $f_{J\setminus\{i\},z_i}\LS x_{J\setminus\{i\}}\RS=f(x)$.

For $j\in[k]$ such that $i\notin B_j$, we have $B_j\subseteq J\setminus\{i\}$, and
$$f_{J\setminus\{i\},z_i}\LS x_{J\setminus\{i\}}\RS\ne f_{J\setminus\{i\},z_i}\LS x^{\oplus B_j}_{J\setminus\{i\}}\RS$$
since $f(x)\ne f\LS x^{\oplus B_j}\RS$.
Among $j\in[k]$, the number of $B_j$ that do not contain $i$ is at least $k-1$ since $B_1,\cdots,B_k$ are disjoint.
Therefore, $\Bs\LS f_{J\setminus\{i\},z_i}\RS\ge k-1$.
\end{proof}

\subsection{Classical communication complexity}
\label{sec:classical_communication}

We will employ the model of classical communication complexity introduced by Yao \cite{Yao79}.
Let $F:\X\times\Y\to\{\pm1\}$ be a function with its input distributed between two parties: Alice knows $x\in\X$ and Bob knows $y\in\Y$.
Communication complexity studies the minimum number of bits they need to exchange in order to compute the function $F$.

In the deterministic communication model, a protocol with communication cost $c$ will have Alice and Bob alternately sending $c$ classical bits by round.
Alice sends a bit $m_i\in\{0,1\}$ in the $i$-th round for odd $i$, and Bob sends a bit $m_i\in\{0,1\}$ in the $i$-th round for even $i$.
Therefore, $m_i$ is an arbitrary function of $(x,m_1\cdots m_{i-1})$ for odd $i$, and of $(y,m_1\cdots m_{i-1})$ for even $i$, where $(x,y)$ is the input.
The string $m_1\cdots m_i$ is called the \textit{transcript} of the first $i$ bits for each $i\in[c]$.
The protocol is said to compute $F$ if $(-1)^{m_c}=F(x,y)$ for every input $(x,y)\in\X\times\Y$.
We use $D(F)$ to denote the least communication cost of a deterministic protocol that computes $F$.

A typical characterization of a deterministic communication protocol is that it partitions the rectangle $\X\times\Y$ into disjoint sub-rectangles.
That is, $\X\times\Y=\biguplus_{m\in\{0,1\}^c}R_m=\biguplus_{m\in\{0,1\}^c}\X_m\times \Y_m$ such that for each $m\in\{0,1\}^c$, the transcript of the $c$ bits is $m$ on any input $(x,y)\in R_m$.


\subsection{Quantum communication complexity}
\label{sec:quantum_communication}

Since Yao introduced quantum communication complexity in 1993 \cite{Yao93}, there have been several equivalent ways to describe a $2$-party quantum communication protocol.
Our description follows Lee and Shraibman \cite{LS09}.
The state of a quantum communication protocol can be represented as a vector in a Hilbert space $H_A\otimes C\otimes H_B$.
Here, $H_A,H_B$ are Hilbert spaces of arbitrary finite dimension that represent \textit{workspaces} of Alice and Bob, respectively.
The Hilbert space $C$ is $2$-dimensional, and it stands for a $1$-qubit channel.
We assume that $H_A$ contains a register to hold the input of Alice, and similarly for $H_B$.

In the model without prior entanglement, the initial state of a quantum communication protocol on input $(x,y)$ is the vector $\ket{x,0}_{H_A}\ket0_C\ket{y,0}_{H_B}$.
With prior entanglement, the initial state is a vector of the form $\sum_w\alpha_w\ket{x,w}_{H_A}\ket0_C\ket{y,w}_{H_B}$, where the coefficients $\alpha_w$ are arbitrary complex numbers satisfying $\sum_w|\alpha_w|^2=1$.

We assume that Alice and Bob \textit{speak} alternately.
On Alice's turn, she applies an arbitrary unitary transformation of the form $U_{H_AC}\otimes I_B$, which acts as the identity on $H_B$.
Similarly, on Bob's turn, he applies a transformation of the form $I_A\otimes U_{H_BC}$.
At the end of a $t$-round protocol, we project the final state onto the subspace $H_A\otimes\ket1\otimes H_B$.
Denoting the length of this projection by $p$, the protocol outputs $-1$ with probability $p^2$, and output $1$ otherwise.
We say that the communication cost of the protocol is $t$. 

We assume that Alice and Bob exchange quantum messages through a quantum channel, with no intermediate measurements are allowed.
It is worth noting that when prior entanglement is allowed, the players can exchange classical bits to teleport quantum states, doubling the total communication cost.
This setting is referred to as the Cleve-Buhrman model \cite{CB97}.

\bigskip

For a $2$-argument function $F:\X\times\Y\to\{\pm1\}$ and $0\le\varepsilon<1$, a quantum communication protocol is said to compute $F$ with error $\varepsilon$ if it outputs $F(x,y)$ with probability at least $1-\varepsilon$ on any input $(x,y)\in\X\times\Y$.
Let $Q_\varepsilon(F)$ denote the least communication cost of a quantum protocol without prior entanglement that computes $F$ with error $\varepsilon$.
Define $Q^*_\varepsilon(F)$ analogously for protocols with prior entanglement.

\paragraph{Generalized discrepancy bound.}
The generalized discrepancy method is a useful technique for proving lower bounds on quantum communication complexity, regardless of prior entanglement.
This technique is originally discovered by Klauck \cite{Kla07} and Razborov \cite{Raz03}.
The generalized discrepancy bound can be established through multiple approaches, including methods based on factorization norms \cite{LS07} and XOR games \cite[Section 5.3]{LS09}.
The following is an adaptation by Sherstov \cite{She11}.

\begin{theorem}[{\cite[Theorem 2.8]{She11}}]
\label{thm:discrepancy_bound}
Let $U,V$ be finite sets and $F:U\times V\to\{\pm1\}$ be a given function.
Let $\Psi=(\Psi_{uv})_{u\in U,v\in V}$ be any real matrix.
For each $\varepsilon>0$,
$$Q^*_\varepsilon(F)=\Omega\LS\log\frac{\IP F\Psi-2\varepsilon\Norm\Psi_1}{3\Norm\Psi\sqrt{|U|\cdot|V|}}\RS.$$
\end{theorem}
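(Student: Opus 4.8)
The plan is to combine the factorization-norm lower bound for entanglement-assisted quantum communication with a short duality computation; the statement is essentially a repackaging of the Linial--Shraibman / Lee--Shraibman machinery, so I only sketch the argument. Fix an optimal entanglement-assisted protocol $\Protocol$ of cost $t=Q^*_\varepsilon(F)$ and let $M=(M_{uv})$ be its \emph{bias matrix}, where $M_{uv}$ is the expected value of the $\pm1$ output of $\Protocol$ on input $(u,v)$, so $|M_{uv}|\le 1$. Since $\Protocol$ errs with probability at most $\varepsilon$ on every input, $M_{uv}F_{uv}\ge 1-2\varepsilon$ for all $(u,v)$, and hence $\Norm{F-M}_\infty\le 2\varepsilon$.

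The first and main step is the structural claim
$$\gamma_2(M)\;\le\;2^{O(t)},\qquad\text{where }\;\gamma_2(M)=\min\{\,r(X)\,r(Y):M=XY^{\Trans}\,\}$$
and $r(\cdot)$ denotes the largest Euclidean norm of a row. This is the quantum analogue of the Yao--Kremer rectangle decomposition: one unrolls the $t$ rounds, inserts a resolution of identity on the one-qubit channel between consecutive local unitaries, and expands the projected final amplitude over the $2^{O(t)}$ ``communication histories''; each term factors as a function of Alice's data alone times a function of Bob's data alone. Passing from amplitudes to acceptance probabilities (a squaring, which squares the number of index terms but preserves the product structure) and absorbing the prior entanglement into the fixed local unitaries shows that $M$ factors through a space of dimension $2^{O(t)}$ with row norms $2^{O(t)}$. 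I expect this structural lemma to be the real obstacle; everything afterward is norm bookkeeping. Fortunately it is exactly the content of the factorization-norm quantum lower bound and may simply be quoted, and it is insensitive to prior entanglement, which is why the final bound applies to $Q^*_\varepsilon$.

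Next I would extract the inequality by duality. For any factorization $M=XY^{\Trans}$, writing $x_k,y_k$ for the $k$-th columns of $X,Y$ and using Cauchy--Schwarz together with $\Norm{X}_F\le\sqrt{|U|}\,r(X)$ and $\Norm{Y}_F\le\sqrt{|V|}\,r(Y)$,
$$\IP{M}{\Psi}\;=\;\sum_k x_k^{\Trans}\,\Psi\,y_k\;\le\;\Norm{\Psi}\sum_k\Norm{x_k}_2\Norm{y_k}_2\;\le\;\Norm{\Psi}\,\Norm{X}_F\,\Norm{Y}_F\;\le\;\sqrt{|U|\cdot|V|}\;\Norm{\Psi}\;r(X)\,r(Y).$$
Minimizing over factorizations and applying the structural step gives $\IP{M}{\Psi}\le\gamma_2(M)\,\Norm{\Psi}\sqrt{|U|\cdot|V|}\le 2^{O(t)}\,\Norm{\Psi}\sqrt{|U|\cdot|V|}$. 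On the other hand, by Hölder's inequality and $\Norm{F-M}_\infty\le 2\varepsilon$,
$$\IP{M}{\Psi}\;=\;\IP{F}{\Psi}-\IP{F-M}{\Psi}\;\ge\;\IP{F}{\Psi}-\Norm{F-M}_\infty\Norm{\Psi}_1\;\ge\;\IP{F}{\Psi}-2\varepsilon\Norm{\Psi}_1.$$

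Finally I would combine the two displays to get $\IP{F}{\Psi}-2\varepsilon\Norm{\Psi}_1\le 2^{O(t)}\,\Norm{\Psi}\sqrt{|U|\cdot|V|}$; when the left-hand side is positive, taking logarithms yields $t=Q^*_\varepsilon(F)=\Omega\big(\log\frac{\IP{F}{\Psi}-2\varepsilon\Norm{\Psi}_1}{\Norm{\Psi}\sqrt{|U|\cdot|V|}}\big)$, and the absolute constant $3$ in the denominator merely absorbs the constant hidden in $2^{O(t)}$ (and the factor-$2$ difference between counting rounds and counting qubits). When the left-hand side is non-positive the claimed bound is vacuous.
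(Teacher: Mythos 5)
The paper does not prove this statement at all: it is imported verbatim from Sherstov's pattern-matrix paper, with the surrounding text merely noting that the bound can be derived via factorization norms or XOR games. Your sketch is a correct derivation along the factorization-norm route (Linial--Shraibman/Lee--Shraibman): the duality bookkeeping $\IP M\Psi\le\gamma_2(M)\Norm\Psi\sqrt{|U|\cdot|V|}$ via Cauchy--Schwarz and $\Norm X_F\le\sqrt{|U|}\,r(X)$ is right, as is the H\"older step $\IP M\Psi\ge\IP F\Psi-2\varepsilon\Norm\Psi_1$, and quoting $\gamma_2(M)\le2^{O(t)}$ for entanglement-assisted protocols is legitimate since that is exactly the content of the factorization-norm lower bound. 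By contrast, Sherstov's own proof of the cited theorem runs through the Razborov/Klauck analysis: one bounds the \emph{trace norm} of the acceptance-probability matrix of a $t$-qubit protocol by $2^{O(t)}\sqrt{|U|\cdot|V|}$ and then pairs it against $\Psi$ with the spectral norm; the two routes are essentially dual formulations of the same fact, and yours is no less rigorous. One small imprecision: with prior entanglement the bias matrix need not factor through a space of dimension $2^{O(t)}$ (the shared state can have arbitrary dimension); what survives is the bound on the product of row norms, i.e.\ on $\gamma_2(M)$ itself, which is all your argument actually uses, so the conclusion is unaffected.
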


\subsection{Hybrid classical-quantum communication}
\label{sec:hybrid_communication}

In this work, we are concerned with hybrid classical-quantum communication protocols.
A hybrid classical-quantum communication protocol consists of two phases: a classical phase followed by a quantum phase.
The protocol begins with the classical phase, during which Alice and Bob alternately and \textit{deterministically} send classical bits and implement classical local computation. In the subsequent quantum phase, Alice and Bob exchange qubits and implement quantum computation.
We assume that the players are allowed to share prior entanglement.
It is worth noticing that, in the first phase, both players are implementing classical computation.
They do not touch shared entanglement and thus cannot exchange quantum messages via quantum teleportation.

Let $\Protocol(c,q)$ be a hybrid protocol with a $c$-bit classical communication in the first phase followed by a $q$-qubit quantum communication in the second phase.
Without loss of generality, we may assume that $q=o(c)$ since otherwise we may simulate the classical communication by the quantum communication, which only doubles the communication cost.

Suppose that the input to $\Protocol$ are drawn from $\X\times\Y$.
For a function $F:\X\times\Y\to\{\pm1\}$ and $0\le\varepsilon<1$, $\Protocol$ is said to compute $F$ with error $\varepsilon$ if it outputs $F(x,y)$ with probability at least $1-\varepsilon$ on any input $(x,y)\in\X\times\Y$.

After the classical communication phase of $c$ bits, as discussed in \Cref{sec:classical_communication}, $\X\times\Y$ is partitioned into disjoint rectangles in the form of $\X\times\Y=\biguplus_{m\in\{0,1\}^c}R_m$, where $R_m$ contains all inputs on which the transcript is $m$ for each $m\in\{0,1\}^c$.
Let $F_m$ denote the function $F$ with its input restricted to $R_m$.
The following proposition follows by the definition.

\begin{proposition}
\label{prop:qcc_on_rectangle}
If $\Protocol(c,q)$ computes $F$ with error $\varepsilon$, $Q^*_\varepsilon(F_m)\le q$ for each $m\in\{0,1\}^c$.
\end{proposition}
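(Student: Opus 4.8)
The plan is to show that each rectangle $R_m$ carries a stand-alone entanglement-assisted quantum protocol of cost $q$, simply by ``peeling off'' the classical phase, which on $R_m$ is informationless in the sense that its transcript is already predetermined for both players. Fix $m\in\{0,1\}^c$. If $R_m=\emptyset$ the claim is vacuous, so assume $R_m=\X_m\times\Y_m$ with $\X_m$ and $\Y_m$ nonempty.

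The key observation is that, because the classical phase is deterministic and partitions the input domain into rectangles (as recalled in \Cref{sec:classical_communication}), for every $x\in\X_m$ the transcript produced on input $(x,y)$ equals $m$ regardless of which $y\in\Y_m$ Bob holds. Consequently the entire local classical state of Alice at the end of the classical phase — the messages she sent, the messages she received, and any local computation she performed — is a fixed function of the pair $(x,m)$ alone. Symmetrically, Bob's end-of-phase local state is a function of $(y,m)$ alone. Hence, on inputs drawn from $R_m$, each player can reconstruct her respective post-classical state \emph{without any communication}.

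I would then define a quantum protocol $\Protocol'$ for $F_m$ as follows. On input $(x,y)\in R_m$, Alice locally computes, from $(x,m)$, the end-of-classical-phase state she would hold in $\Protocol$, and Bob does likewise from $(y,m)$; this step uses no communication. The two players then run the quantum phase of $\Protocol$ verbatim, using the same prior entanglement and exchanging $q$ qubits. By construction the execution of $\Protocol'$ on $(x,y)$ coincides with the execution of $\Protocol$ on $(x,y)$ from the start of its quantum phase onward, so $\Protocol'$ outputs $F(x,y)=F_m(x,y)$ with probability at least $1-\varepsilon$. Since $\Protocol'$ is an entanglement-assisted quantum protocol of cost $q$ computing $F_m$ with error $\varepsilon$, we conclude $Q^*_\varepsilon(F_m)\le q$.

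There is no genuine obstacle here; the only point requiring care is the middle step — justifying that the rectangle structure of a deterministic classical protocol lets each player reconstruct her post-phase view locally — and this is exactly the standard characterization recalled in \Cref{sec:classical_communication}. One should also note that, as stipulated in the model, the classical phase does not touch the shared entanglement (the players use it only in the quantum phase), so handing the same entangled state to $\Protocol'$ is legitimate.
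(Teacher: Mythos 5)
Your proof is correct and is exactly the standard argument the paper leaves implicit (the paper simply states that the proposition ``follows by the definition''): on a fixed rectangle $R_m$ the deterministic classical phase is informationless, each player reconstructs her post-classical state locally from her input and the hardwired transcript $m$, and the quantum phase alone is then an entanglement-assisted protocol of cost $q$ for $F_m$. No issues.
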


\subsection{Composed functions}

An important family of communication functions are composed functions of the form
$$f\circ G^n(x,y)=f\LS G^n(x,y)\RS=f\LS G(x_1,y_1),\cdots,G(x_n,y_n)\RS,$$
where $f:\{0,1\}^n\to\{\pm1\}$ is the outer function, and $G:\X\times\Y\to\{0,1\}$ is the inner function, also known as the gadget.
The size of the gadget is defined as $\log\min\{|\X|,|\Y|\}$.
Lifting theorems typically establish lower bounds of $D(f\circ G^n)$, $R_\varepsilon(f\circ G^n)$, or $Q_\varepsilon(f\circ G^n)$ in terms of some complexity measure of $f$, such as the deterministic/randomized query complexity and the approximate degree.

In this work, we choose the gadget $G:\Lambda\times\Lambda\to\{0,1\}$ as the inner product function over the Boolean domain, where $\Lambda=\{0,1\}^b$ and $b=20\log n$.
We may focus on a set $J\subseteq[n]$ of coordinates, then $x_J=(x_i)_{i\in J}$ and $G^J(x,y)=\LS G(x_i,y_i)\RS_{i\in J}$.

\begin{definition}[Block-wise density]
\label{def:blockwise_density}
Let $J$ be a set of coordinates and $0<\delta\le1$.
A random variable $X\in\Lambda^J$ is $\delta$-dense if for every $I\subseteq J$, it holds that
$$H_\infty(X_I)=\log\frac1{\max_{x\in\Lambda^I}\Dist{X_I}(x)}\ge\delta\cdot b\cdot|I|.$$
That is, $\Dist{X_I}(x)\le2^{-\delta b|I|}$ for all $I\subseteq J$ and $x\in\Lambda^I$.
\end{definition}

\begin{lemma}[{\cite[Lemma 13]{GLMWZ16}}]
\label{lem:pointwise_uniform}
Let $J$ be a set of coordinates and $X,Y\in\Lambda^J$ be independent random variables which are $\delta_X$-dense and $\delta_Y$-dense, respectively.
If $\delta_X+\delta_Y\ge1.4$, it holds that
$$\Dist{G^J(X,Y)}(z)\in\LS1\pm n^{-2}\RS2^{-|J|},\qquad\forall\ z\in\{0,1\}^J.$$
\end{lemma}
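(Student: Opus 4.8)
The plan is to expand the output distribution in the Fourier basis over $\{0,1\}^J$ and observe that every non-trivial Fourier coefficient is tiny because the inner-product gadget is an excellent two-source extractor on min-entropy sources. Identifying $\Lambda^S$ with $\mathbb F_2^{b|S|}$ for each $S\subseteq J$ so that $\sum_{i\in S}G(X_i,Y_i)=\langle X_S,Y_S\rangle$ over $\mathbb F_2^{b|S|}$, I would first write, for every $z\in\{0,1\}^J$,
$$\Dist{G^J(X,Y)}(z)=2^{-|J|}\sum_{S\subseteq J}(-1)^{\sum_{i\in S}z_i}\ \mathbb E\!\left[(-1)^{\langle X_S,Y_S\rangle}\right],$$
where the $S=\emptyset$ term contributes exactly $2^{-|J|}$. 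Thus it suffices to prove $\sum_{\emptyset\ne S\subseteq J}\big|\mathbb E[(-1)^{\langle X_S,Y_S\rangle}]\big|\le n^{-2}$.

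Next I would bound a single term. Fix $\emptyset\ne S\subseteq J$ and put $m=b|S|$. Writing $p=\Dist{X_S}$ and $q=\Dist{Y_S}$ as probability vectors in $\Real^{2^m}$, independence of $X$ and $Y$ gives $\mathbb E[(-1)^{\langle X_S,Y_S\rangle}]=p^{\Trans}Hq$, where $H\in\{\pm1\}^{2^m\times2^m}$ is the Hadamard matrix $H_{a,b}=(-1)^{\langle a,b\rangle}$. Since every singular value of $H$ equals $2^{m/2}$, Lindsey's mixing bound yields $|p^{\Trans}Hq|\le 2^{m/2}\Norm p_2\Norm q_2$. Applying \Cref{def:blockwise_density} with $I=S$ gives $\Norm p_2^2\le\max_a p(a)\le 2^{-\delta_X m}$ and likewise $\Norm q_2^2\le 2^{-\delta_Y m}$, so
$$\big|\mathbb E[(-1)^{\langle X_S,Y_S\rangle}]\big|\le 2^{m(1-\delta_X-\delta_Y)/2}\le 2^{-0.2\,b|S|},$$
using $\delta_X+\delta_Y\ge1.4$.

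Finally I would sum over $S$. With $b=20\log n$ we have $2^{-0.2b}=n^{-4}$, hence
$$\sum_{\emptyset\ne S\subseteq J}2^{-0.2b|S|}=\big(1+n^{-4}\big)^{|J|}-1\le\big(1+n^{-4}\big)^{n}-1\le e^{n^{-3}}-1\le n^{-2}$$
for all sufficiently large $n$, using $|J|\le n$. Substituting into the Fourier expansion gives $\big|\Dist{G^J(X,Y)}(z)-2^{-|J|}\big|\le n^{-2}\,2^{-|J|}$, i.e.\ $\Dist{G^J(X,Y)}(z)\in(1\pm n^{-2})2^{-|J|}$, as claimed.

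I do not expect a genuine obstacle here; the only points needing care are (i) the identification of $\Lambda^S$ with $\mathbb F_2^{b|S|}$, which is what makes the blockwise inner products collapse to a single $\mathbb F_2$-bilinear form and is where the linearity of the gadget is used, and (ii) invoking $\delta$-density only in the form $H_\infty(X_S)\ge\delta_X b|S|$ (and symmetrically for $Y$), which is precisely one of the inequalities guaranteed by the definition. Everything else is Lindsey's lemma together with the elementary binomial estimate.
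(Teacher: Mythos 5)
Your proof is correct and follows essentially the same route as the paper's: the paper also expands $\Dist{G^J(X,Y)}(z)$ over subsets of $J$, isolates the $\emptyset$ term, bounds each non-empty term by $\Norm{\Dist{X_I}}\cdot\Norm{H^{\otimes b|I|}}\cdot\Norm{\Dist{Y_I}}\le 2^{-(\delta_X+\delta_Y-1)b|I|/2}$, and sums the geometric tail to $n^{-2}$. Your phrasing in terms of the Fourier expansion of the pushforward and Lindsey's lemma is just a repackaging of the paper's indicator-product expansion and the spectral norm of $H^{\otimes b|I|}$, with the $\ell_2$-from-min-entropy step identical.
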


We give a proof of the above lemma in \Cref{sec:appendix} for completeness.
It is a key lemma in a line of work on classical query-to-communication lifting \cite{GLMWZ16,GPW20,CFKMP19,CFKMP21}.
And we will combine it with approaches from another line of work on lifting the approximate degree to quantum communication complexity \cite{She11,SZ09,LZ10}.
\section{Quantum Communication Lower Bounds on Dense Rectangles}

Here we adopt the generalized discrepancy method in~\Cref{thm:discrepancy_bound} to prove the hybrid quantum communication complexity. 
For a composed function $F=f\circ G^n$, a series of works \cite{She11,SZ09,LZ10} develop conditions on the gadget $G$ under which the approximate degree of any outer function $f$ can be lifted to the quantum communication complexity of $F$.
As stated in \Cref{thm:discrepancy_bound}, any witness matrix $\Psi$ of the same dimension as $F$ reveals a lower bound on the quantum communication complexity of $F$.
A natural choice is to set $\Psi=\psi\circ G^n$, where $\psi$ is a dual polynomial of $f$ as stated in \Cref{lem:dual_polynomial}.
As $\psi=\sum_{S\subseteq[n]}\widehat\psi_S\Chi S$, we have $\Psi=\sum_{S\subseteq[n]}\widehat\psi_SM_S$, where $M_S=\Chi S\circ G^n$. It has been proved that the matrices $M_S$ are \textit{strongly orthogonal} (that is, $M_S M_T^\Trans=M_S^\Trans M_T=0$ for $S\ne T$) to each other for nice gadgets, such that it enables us to prove tight lower bounds.

Let $n$ be sufficiently large.
Recall that our gadget $G:\Lambda\times\Lambda\to\{0,1\}$ is the inner product function over the Boolean domain, where $\Lambda=\{0,1\}^b$ and $b=20\log n$.

When the input of $f\circ G^n$ is restricted to a rectangle, the sub-matrices of matrices $M_S$ obtained by restricting $M_S$ to a rectangle are no longer strongly orthogonal to each other.
The following result is our main technical result, which says that the matrix $\Psi=\psi\circ G^n$ restricted to the same rectangle can still be used to show strong quantum communication lower bounds, as long as the size of the gadget $b=\Theta(\log n)$ is sufficiently large (while in \cite{She11,SZ09,LZ10}, the size of the gadget can be constant), and the rectangle satisfies certain density properties.

\begin{theorem}
\label{thm:lifting_on_rectangle}
Let $J\subseteq[n]$ be a set of coordinates and $\ell$ be an arbitrary finite number.
For sets $U,V\subseteq\Lambda^J\times\{0,1\}^\ell$, let the joint random variables $(X,A)$ be uniformly distributed over $U$ and $(Y,B)$ be uniformly distributed over $V$, where $X, Y\in\Lambda^J$.
For any function $f:\{0,1\}^J\to\{\pm1\}$, define $F:U\times V\to\{\pm1\}$ such that
$$F(u,v)=f(G^J(x,y)),\qquad\forall\ u=(x,a)\in U,v=(y,b)\in V.$$
If $X$ and $Y$ are both $0.99$-dense,
$$Q^*_{0.1}(F)=\Omega\LS\Deg_{1/3}(f)\cdot b\RS.$$
\end{theorem}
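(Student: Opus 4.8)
The plan is to instantiate the generalized discrepancy bound of \Cref{thm:discrepancy_bound} with the witness matrix $\Psi = \psi \circ G^J$ lifted from a dual polynomial $\psi$ for $f$, but with two modifications forced by the restriction to the dense rectangle $R = U \times V$: first, the matrix must be viewed as indexed by $U \times V$ rather than $\Lambda^J \times \Lambda^J$ (so the "dummy" coordinates $a, b$ are carried along), and second, we cannot rely on exact strong orthogonality of the pattern matrices $M_S = \Chi S \circ G^J$ but only on an approximate version valid on dense rectangles. Concretely, set $d = \Deg_{1/3}(f)$, invoke \Cref{lem:dual_polynomial} to get $\psi$ with $\Norm\psi_1 = 1$, $\IP f \psi \ge 1/3$, and $\widehat\psi_S = 0$ for $|S| < d$, and define $\Psi_{uv} = \psi(G^J(x,y))$ for $u = (x,a) \in U$, $v = (y,b) \in V$. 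Then $F$ and $\Psi$ have the same $U\times V$ shape, and \Cref{thm:discrepancy_bound} gives
\begin{equation*}
Q^*_{0.1}(F) = \Omega\!\LS \log \frac{\IP F\Psi - 0.2\,\Norm\Psi_1}{3\,\Norm\Psi\sqrt{|U|\cdot|V|}} \RS,
\end{equation*}
so it remains to lower-bound the numerator and upper-bound $\Norm\Psi$.

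For the numerator I would first handle $\IP F\Psi = \sum_{u\in U, v\in V} f(G^J(x,y))\,\psi(G^J(x,y))$. Grouping inputs by the value $z = G^J(x,y) \in \{0,1\}^J$, this equals $\sum_z (f\psi)(z)\cdot N(z)$ where $N(z) = \Abs{\{(u,v)\in U\times V : G^J(x,y) = z\}}$. Because $X \sim U$ and $Y \sim V$ are both $0.99$-dense (and the auxiliary coordinates only scale counts uniformly — here I need the observation that the uniform distribution of the \emph{first block} $X$ of $(X,A)\sim U$ is still $0.99$-dense, which is where the product/marginal structure enters), \Cref{lem:pointwise_uniform} with $\delta_X + \delta_Y = 1.98 \ge 1.4$ tells me $N(z) = (1 \pm n^{-2})\,2^{-|J|}\,|U|\,|V|$ for every $z$. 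Hence $\IP F\Psi = (1\pm n^{-2}) 2^{-|J|}|U||V| \cdot \IP f\psi \cdot (\text{correction})$; more carefully, $\IP F\Psi \ge 2^{-|J|}|U||V|\big(\IP f\psi - n^{-2}\Norm{f\psi}_1\big) \ge 2^{-|J|}|U||V|(1/3 - n^{-2})$ since $\Norm{f\psi}_1 = \Norm\psi_1 = 1$. For $\Norm\Psi_1 = \sum_{u,v}\Abs{\psi(G^J(x,y))} = \sum_z \Abs{\psi(z)} N(z) \le (1+n^{-2})2^{-|J|}|U||V|\Norm\psi_1$, so $0.2\Norm\Psi_1 \le 0.21 \cdot 2^{-|J|}|U||V|$, and the numerator is at least $(1/3 - 0.21 - o(1))2^{-|J|}|U||V| \ge 0.1 \cdot 2^{-|J|}|U||V|$.

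The crux — and the step I expect to be the main obstacle — is bounding the spectral norm $\Norm\Psi$. Expanding $\psi = \sum_{|S|\ge d}\widehat\psi_S \Chi S$ gives $\Psi = \sum_{|S|\ge d}\widehat\psi_S\, \widetilde M_S$ where $\widetilde M_S$ is the $U\times V$ submatrix of $\Chi S\circ G^J = \prod_{i\in S} \Chi{\{i\}}(G(x_i,y_i))$. In the unrestricted setting of \cite{She11,SZ09,LZ10} the matrices $M_S$ are mutually strongly orthogonal and each has a clean norm, giving $\Norm\Psi \le \max_S \Norm{\widehat\psi_S M_S}$ up to the orthogonality bookkeeping; here the restriction destroys exact orthogonality. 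I would instead bound $\Norm\Psi \le \sum_{k\ge d}\big\Norm{\sum_{|S|=k}\widehat\psi_S\widetilde M_S}\big$ and, within a fixed level $k$, exploit that $\IP{G}{}$ is a \emph{linear} gadget so that $\Chi S\circ G^J(x,y) = (-1)^{\langle x_S, y_S\rangle}$ is itself a character — this is exactly the $b = \Theta(\log n)$, inner-product-linearity assumption flagged in the overview. The key sub-lemma to prove is an \emph{approximate} strong orthogonality: for $S \ne T$ at the same level, $\widetilde M_S \widetilde M_T^\Trans$ has operator norm at most $n^{-c}\cdot(\text{trivial bound})$, because on a $0.99$-dense rectangle the characters $\Chi S$ and $\Chi T$ of $G^J$ are near-orthogonal when averaged over $U$ (resp.\ $V$); density controls exactly the bias of these character sums via min-entropy / XOR-lemma-type estimates, and the $2^{-0.99 b|I|}$ decay with $b = 20\log n$ makes the error terms polynomially small in $n$. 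Combining this with the Hölder/Cauchy–Schwarz argument of Sherstov then yields $\Norm\Psi = O\big(2^{-|J|}\sqrt{|U||V|}\cdot 2^{-\Omega(d b)}\big)$ — intuitively, each of the $d$ inner-product gadgets in a surviving character contributes a factor $\approx 2^{-b/2}$ to the spectral norm, since a full inner-product function on $b$ bits has discrepancy $2^{-\Omega(b)}$. Plugging the numerator bound $\Omega(2^{-|J|}|U||V|)$ and this $\Norm\Psi$ bound together with $\sqrt{|U|\cdot|V|}$ into \Cref{thm:discrepancy_bound} gives $Q^*_{0.1}(F) = \Omega\big(\log(2^{\Omega(db)})\big) = \Omega(d\cdot b) = \Omega(\Deg_{1/3}(f)\cdot b)$, as claimed. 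The delicate points to get right are: (i) the reduction from $0.99$-density of $(X,A)$ to usable density of $X$ alone; (ii) making the approximate-orthogonality error terms uniformly $o(1)$ across all $\binom{n}{k}$ pairs at each level and all $O(n)$ levels, which is where $b = 20\log n$ rather than a smaller constant multiple of $\log n$ is spent; and (iii) carrying the auxiliary coordinates $a,b$ of length $\ell$ through every counting step without their cardinalities leaking into the final bound.
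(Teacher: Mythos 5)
Your overall blueprint is right: pick $\Psi=\psi\circ G^J$ (restricted to $R=U\times V$) with $\psi$ a dual polynomial, use \Cref{lem:pointwise_uniform} to control $\Norm\Psi_1$ and $\IP F\Psi$, and reduce the theorem to a spectral-norm bound. Those first two steps match the paper essentially line for line (the concern you raise about the marginal $X$ of $(X,A)$ being dense is not an issue -- density of $X,Y$ is a hypothesis of the theorem, and $X,Y$ are independent, which is all \Cref{lem:pointwise_uniform} needs; the auxiliary $\ell$-bit coordinates disappear once you rewrite the sums in terms of $\Dist{G^J(X,Y)}$).

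The gap is in the spectral-norm step, which is the technical heart. You propose to write $\Norm\Psi\le\sum_{k\ge d}\Norm{\sum_{|S|=k}\widehat\psi_S\widetilde M_S}$ and, within each level $k$, bound the cross terms by a pairwise ``approximate strong orthogonality'' statement that $\Norm{\widetilde M_S\widetilde M_T^\Trans}$ is small for $S\neq T$. This is both the wrong quantity and the wrong decomposition. With up to $\binom{n}{k}$ sets per level, a triangle inequality over pairs is far too lossy: even with $|\widehat\psi_S|\le 2^{-m}$, the $\binom{n}{k}^2$ off-diagonal operator-norm contributions do not cancel, and decomposing by degree level and then summing over $k\ge d$ only compounds the loss. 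The paper avoids this entirely by not taking any triangle inequality at the matrix level: it bounds the Schatten-$4$ norm, $\Norm\Psi^4\le\Tr\bigl(\Psi\Psi^\Trans\bigr)^2$, expands the trace as a sum over $4$-tuples $\bigl(S_1,T_1,S_2,T_2\bigr)$ of \emph{scalar} terms $\widehat\psi_{S_1}\widehat\psi_{T_1}\widehat\psi_{S_2}\widehat\psi_{T_2}\Tr\bigl(M_{S_1}M_{T_1}^\Trans M_{S_2}M_{T_2}^\Trans\bigr)$, and proves (\Cref{prop:bound_each_trace}) that each normalized trace satisfies $\frac1{|R|^2}\bigl|\Tr\bigl(M_{S_1}M_{T_1}^\Trans M_{S_2}M_{T_2}^\Trans\bigr)\bigr|\le 2^{-0.11b\left(|S_1|+|T_1|+|S_2|+|T_2|\right)}$. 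That bound is where density and the linearity of the inner-product gadget are really used: the trace is rewritten as a bilinear form in the product distributions $\Dist{X_{S_1\cup T_2}}\otimes\Dist{X_{S_2\cup T_1}}$ and $\Dist{Y_{S_1\cup T_1}}\otimes\Dist{Y_{S_2\cup T_2}}$ against a $\pm1$ character matrix $\Matrix$, Cauchy--Schwarz converts one side to an $\ell_2$-norm which is controlled by $0.99$-density, and the other side reduces to counting ``valid'' shift vectors $w_1w_2$ for which $\sum_{y_1y_2}\Matrix_{x_1x_2,y_1y_2}\Matrix_{x_1'x_2',y_1y_2}$ does not vanish (\Cref{prop:count_valid}). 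A pairwise operator-norm statement about $\widetilde M_S\widetilde M_T^\Trans$ does not plug into this machinery: the Schatten-$4$ argument needs the scalar $4$-fold trace, and it needs the full $4$-tuple structure (via the disjoint decomposition $S_j=A_j\uplus B_j$, $T_j=A_j\uplus C_j$) to get the exponent $|S_1|+|T_1|+|S_2|+|T_2|$ rather than something that degrades with the number of sets. So while your intuition about the source of the $2^{-\Omega(b)}$ decay is correct, the route you sketch would stall at exactly the point you flag as the crux; the trace-moment argument is not an optional refinement of your plan but a replacement for it.
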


\begin{proof}

Let $R$ be the rectangle $U\times V$, $m=|J|$ and $d=\Deg_{1/3}(f)$.
Let $\psi$ be the dual polynomial which certifies that $\Deg_{1/3}(f)\ge d$ as in \Cref{lem:dual_polynomial}.
We have
\begin{itemize}
\item $\Norm\psi_1=1$ and $\IP f\psi\ge1/3$.
\item $\widehat\psi_S=0$ for any $S\subseteq[n]$ of size smaller than $d$.
\end{itemize}
The hardness of computing $F$ with error $0.1$ can be proved by choosing the matrix $\Psi:U\times V\to\Real$ that satisfies $\Psi(u,v)=\frac{2^m}{|R|}\psi(G^J(x,y))$ for every $u=(x,a)\in U,v=(y,b)\in V$, and applying \Cref{thm:discrepancy_bound}.
The task is to bound $\Norm\Psi_1$, $\IP F\Psi$, and $\Norm\Psi$.

\begin{lemma}
\label{lem:bound_one_norm_and_ip}
$\Norm\Psi_1\le1.01$ and $\IP F\Psi\ge0.31$.
\end{lemma}

\begin{lemma}
\label{lem:bound_spectral_norm}
$\Norm\Psi\sqrt{|R|}\le n^{-1.1d}$.
\end{lemma}

Combining \Cref{lem:bound_one_norm_and_ip}, \Cref{lem:bound_spectral_norm} with \Cref{thm:discrepancy_bound}, we have
\begin{equation*}
Q^*_{0.1}(F)=\Omega\LS\log\frac{\IP F\Psi-0.2\Norm\Psi_1}{3\Norm\Psi\sqrt{|R|}}\RS=\Omega(d\cdot b). 
\qedhere
\end{equation*}
\end{proof}

The rest of this section proves \Cref{lem:bound_one_norm_and_ip} and \Cref{lem:bound_spectral_norm}.

\begin{proof}[Proof of \Cref{lem:bound_one_norm_and_ip}]
By the choice of $\Psi$,
\begin{align*}
\Norm\Psi_1 & =\sum_{\substack{(x,a)\in U \\ (y,b)\in V}}\frac{2^m}{|R|}\Abs{\psi(G^J(x,y))} \\
& =\sum_{z\in\{0,1\}^J}\frac{2^m}{|R|}|\psi(z)|\cdot\Abs{\LL((x,a),(y,b))\in R:G^J(x,y)=z\RL}.
\end{align*}
Note that $\Dist{G^J(X,Y)}(z)=\frac{\Abs{\LL((x,a),(y,b))\in R:G^J(x,y)=z\RL}}{|R|}$ as the joint random variable $((X,A),(Y,B))$ is uniformly distributed over $R$.
Then
\begin{equation}
\nonumber\Norm\Psi_1=\sum_{z\in\{0,1\}^J}2^m|\psi(z)|\cdot\Dist{G^J(X,Y)}(z)\le1.01\Norm\psi_1=1.01.
\end{equation}
The inequality is because $X,Y$ are $0.99$-dense.
By \Cref{lem:pointwise_uniform}, 
$$\Dist{G^J(X,Y)}(z)\le\LS1+n^{-2}\RS2^{-m}\le1.01\cdot2^{-m},$$
for any $z\in\{0,1\}^J$.
The inner product term can be bounded in a similar way:
\begin{align}
\nonumber\IP F\Psi & =\sum_{\substack{(x,a)\in U \\ (y,b)\in V}}f(G^J(x,y))\cdot\frac{2^m}{|R|}\psi(G^J(x,y)) \\
\nonumber & =\sum_{z\in\{0,1\}^J}2^mf(z)\psi(z)\cdot\Dist{G^J(X,Y)}(z) \\
\nonumber & \ge0.99\IP f\psi-0.02\Norm\psi_1\ge0.31.
\qedhere
\end{align}
\end{proof}

\begin{proof}[Proof of \Cref{lem:bound_spectral_norm}]
To bound the spectral norm, we have $\Norm\Psi\le\LS\Tr\LS\Psi\Psi^\Trans\RS^p\RS^\frac1{2p}$ for any $p\ge1$.
We choose $p=2$.
As $\widehat\psi_S=0$ for every $S\subseteq J$ such that $|S|<d$, $\psi=\sum_{S\subseteq J:|S|\ge d}\widehat\psi_S\Chi S$.
Then $\Psi=\frac{2^m}{|R|}\sum_{S\subseteq J:|S|\ge d}\widehat\psi_S M_S$, where $M_S(u,v)=\Chi S(G^J(x,y))$ for every $u=(x,a)\in U,v=(y,b)\in V$.
\begin{align}
\nonumber\Tr\LS\Psi\Psi^\Trans\RS^2 & =\frac{2^{4m}}{|R|^4}\sum_{\substack{S_1,T_1,S_2,T_2\subseteq J \\ |S_1|,|T_1|,|S_2|,|T_2|\ge d}}\widehat\psi_{S_1}\widehat\psi_{T_1}\widehat\psi_{S_2}\widehat\psi_{T_2}\Tr\LS M_{S_1}M_{T_1}^\Trans M_{S_2}M_{T_2}^\Trans\RS \\
& \le\frac1{|R|^4}\sum_{S_1,T_1,S_2,T_2}\Abs{\Tr\LS M_{S_1}M_{T_1}^\Trans M_{S_2}M_{T_2}^\Trans\RS}.
\label{equ:bound_trace}
\end{align}
The inequality is because that $|\widehat\psi_S|\le2^{-m}\Norm\psi_1\le2^{-m}$ for any $S\subseteq J$.
The following states that each term of the summation in \Cref{equ:bound_trace} can be bounded in terms of $|S_1|+|T_1|+|S_2|+|T_2|$.

\begin{proposition}
\label{prop:bound_each_trace}
For any $S_1,T_1,S_2,T_2\subseteq J$,
$$\frac1{|R|^2}\Abs{\Tr\LS M_{S_1}M_{T_1}^\Trans M_{S_2}M_{T_2}^\Trans\RS}\le2^{-0.11b(|S_1|+|T_1|+|S_2|+|T_2|)}.$$
\end{proposition}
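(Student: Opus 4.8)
The plan is to rewrite the trace as an expectation over four independent inputs, apply Cauchy--Schwarz twice --- peeling off the $y$-variables first and the $x$-variables second --- and invoke the $0.99$-density of both $X$ and $Y$ only at the very last step. Concretely, write $u_k=(x_k,a_k)$ and $v_k=(y_k,b_k)$, and recall that $M_S(u,v)=\Chi S\LS G^J(x,y)\RS=(-1)^{\sum_{i\in S}\langle x_i,y_i\rangle}$ with all inner products taken modulo $2$. Expanding the product of the four matrices and dividing by $|R|^2=|U|^2|V|^2$ gives
$$\frac1{|R|^2}\Tr\LS M_{S_1}M_{T_1}^\Trans M_{S_2}M_{T_2}^\Trans\RS=\mathbb E\LM(-1)^{E}\RM,$$
where the expectation is over $x_1,x_2$ drawn i.i.d.\ from the marginal $\Dist X$ and $y_1,y_2$ drawn i.i.d.\ from the marginal $\Dist Y$, and $E=\sum_{i\in S_1}\langle x_{1,i},y_{1,i}\rangle+\sum_{i\in T_1}\langle x_{2,i},y_{1,i}\rangle+\sum_{i\in S_2}\langle x_{2,i},y_{2,i}\rangle+\sum_{i\in T_2}\langle x_{1,i},y_{2,i}\rangle$.

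Since the first two summands of $E$ involve $y$ only through $y_1$ and the last two only through $y_2$, the $(y_1,y_2)$-expectation factors: $\mathbb E[(-1)^E]=\mathbb E_{x_1,x_2}[\Phi_1\Phi_2]$, where $\Phi_1(x_1,x_2)=\mathbb E_{y_1}\bigl[(-1)^{\sum_{i\in S_1}\langle x_{1,i},y_{1,i}\rangle+\sum_{i\in T_1}\langle x_{2,i},y_{1,i}\rangle}\bigr]$ and $\Phi_2$ is obtained from $\Phi_1$ by the substitution $(S_1,T_1,x_1,x_2,y_1)\mapsto(S_2,T_2,x_2,x_1,y_2)$. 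A Cauchy--Schwarz over $(x_1,x_2)$ reduces the goal to bounding $\mathbb E_{x_1,x_2}[\Phi_1^2]$, the bound for $\mathbb E_{x_1,x_2}[\Phi_2^2]$ being symmetric. To this end I would introduce an independent copy $y_1'$ of $y_1$, so that $\Phi_1^2=\mathbb E_{y_1,y_1'}\bigl[(-1)^{\sum_{i\in S_1}\langle x_{1,i},(y_1+y_1')_i\rangle+\sum_{i\in T_1}\langle x_{2,i},(y_1+y_1')_i\rangle}\bigr]$, after which the $(x_1,x_2)$-expectation factors because $x_1$ and $x_2$ are independent, yielding $\mathbb E_{x_1,x_2}[\Phi_1^2]=\mathbb E_{y_1,y_1'}\bigl[\theta_{S_1}\bigl((y_1+y_1')_{S_1}\bigr)\,\theta_{T_1}\bigl((y_1+y_1')_{T_1}\bigr)\bigr]$, where $\theta_I(w):=\mathbb E_{x\sim\Dist X}\bigl[(-1)^{\sum_{i\in I}\langle x_i,w_i\rangle}\bigr]$. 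A second Cauchy--Schwarz, this time over $(y_1,y_1')$, reduces this to bounding $\mathbb E_{z,z'}[\theta_{S_1}(z+z')^2]$ and its $T_1$-analogue, where $z$ and $z'$ are i.i.d.\ copies of $\Dist{Y_{S_1}}$ (resp.\ $\Dist{Y_{T_1}}$).

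The quantitative heart is this last estimate: $\mathbb E_{z,z'}[\theta_{S_1}(z+z')^2]=\sum_{w\in\Lambda^{S_1}}\theta_{S_1}(w)^2\Pr[z+z'=w]$. The $0.99$-density of $Y$ gives $\Dist{Y_{S_1}}(z)\le 2^{-0.99b|S_1|}$ for every $z$, hence $\Pr[z+z'=w]=\sum_z\Dist{Y_{S_1}}(z)\Dist{Y_{S_1}}(z+w)\le 2^{-0.99b|S_1|}$ for every $w$; and Parseval's identity together with the $0.99$-density of $X$ gives $\sum_w\theta_{S_1}(w)^2=2^{b|S_1|}\sum_x\Dist{X_{S_1}}(x)^2\le 2^{b|S_1|}\cdot 2^{-0.99b|S_1|}=2^{0.01b|S_1|}$. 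Multiplying these two bounds gives $\mathbb E_{z,z'}[\theta_{S_1}(z+z')^2]\le 2^{-0.98b|S_1|}$, and chaining the two Cauchy--Schwarz inequalities back up gives
$$\frac1{|R|^2}\Abs{\Tr\LS M_{S_1}M_{T_1}^\Trans M_{S_2}M_{T_2}^\Trans\RS}\le 2^{-0.245\,b(|S_1|+|T_1|+|S_2|+|T_2|)},$$
which is stronger than the claimed bound $2^{-0.11\,b(|S_1|+|T_1|+|S_2|+|T_2|)}$.

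The step I expect to be the crux --- conceptually rather than computationally --- is recognizing that $0.99$-density alone does \emph{not} bound the individual quantities $\theta_I(w)$ or the Fourier coefficients of $\Dist Y$ (a single coordinate of $Y$ could be constant while $Y$ remains dense), so one cannot stop after the first Cauchy--Schwarz; one is forced to average over the $x$-variables as well and then to exploit the density of \emph{both} marginals simultaneously through the convolution-plus-Parseval bound above. It is also worth checking the degenerate cases in which one or more of $S_1,T_1,S_2,T_2$ is empty, where every inequality above holds trivially.
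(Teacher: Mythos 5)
Your proof is correct and in fact establishes the stronger bound $2^{-0.245\,b(|S_1|+|T_1|+|S_2|+|T_2|)}$, but it takes a genuinely different route from the paper's. Both proofs start from the same expansion of $\frac1{|R|^2}\Tr\LS M_{S_1}M_{T_1}^\Trans M_{S_2}M_{T_2}^\Trans\RS$ as an average of $\Chi{S_1}(x_1,y_1)\Chi{T_1}(x_2,y_1)\Chi{S_2}(x_2,y_2)\Chi{T_2}(x_1,y_2)$ over independent pairs drawn from the two marginals, and both ultimately use the $0.99$-density of $X$ and $Y$ through $\ell_2$-type estimates. The paper applies a \emph{single} Cauchy--Schwarz, peeling off the $Y$-variables as the vector $\Dist{Y_{S_1\cup T_1}}\otimes\Dist{Y_{S_2\cup T_2}}$ and bounding the remaining factor $\Norm{\LS\Dist{X_{S_1\cup T_2}}\otimes\Dist{X_{S_2\cup T_1}}\RS^\Trans\Matrix}$ by expanding its square and explicitly counting the ``valid'' difference vectors $w_1w_2$ for which the inner sum over $y_1y_2$ is nonzero (\Cref{prop:count_valid}); this is where the combinatorics of the four intersection/difference sets $A_j,B_j,C_j$ enters. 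Your argument instead iterates Cauchy--Schwarz twice --- once over $(x_1,x_2)$ to decouple the two $y$-groups into $\mathbb E[\Phi_1^2]\cdot\mathbb E[\Phi_2^2]$, and again over $(y_1,y_1')$ --- which reduces everything to the single clean estimate $\mathbb E_{z,z'}[\theta_I(z+z')^2]\le 2^{-0.98b|I|}$, obtained by combining the $\ell_\infty$ bound on the convolution $\Dist{Y_I}*\Dist{Y_I}$ with Parseval on the $\ell_2$ norm of $\Dist{X_I}$. This avoids the counting lemma entirely, handles the overlap structure of $S_j,T_j$ automatically rather than case-by-case, and makes the role of \emph{both} marginals' density transparent (as you observe, neither marginal alone controls the individual Fourier coefficients $\theta_I(w)$, which is exactly why the second Cauchy--Schwarz and the convolution step are needed). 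The trade-off is that the paper's approach is more self-contained at the level of elementary counting, whereas yours leans on the Fourier-analytic toolkit; since the paper already uses Parseval-style arguments in \Cref{lem:pointwise_uniform}, your version is arguably more in the spirit of the surrounding lemmas and gives a sharper constant.
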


We prove \Cref{prop:bound_each_trace} after showing that it implies \Cref{lem:bound_spectral_norm}.
According to \Cref{equ:bound_trace},
\begin{align}
\nonumber\Norm\Psi\sqrt{|R|}\le\LS\Tr\LS\Psi\Psi^\Trans\RS^2\RS^{1/4}\sqrt{|R|} & \le\LS\sum_{\substack{S_1,T_1,S_2,T_2\subseteq J \\ |S_1|,|T_1|,|S_2|,|T_2|\ge d}}\frac1{|R|^2}\Abs{\Tr\LS M_{S_1}M_{T_1}^\Trans M_{S_2}M_{T_2}^\Trans\RS}\RS^{1/4} \\
\nonumber & \le\LS\sum_{\substack{S_1,T_1,S_2,T_2\subseteq J \\ |S_1|,|T_1|,|S_2|,|T_2|\ge d}}2^{-0.11b(|S_1|+|T_1|+|S_2|+|T_2|)}\RS^{1/4} \\[1ex]
\nonumber & =\sum_{S\subseteq J:|S|\ge d}2^{-0.11b|S|}\le n^{-1.1d},
\qedhere
\end{align}
\end{proof}

\begin{proof}[Proof of \Cref{prop:bound_each_trace}]
The trace term $\Tr\LS M_{S_1}M_{T_1}^\Trans M_{S_2}M_{T_2}^\Trans\RS$ can be expanded as
$$\sum_{\substack{(x_1,a_1),(x_2,a_2)\in U \\ (y_1,b_1),(y_2,b_2)\in V}}\Chi{S_1}(G^J(x_1,y_1))\Chi{T_1}(G^J(x_2,y_1))\Chi{S_2}(G^J(x_2,y_2))\Chi{T_2}(G^J(x_1,y_2)).$$
For any $I,K\subseteq J$, $x\in\Lambda^I,y\in\Lambda^K$, and $S\subseteq I\cap K$, define $\Chi S(x,y)=(-1)^{\IP{x_S}{y_S}}$.
By replacing the summation over $U,V$ with a summation over the entire $\Lambda^J$ and taking the corresponding probabilities, $\frac1{|R|^2}\Tr\LS M_{S_1}M_{T_1}^\Trans M_{S_2}M_{T_2}^\Trans\RS$ is equal to
\begin{align*}
& \sum_{x_1,x_2,y_1,y_2\in\Lambda^J}\Dist X(x_1)\Dist X(x_2)\Dist Y(y_1)\Dist Y(y_2)\cdot\Chi{S_1}(x_1,y_1)\Chi{T_1}(x_2,y_1)\Chi{S_2}(x_2,y_2)\Chi{T_2}(x_1,y_2) \\
= & \sum_{\substack{x_1\in\Lambda^{S_1\cup T_2},x_2\in\Lambda^{S_2\cup T_1} \\ y_1\in\Lambda^{S_1\cup T_1},y_2\in\Lambda^{S_2\cup T_2}}}\Dist{X_{S_1\cup T_2}}(x_1)\Dist{X_{S_2\cup T_1}}(x_2)\Dist{Y_{S_1\cup T_1}}(y_1)\Dist{Y_{S_2\cup T_2}}(y_2)\cdot\Matrix_{x_1x_2,y_1y_2} \\
= & \sum_{\substack{x_1x_2\in\Lambda^{S_1\cup T_2}\times\Lambda^{S_2\cup T_1} \\ y_1y_2\in\Lambda^{S_1\cup T_1}\times\Lambda^{S_2\cup T_2}}}\LS\Dist{X_{S_1\cup T_2}}\otimes\Dist{X_{S_2\cup T_1}}\RS(x_1x_2)\LS\Dist{Y_{S_1\cup T_1}}\otimes\Dist{Y_{S_2\cup T_2}}\RS(y_1y_2)\Matrix_{x_1x_2,y_1y_2},
\end{align*}
where we define the matrix $\Matrix\in\{\pm1\}^{\LS\Lambda^{S_1\cup T_2}\times\Lambda^{S_2\cup T_1}\RS\times\LS\Lambda^{S_1\cup T_1}\times\Lambda^{S_2\cup T_2}\RS}$ by
$$\Matrix_{x_1x_2,y_1y_2}=\Chi{S_1}(x_1,y_1)\Chi{T_1}(x_2,y_1)\Chi{S_2}(x_2,y_2)\Chi{T_2}(x_1,y_2).$$
Therefore,
\begin{align*}
\frac1{|R|^2}\Abs{\Tr\LS M_{S_1}M_{T_1}^\Trans M_{S_2}M_{T_2}^\Trans\RS} & =\Abs{\LS\Dist{X_{S_1\cup T_2}}\otimes\Dist{X_{S_2\cup T_1}}\RS^\Trans\Matrix\LS\Dist{Y_{S_1\cup T_1}}\otimes\Dist{Y_{S_2\cup T_2}}\RS} \\
& \le\Norm{\LS\Dist{X_{S_1\cup T_2}}\otimes\Dist{X_{S_2\cup T_1}}\RS^\Trans\Matrix}\cdot\Norm{\Dist{Y_{S_1\cup T_1}}\otimes\Dist{Y_{S_2\cup T_2}}},
\end{align*}
For the second term, $\Norm{\Dist{Y_{S_1\cup T_1}}\otimes\Dist{Y_{S_2\cup T_2}}}\le2^{-0.99b(|S_1\cup T_1|+|S_2\cup T_2|)/2}$ as $Y$ is $0.99$-dense.
For the first term, $\Norm{\LS\Dist{X_{S_1\cup T_2}}\otimes\Dist{X_{S_2\cup T_1}}\RS^\Trans\Matrix}^2$ equals to
\begin{align}
\nonumber & \sum_{y_1y_2\in\Lambda^{S_1\cup T_1}\times\Lambda^{S_2\cup T_2}}\LS\sum_{x_1x_2\in\Lambda^{S_1\cup T_2}\times\Lambda^{S_2\cup T_1}}\LS\Dist{X_{S_1\cup T_2}}\otimes\Dist{X_{S_2\cup T_1}}\RS(x_1x_2)\Matrix_{x_1x_2,y_1y_2}\RS^2 \\
\nonumber= & \sum_{y_1y_2}\sum_{x_1x_2,x'_1x'_2}\LS\Dist{X_{S_1\cup T_2}}\otimes\Dist{X_{S_2\cup T_1}}\RS(x_1x_2)\LS\Dist{X_{S_1\cup T_2}}\otimes\Dist{X_{S_2\cup T_1}}\RS(x'_1x'_2)\Matrix_{x_1x_2,y_1y_2}\Matrix_{x'_1x'_2,y_1y_2} \\
\nonumber= & \sum_{x_1x_2,x'_1x'_2}\LS\Dist{X_{S_1\cup T_2}}\otimes\Dist{X_{S_2\cup T_1}}\RS(x_1x_2)\LS\Dist{X_{S_1\cup T_2}}\otimes\Dist{X_{S_2\cup T_1}}\RS(x'_1x'_2)\sum_{y_1y_2}\Matrix_{x_1x_2,y_1y_2}\Matrix_{x'_1x'_2,y_1y_2}. \\
\le & \sum_{x_1x_2,x'_1x'_2}2^{-0.99b(|S_1\cup T_2|+|S_2\cup T_1|)\cdot2}\Abs{\sum_{y_1y_2}\Matrix_{x_1x_2,y_1y_2}\Matrix_{x'_1x'_2,y_1y_2}}.
\label{equ:bound_vector_norm}
\end{align}
The inequality is because $X$ is $0.99$-dense.
For any $x_1x_2,x'_1x'_2\in\Lambda^{S_1\cup T_2}\times\Lambda^{S_2\cup T_1}$, by setting $w_1=x_1\oplus x'_1\in\Lambda^{S_1\cup T_2}$ and $w_2=x_2\oplus x'_2\in\Lambda^{S_2\cup T_1}$,
\begin{align*}
& \sum_{y_1y_2\in\Lambda^{S_1\cup T_1}\times\Lambda^{S_2\cup T_2}}\Matrix_{x_1x_2,y_1y_2}\Matrix_{x'_1x'_2,y_1y_2} \\
= & \sum_{y_1y_2}\Chi{S_1}(w_1,y_1)\Chi{T_1}(w_2,y_1)\Chi{S_2}(w_2,y_2)\Chi{T_2}(w_1,y_2) \\
= & \sum_{y_1y_2}\Chi{A_1}(w_1\oplus w_2,y_1)\Chi{B_1}(w_1,y_1)\Chi{C_1}(w_2,y_1)\Chi{A_2}(w_1\oplus w_2,y_2)\Chi{B_2}(w_2,y_2)\Chi{C_2}(w_1,y_2),
\end{align*}
where $A_1=S_1\cap T_1,B_1=S_1\setminus T_1,C_1=T_1\setminus S_1$ are disjoint, and $A_2=S_2\cap T_2,B_2=S_2\setminus T_2,C_2=T_2\setminus S_2$ are disjoint.
Let $\Gamma_A(w)$ denote $\sum_{y\in\Lambda^A}\Chi A(w,y)$.
We have
$$\Gamma_A(w)=\left\{
\begin{aligned}
& 2^{b|A|}, && \text{if }w_A=0. \\
& 0, && \text{otherwise}.
\end{aligned}
\right.$$
And
$$\sum_{y_1y_2}\Matrix_{x_1x_2,y_1y_2}\Matrix_{x'_1x'_2,y_1y_2}=\Gamma_{A_1}(w_1\oplus w_2)\Gamma_{B_1}(w_1)\Gamma_{C_1}(w_2)\Gamma_{A_2}(w_1\oplus w_2)\Gamma_{B_2}(w_2)\Gamma_{C_2}(w_1).$$
We say that $w_1w_2$ is valid if
$$(w_1)_{A_1\cup A_2}=(w_2)_{A_1\cup A_2},(w_1)_{B_1\cup C_2}=0,(w_2)_{C_1\cup B_2}=0.$$
Then,
\begin{equation}
\label{equ:bound_valid_term}
\sum_{y_1y_2}\Matrix_{x_1x_2,y_1y_2}\Matrix_{x'_1x'_2,y_1y_2}=\left\{
\begin{aligned}
& 2^{b(|S_1\cup T_1|+|S_2\cup T_2|)}, && \text{if }w_1w_2\text{ is valid}. \\
& 0, && \text{otherwise}.
\end{aligned}
\right.
\end{equation}

\begin{proposition}
\label{prop:count_valid}
The number of valid $w_1w_2\in\Lambda^{S_1\cup T_2}\times\Lambda^{S_2\cup T_1}$ is at most
$$2^{b(|S_1\cup T_2|+|S_2\cup T_1|)}\cdot2^{-b(|S_1|+|T_1|+|S_2|+|T_2|)/4}.$$
\end{proposition}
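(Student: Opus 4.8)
The plan is to reduce the count of valid pairs to a count of admissible $w_1$'s, by observing that $w_2$ is entirely pinned down by $w_1$ once validity is imposed. Recall $A_1 = S_1\cap T_1$, $B_1 = S_1\setminus T_1$, $C_1 = T_1\setminus S_1$ (so $S_1 = A_1\uplus B_1$ and $T_1 = A_1\uplus C_1$), and analogously $A_2 = S_2\cap T_2$, $B_2 = S_2\setminus T_2$, $C_2 = T_2\setminus S_2$ (so $S_2 = A_2\uplus B_2$ and $T_2 = A_2\uplus C_2$). The key structural fact is the identity $S_2\cup T_1 = (A_2\uplus B_2)\cup(A_1\uplus C_1) = (A_1\cup A_2)\cup(B_2\cup C_1)$, so every coordinate of $w_2\in\Lambda^{S_2\cup T_1}$ lies in $A_1\cup A_2$ or in $B_2\cup C_1$. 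For a valid pair $w_1 w_2$, the first validity condition forces $(w_2)_i=(w_1)_i$ for $i\in A_1\cup A_2$ (and $A_1\cup A_2\subseteq S_1\cup T_2$, so $(w_1)_i$ is defined there), while the third forces $(w_2)_i=0$ for $i\in B_2\cup C_1$. Hence $w_2$ is a function of $w_1$.

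It follows that the number of valid pairs is at most the number of $w_1\in\Lambda^{S_1\cup T_2}$ obeying the second validity condition $(w_1)_{B_1\cup C_2}=0$ — a given such $w_1$ may in fact admit no valid $w_2$ (when $w_1$ is nonzero somewhere on $(A_1\cup A_2)\cap(B_2\cup C_1)$), but that only strengthens the bound. Since $B_1\cup C_2\subseteq S_1\cup T_2$, this number is exactly $2^{b(|S_1\cup T_2|-|B_1\cup C_2|)}$. Thus it suffices to prove the purely combinatorial inequality
$$|S_2\cup T_1| + |B_1\cup C_2| \;\ge\; \tfrac14\big(|S_1|+|T_1|+|S_2|+|T_2|\big),$$
because this is equivalent to $|S_1\cup T_2|-|B_1\cup C_2| \le |S_1\cup T_2|+|S_2\cup T_1| - \tfrac14(|S_1|+|T_1|+|S_2|+|T_2|)$, which after applying $2^{b\,\cdot}$ yields exactly the asserted count.

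For the combinatorial inequality I would bound $2|S_2\cup T_1|\ge |S_2|+|T_1|$ and $2|B_1\cup C_2|\ge |B_1|+|C_2|$, write $|B_1| = |S_1|-|S_1\cap T_1|$ and $|C_2| = |T_2|-|S_2\cap T_2|$, and then use $|S_1\cap T_1|\le\tfrac12(|S_1|+|T_1|)$ and $|S_2\cap T_2|\le\tfrac12(|S_2|+|T_2|)$; adding everything yields $|S_2\cup T_1|+|B_1\cup C_2| \ge \tfrac12(|S_1|+|T_1|+|S_2|+|T_2|) - \tfrac14(|S_1|+|T_1|+|S_2|+|T_2|) = \tfrac14(|S_1|+|T_1|+|S_2|+|T_2|)$. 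The step I expect to be the real obstacle is the first one: a direct count that treats $w_1$ and $w_2$ as independent, imposing only their respective zero-constraints, uses nothing about the equality condition and falls short by a constant factor, so one genuinely needs the observation that the identity $S_2\cup T_1 = (A_1\cup A_2)\cup(B_2\cup C_1)$ makes $w_2$ determined by $w_1$. Once that is in hand, the set-size arithmetic is routine.
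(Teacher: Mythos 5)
Your proof is correct and follows the paper's approach essentially verbatim: the central observation in both is that the equality constraint on $A_1\cup A_2$ together with the zero constraint on $B_2\cup C_1$ covers all of $S_2\cup T_1$ and hence pins $w_2$ down uniquely given $w_1$, reducing the count to that of admissible $w_1$'s. The only difference is in the closing arithmetic, where the paper uses the cleaner set identity $B_1\cup C_2\cup S_2\cup T_1=S_1\cup T_1\cup S_2\cup T_2$ followed by the bound $|A\cup B\cup C\cup D|\ge\tfrac14(|A|+|B|+|C|+|D|)$, whereas you reach the same inequality via a chain of subadditivity and intersection bounds.
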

The proof of \Cref{prop:count_valid} is postponed.
The number of $x_1x_2,x'_1x'_2\in\Lambda^{S_1\cup T_2}\times\Lambda^{S_2\cup T_1}$ s.t.
$$w_1w_2\text{ is valid}\qquad(\text{where }w_1=x_1\oplus x'_1\text{ and }w_2=x_2\oplus x_2)$$
will be $2^{b(|S_1\cup T_2|+|S_2\cup T_1|)}$ times the number of valid values in $\Lambda^{S_1\cup T_2}\times\Lambda^{S_2\cup T_1}$.
Hence by combining \Cref{equ:bound_vector_norm}, \Cref{equ:bound_valid_term} with \Cref{prop:count_valid}, $\Norm{\LS\Dist{X_{S_1\cup T_2}}\otimes\Dist{X_{S_2\cup T_1}}\RS^\Trans\Matrix}^2$ is at most
$$2^{-0.99b(|S_1\cup T_2|+|S_2\cup T_1|)\cdot2}\cdot2^{2b(|S_1\cup T_2|+|S_2\cup T_1|)}\cdot2^{-b(|S_1|+|T_1|+|S_2|+|T_2|)/4}\cdot2^{b(|S_1\cup T_1|+|S_2\cup T_2|)}.$$
In conclusion,
\begin{align*}
\frac1{|R|^2}\Abs{\Tr\LS M_{S_1}M_{T_1}^\Trans M_{S_2}M_{T_2}^\Trans\RS} & \le\Norm{\LS\Dist{X_{S_1\cup T_2}}\otimes\Dist{X_{S_2\cup T_1}}\RS^\Trans\Matrix}\cdot\Norm{\Dist{Y_{S_1\cup T_1}}\otimes\Dist{Y_{S_2\cup T_2}}} \\[1ex]
& \le2^{0.01b(|S_1\cup T_2|+|S_2\cup T_1|)}\cdot2^{-b(|S_1|+|T_1|+|S_2|+|T_2|)/8}\cdot2^{0.005b(|S_1\cup T_1|+|S_2\cup T_2|)} \\[1ex]
& \le2^{-0.11b(|S_1|+|T_1|+|S_2|+|T_2|)},
\end{align*}
which is as required by \Cref{prop:bound_each_trace}.
\end{proof}

\begin{proof}[Proof of \Cref{prop:count_valid}]
The number of coordinates of $w_1w_2$ is $|S_1\cup T_2|+|S_2\cup T_1|$, and there are $b$ bits on each coordinates.
Some coordinates are fixed by the validity condition of $w_1w_2$, while the others are totally free.
The number of fixed coordinates is at least $|S_1\cup T_1\cup S_2\cup T_2|$.
See \Cref{fig:valid} for an illustration.
A formal proof is given below.

\begin{figure*}[ht]
\centering
\includegraphics[width=0.75\linewidth]{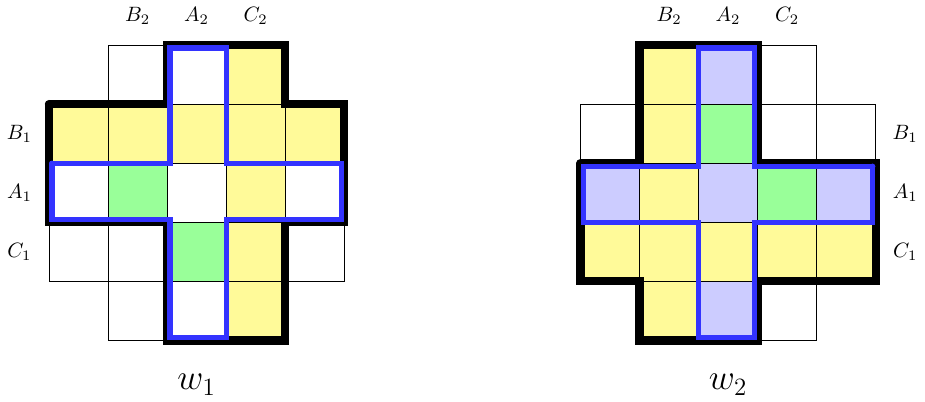}
\caption{
An illustration of valid $w_1w_2\in\Lambda^{(A_1\uplus B_1)\cup(A_2\uplus C_2)}\times\Lambda^{(A_1\uplus C_1)\cup(A_2\uplus B_2)}$.
The coordinates outlined by the thick black line are those of $w_1w_2$.
A valid $w_1w_2$ satisfies: $(w_1)_{A_1\cup A_2}=(w_2)_{A_1\cup A_2}$, $(w_1)_{B_1\cup C_2}=0$, and $(w_2)_{C_1\cup B_2}=0$.
The coordinates outlined by the blue line require that $w_1$ and $w_2$ be identical.
So on the coordinates within the yellow and green area, $w_1w_2$ must be all $0$.
On the coordinates in the blue area, $w_2$ must agree with $w_1$.
For the rest of the coordinates, the choice can be arbitrary.
}
\label{fig:valid}
\end{figure*}

The number of valid $w_1w_2\in\Lambda^{S_1\cup T_2}\times\Lambda^{S_2\cup T_1}$ is
\begin{align*}
& \sum_{w_1w_2\in\Lambda^{S_1\cup T_2}\times\Lambda^{S_2\cup T_1}}\Indicator\LM(w_1)_{A_1\cup A_2}=(w_2)_{A_1\cup A_2}\wedge(w_1)_{B_1\cup C_2}=0\wedge(w_2)_{C_1\cup B_2}=0\RM \\
= & \sum_{w_1\in\Lambda^{S_1\cup T_2}}\Indicator\LM(w_1)_{B_1\cup C_2}=0\RM\sum_{w_2\in\Lambda^{S_2\cup T_1}}\Indicator\LM(w_2)_{A_1\cup A_2}=(w_1)_{A_1\cup A_2}\wedge(w_2)_{C_1\cup B_2}=0\RM.
\end{align*}
Fix any $w_1\in\Lambda^{S_1\cup T_2}$.
As $A_1\cup A_2\cup C_1\cup B_2=(A_1\uplus C_1)\cup(A_2\uplus B_2)=S_2\cup T_1$, the number of $w_2\in\Lambda^{S_2\cup T_1}$ satisfying $(w_2)_{A_1\cup A_2}=(w_1)_{A_1\cup A_2}$ and $(w_2)_{C_1\cup B_2}=0$ is at most $1$.
The number of valid $w_1w_2$ is at most
$$\sum_{w_1\in\Lambda^{S_1\cup T_2}}\Indicator\LM(w_1)_{B_1\cup C_2}=0\RM=2^{b\LS|S_1\cup T_2|-|B_1\cup C_2|\RS} =2^{b\LS|S_1\cup T_2|+|S_2\cup T_1|\RS}\cdot2^{-b\LS|B_1\cup C_2|+|S_2\cup T_1|\RS}.$$
And $|B_1\cup C_2|+|S_2\cup T_1|\ge|B_1\cup C_2\cup S_2\cup T_1|=|S_1\cup T_1\cup S_2\cup T_2|\ge\frac{|S_1|+|T_1|+|S_2|+|T_2|}4$.
\end{proof}
\section{A Decision Tree Yielding Dense Rectangles}

We apply the techniques developed in a series of works on classical query-to-communication lifting \cite{GPW20, CFKMP19, CFKMP21}.
Those works consider the composed function $f\circ G^n$ with arbitrary outer function $f:\{0,1\}^n\to\{\pm1\}$ and some size $\Theta(\log n)$-sized gadget $G$.
The gadget is chosen to be the index function \cite{GPW20}, the inner product function over the Boolean domain \cite{CFKMP19}, or any function with low discrepancy \cite{CFKMP21}.
A decision tree for $f$ is constructed based on the communication protocol for $f\circ G^n$ by keeping the rectangles being dense.

\bigskip

Let $n$ be sufficiently large.
Recall that our gadget $G:\Lambda\times\Lambda\to\{0,1\}$ is the inner product function over the Boolean domain, where $\Lambda=\{0,1\}^b$ and $b=20\log n$.

\begin{definition}[Density of rectangles]
\label{def:rectangle_density}
Let $J\subseteq[n]$ be a set of coordinates and $z\in\{0,1\}^{\bar J}$.
For sets $U,V\subseteq\Lambda^n$, let the random variables $X$ be uniformly distributed over $U$ and $Y$ be uniformly distributed over $V$.
The rectangle $R=U\times V$ is dense on $(J,z)$ if
\begin{itemize}
\item $G^{\bar J}(x,y)=z$ for any $(x,y)\in R$.
\item $X_J$ and $Y_J$ are both $0.99$-dense.
\end{itemize}
\end{definition}

Throughout this section, we consider a hybrid classical-quantum communication protocol $\Protocol$ that takes the input from $\Lambda^n\times\Lambda^n$, and exchages $c$ bits in the classical communication phase.
As discussed in \Cref{sec:hybrid_communication}, $\Lambda^n\times\Lambda^n$ is partitioned into disjoint rectangles in the form of $\Lambda^n\times\Lambda^n=\biguplus_{m\in\{0,1\}^c}R_m$ such that for each $m\in\{0,1\}^c$, the transcript is $m$ on any input $(x,y)\in R_m$.

The following lemma shows that there is a decision tree such that a rectangle within some $R_m$ can be found, which is dense on the unqueried coordinates and consistent with the queried ones.
The analysis follows the deterministic query-to-communication lifting in \cite{CFKMP19}.
Initially $J=[n]$ is the set of unqueried coordinates and $R=\Lambda^n\times\Lambda^n$.
Whenever a bit is transmitted in $\Protocol$, $R$ is partitioned into two rectangles, and we retain the larger one.
Intuitively whenever $R$ is not dense on $J$, we can make queries to some $I\subset J$ and find a rectangle $R'\subseteq R$ such that $R'$ becomes dense on $J\setminus I$ and $G^I$ is consistent with the query answer on $R'$.

\begin{lemma}
\label{lem:construct_rectangle}
For a protocol $\Protocol$ that takes the input from $\Lambda^n\times\Lambda^n$ and uses $c$ bits of deterministic communication, let $\Lambda^n\times\Lambda^n=\biguplus_{m\in\{0,1\}^c}R_m$ be a partition such that for each $m\in\{0,1\}^c$, the transcript of $\Protocol$ is $m$ on any input $(x,y)\in R_m$.
There exists a deterministic decision tree which queries at most $\frac{200c}b$ coordinates in $[n]$ such that: for any outcome of the query, there is a transcript $m\in\{0,1\}^c$ and a rectangle $R\subseteq R_m$ being dense on $(J,z)$, where $J\subseteq[n]$ is the set of unqueried coordinates and $z\in\{0,1\}^{\bar J}$ is the query answer.
\end{lemma}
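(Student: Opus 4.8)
The plan is to build the decision tree by simulating the protocol $\Protocol$ one bit at a time, maintaining an invariant: at each point in the simulation we have queried some set $Q\subseteq[n]$ of coordinates with answers $z\in\{0,1\}^Q$, we have fixed a transcript prefix, and we hold a ``large'' rectangle $R\subseteq R_{\text{prefix}}$ that is dense on $([n]\setminus Q, z)$ in the sense of \Cref{def:rectangle_density}, restricted to the current transcript prefix. Largeness is measured by a deficiency potential: $R=U\times V$ is acceptable if the uniform distributions on $U$ (restricted to the unqueried coordinates) and on $V$ are not too far from full min-entropy, e.g. $\log|\Lambda^{[n]\setminus Q}| - \log|U_{[n]\setminus Q}| \le 0.005 b|[n]\setminus Q|$ and likewise for $V$. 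I would first argue that a rectangle satisfying this mild deficiency bound, after discarding a small set of ``bad'' rows/columns, splits off a dense sub-rectangle: this is exactly the density-restoration step from \cite{CFKMP19}. Concretely, if $U_{[n]\setminus Q}$ fails to be $0.99$-dense, there is a set $I\subseteq[n]\setminus Q$ and a value witnessing too-high probability mass; condition $U$ on the most popular value of $X_I$, which (i) fixes $G^I$? — no, it fixes $X_I$, so we then query $I$, and for the query answer, we further condition so that $G^I(x,y)$ equals the answer; symmetrically for $V$. The key accounting point is that each such restoration step increases the deficiency of the \emph{surviving} rectangle by at most a constant fraction of $b|I|$, while querying $|I|$ coordinates — so the number of queried coordinates is controlled by the total deficiency accumulated.

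Next I would track the deficiency budget against the communication. Each transmitted bit (say Alice speaks) partitions $R=U\times V$ into $U=U^0\uplus U^1$ and we keep the larger half; this increases the deficiency of $U$ by at most $1$ bit and leaves $V$ untouched. Thus after $c$ transmitted bits the accumulated ``split deficiency'' is at most $c$. Each density-restoration step on a set $I$ costs $|I|$ queries but, by the previous paragraph, can be charged against a deficiency \emph{reduction} of order $b|I|$ (conditioning on a popular value of $X_I$ removes at least $\delta b|I|$ bits of deficiency for the offending block while re-introducing at most $0.005 b |I|$ of new deficiency for smaller blocks — the net is a genuine decrease once $b$ is the chosen $20\log n$). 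So a standard amortized argument: total queries $\le$ (constant) $\times$ (total deficiency injected)$/b \le 200c/b$, with the constant $200$ absorbing the slack in the density thresholds $0.99$ vs.\ $0.005$. The construction of the tree is then automatic: the tree's internal nodes are exactly the coordinates $I$ queried during restoration steps, the branching is on the query answers $z$, and at each leaf the simulation of $\Protocol$ has run to completion (all $c$ bits sent), yielding a transcript $m$ and a rectangle $R\subseteq R_m$ dense on $(J,z)$ with $J$ the unqueried coordinates.

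**The main obstacle** I anticipate is making the amortized accounting airtight across the interleaving of split steps (which inject deficiency into one side only) and restoration steps (which are triggered by either side and query coordinates common to the accounting of both sides, since $G^I$ depends on both $x_I$ and $y_I$). One has to be careful that a restoration step triggered by $U$'s non-density, which conditions $U$ on $X_I=\alpha$ and then queries $I$ with some answer $z_I$, does not blow up $V$'s deficiency: conditioning $V$ on $G^I(x,y)=z_I$ given $X_I=\alpha$ is just conditioning $Y_I$ on a single affine constraint per coordinate, costing at most $b|I|$ bits of deficiency for $V$ — but we must check this is still within the running budget, i.e. that $V$ was sufficiently dense before. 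Handling this cleanly likely requires processing the two-sided restoration in a fixed order (first fully restore $U$ by conditioning and querying, then restore $V$) and verifying that the order does not cause a cascade. I would also need \Cref{lem:pointwise_uniform} implicitly available so that ``dense on $(J,z)$'' is a consistent, non-vacuous condition (the set of $(x,y)\in R$ with $G^I(x,y)=z_I$ is nonempty and a constant fraction of $R$), and I would lean on \Cref{prop:entropic_transitivity} together with \Cref{prop:deg_is_entropic} only later, when converting this lemma into the actual lifting theorem — here the lemma is purely about producing the dense rectangle.

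Finally, for bookkeeping I would phrase the whole thing as an explicit algorithm (\texttt{while} there are bits left to simulate: if $R$ not dense on current $J$, run restoration and recurse on branches; else simulate one more bit and keep larger half), prove the loop invariant by induction on the number of simulated bits, and read off both the query bound $200c/b$ and the density conclusion from the invariant at termination.
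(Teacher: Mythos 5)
Your overall architecture matches the paper's: simulate the protocol bit by bit keeping the larger half of the split, restore blockwise density by querying a maximal offending block $I$ and conditioning, and amortize the number of queried coordinates against a deficiency potential ($\Delta(U,V,J)=2b|J|-H_\infty(X_J)-H_\infty(Y_J)$ in the paper). However, there is a genuine gap in your accounting at exactly the point you flag as the main obstacle. You bound the deficiency cost to $V$ of conditioning on $G^I(\alpha,Y_I)=z_I$ by $b|I|$. With that bound the amortization does not close: a restoration step changes the potential by at most $1+0.99b|I|$ (conditioning $U$ on $X_I=\alpha$) plus your $b|I|$ (conditioning $V$) minus $b|I|$ (removing $I$ from $J$ drops the baseline by $2b|I|$ while $H_\infty(X_J)$ is unchanged, since $X_I$ is now constant, and $H_\infty(Y_J)$ drops by at most $b|I|$). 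The net is $1+0.99b|I|>0$, so the potential \emph{increases} at every restoration step and the query count is not bounded by $O(c/b)$. For a generic set $V$ the event $G^I(\alpha,Y_I)=z_I$ can indeed be exponentially rare in $b|I|$, so your bound is not merely loose in the analysis --- the step can genuinely fail without an extra ingredient.

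The missing ingredient is the bad-value pruning of $U$ (\Cref{def:bad_value} and \Cref{lem:few_bad_value}, i.e.\ Theorem~7 of~\cite{CFKMP19}): before extracting $I$ and $\alpha$, the algorithm discards from $U$ all $x$ whose $x_J$ is $0.44$-bad for $Y_J$. Since $X_J$ is still $0.44$-dense and $Y_J$ is $0.99$-dense at that moment, this removes at most a $1/n$ fraction of $U$ (costing $\le 1$ bit of potential), and it guarantees two things simultaneously for the chosen $\alpha$: (i) $\Pr[G^I(\alpha,Y_I)=z_I]\ge 2^{-|I|-1}$, so the deficiency cost to $V$ is only $|I|+1$ rather than $b|I|$, which turns the net potential change into $1+0.99b|I|+|I|+1-b|I|\le-0.005\,b|I|$; and (ii) the conditional $Y_{J\setminus I}\mid G^I(\alpha,Y_I)=z_I$ remains $0.44$-dense, which is what lets the while-loop invariant (one side $0.99$-dense, the other $0.44$-dense) persist so that \Cref{lem:few_bad_value} is applicable in the next iteration. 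Your proposed fix of ``processing the two-sided restoration in a fixed order'' does not address either point. A secondary, more cosmetic issue: your acceptability criterion is a \emph{global} min-entropy deficiency bound, whereas $\delta$-density is a per-block condition over all $I\subseteq J$; small global deficiency does not imply blockwise density, so the invariant must be stated blockwise as in \Cref{def:rectangle_density}.
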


To prove \Cref{lem:construct_rectangle}, the following results in \cite{CFKMP19} would be useful.
The proof is given in \Cref{sec:appendix} for completeness.

\begin{proposition}[{\cite[Proposition 11]{CFKMP19}}]
\label{prop:restore_density}
Let $J$ be a set of coordinates and $0<\delta\le1$.
For a random variable $X\in\Lambda^J$, let $I\subseteq J$ be any maximal subset such that $H_\infty(X_I)<\delta\cdot b\cdot|I|$.
Let $\alpha\in\Lambda^I$ be any value such that $\Dist{X_I}(\alpha)>2^{-\delta b|I|}$.
Then $X_{J\setminus I}\mid X_I=\alpha$ is $\delta$-dense.
\end{proposition}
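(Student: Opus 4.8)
The plan is a short min-entropy computation that simply unwinds the two definitions in play; write $\delta$ for the density parameter. By \Cref{def:blockwise_density} I must show that for every $K\subseteq J\setminus I$ and every $\gamma\in\Lambda^{K}$,
$$\Pr[X_{K}=\gamma\mid X_{I}=\alpha]\le 2^{-\delta b|K|}.$$
The case $K=\emptyset$ is vacuous, so fix $K\neq\emptyset$. Since $K$ and $I$ are disjoint, the left-hand side equals $\Pr[X_{K\cup I}=(\gamma,\alpha)]\,/\,\Pr[X_{I}=\alpha]$, so the whole statement reduces to bounding this numerator from above and this denominator from below.

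The denominator is handled immediately by the hypothesis $\Pr[X_{I}=\alpha]>2^{-\delta b|I|}$. For the numerator I would use the one remaining hypothesis, namely that $I$ is a \emph{maximal} subset $S\subseteq J$ with $H_\infty(X_{S})<\delta b|S|$: the set $K\cup I\subseteq J$ strictly contains $I$, hence fails this inequality, so $H_\infty(X_{K\cup I})\ge\delta b|K\cup I|=\delta b(|K|+|I|)$ by disjointness of $K$ and $I$, which is precisely $\Pr[X_{K\cup I}=(\gamma,\alpha)]\le 2^{-\delta b(|K|+|I|)}$. Dividing the two estimates,
$$\Pr[X_{K}=\gamma\mid X_{I}=\alpha]<\frac{2^{-\delta b(|K|+|I|)}}{2^{-\delta b|I|}}=2^{-\delta b|K|},$$
and since $K$ and $\gamma$ were arbitrary this is exactly the assertion that $X_{J\setminus I}\mid X_{I}=\alpha$ is $\delta$-dense.

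There is essentially no obstacle here — the argument is a rearrangement of the definitions — and the only point that deserves attention is that the numerator bound must be available for \emph{every} choice of $K$ at the same time. That is exactly why one passes to a \emph{maximal} violating block rather than an arbitrary one, and why $\delta$-density is defined as a simultaneous min-entropy bound over all sub-blocks. (It is worth noting in passing that the existence of $\alpha$ with $\Pr[X_{I}=\alpha]>2^{-\delta b|I|}$ forces $I\neq\emptyset$, so the proposition is invoked precisely when $X_{J}$ fails to be $\delta$-dense — the situation that arises at the density-repair step of \Cref{lem:construct_rectangle}.)
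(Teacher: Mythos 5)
Your proof is correct and is essentially the paper's argument in contrapositive form: the paper assumes the conditional variable is not $\delta$-dense, multiplies the offending conditional probability by $\Dist{X_I}(\alpha)$ via the same chain-rule identity, and derives a contradiction with the maximality of $I$, whereas you run the identical computation directly. No substantive difference.
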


\begin{definition}
\label{def:bad_value}
Let $J$ be a set of coordinates and $\alpha\in\Lambda^J$ and $0<\delta\le1$.
For a random variable $Y\in\Lambda^J$, $\alpha$ is $\delta$-bad for $Y$ if there is an $I\subseteq J$ and a $z\in\{0,1\}^I$ such that $Y_{J\setminus I}\mid G^I(\alpha,Y)=z$ is not $\delta$-dense, or $\Dist{G^I(\alpha,Y)}(z)<2^{-|I|-1}$.
\end{definition}

\begin{lemma}[{\cite[Theorem 7]{CFKMP19}}]
\label{lem:few_bad_value}
Let $J$ be a set of coordinates and $X,Y\in\Lambda^J$ be independent random variables which are $\delta_X$-dense and $\delta_Y$-dense, respectively.
If $\delta_X+\delta_Y\ge1.4$ and $\delta_Y\ge0.99$, the probability that $X$ takes a value that is $0.44$-bad for $Y$ is at most $1/n$.
\end{lemma}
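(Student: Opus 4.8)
The plan is to bound the probability, over a draw of $X$, that \emph{some} witness $(I,z)$ (together, if needed, with a block $K\subseteq J\setminus I$ and a value $w\in\Lambda^K$) certifies $0.44$-badness. A first reduction: whether a value $\alpha$ is bad via $(I,z)$ depends only on $\alpha_I$, whose law is a marginal of $X$ and hence again $\delta_X$-dense; and ``$Y_{J\setminus I}\mid G^I(\alpha,Y)=z$ is not $0.44$-dense'' unwinds to ``$\Pr[Y_K=w\mid G^I(\alpha,Y)=z]>2^{-0.44b|K|}$ for some $K\subseteq J\setminus I$ and $w\in\Lambda^K$''. So everything reduces to controlling, for each $I\subseteq J$ and each $K\subseteq J\setminus I$, the law of $G^I(\alpha,Y)$ and the joint law of $\big(Y_K,\,G^I(\alpha,Y)\big)$. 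Here the linearity of the inner-product gadget is decisive: the indicator of the event $G^I(\alpha,Y)=z$ equals \emph{exactly} $2^{-|I|}\sum_{S\subseteq I}(-1)^{z\cdot S}\,\Chi{\alpha^S}$, where $\alpha^S\in\Lambda^I$ agrees with $\alpha$ on the blocks indexed by $S$ and is zero elsewhere, so that
$$\Dist{G^I(\alpha,Y)}(z)=2^{-|I|}\sum_{S\subseteq I}(-1)^{z\cdot S}\,\widehat q(\alpha^S),$$
where $\widehat q(v)=\mathbb E_Y[\Chi v(Y)]$ are the Fourier coefficients of the distribution $\Dist Y$; likewise $\Pr[Y_K=w\wedge G^I(\alpha,Y)=z]$ expands over characters of the form $(u,\alpha^S)$ with $u\in\Lambda^K$ and $S\subseteq I$. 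In each case the $S=\emptyset$ term is the ``uniform/independent'' prediction --- namely $2^{-|I|}$ and $2^{-|I|}\Dist{Y_K}(w)\le 2^{-|I|}2^{-0.99b|K|}$ --- which is exactly what \Cref{lem:pointwise_uniform} guarantees; the job is to show the remaining terms are negligible for all but a vanishing fraction of $\alpha$.

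The quantitative engine is to estimate, in expectation over the $\delta_X$-dense marginal of $X$, the $\ell_1$-masses $\sum_{\emptyset\ne S\subseteq I}\Abs{\widehat q(\alpha^S)}$ and $\sum_{\emptyset\ne S\subseteq I}\sum_{u\in\Lambda^K}\Abs{\widehat q\big((u,\alpha^S)\big)}$. Bucketing by $|S|$ and, within a bucket, applying Cauchy--Schwarz together with Parseval to pass to the collision probability of $Y_S$ (resp.\ $Y_{K\cup S}$): the $\delta_Y$-density of $Y$ caps the relevant $\ell_2$ Fourier mass by a $2^{(1-\delta_Y)b|S|/2}$-type factor, while averaging the character weight $\alpha_S$ against the $\delta_X$-dense marginal of $X$ contributes a further $2^{-\delta_X b|S|/2}$-type damping (with a single, once-and-for-all $\mathrm{poly}(n)\cdot2^{-\delta_X b}$ term absorbing the degenerate event that some $\alpha_i=0$). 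Since $\delta_X+\delta_Y\ge1.4$, the size-$s$ bucket is thereby suppressed by $2^{-\Omega(bs)}=n^{-\Omega(s)}$ --- this is where $b=20\log n$ is essential --- so the geometric sum over $S$ is controlled and dominated by $s=1$. A crucial point is that these $\ell_1$-bounds are \emph{independent of $z$ and $w$}: $z$ merely permutes signs inside the absolute values, and $w$ never enters the Fourier-side sum, so we pay no union bound over $z\in\{0,1\}^I$ or over $w\in\Lambda^K$. Dividing the resulting two-part estimate through by $\Dist{G^I(\alpha,Y)}(z)\ge 2^{-|I|-1}$ (the regime not already caught by the thinness clause) turns it into the sought bound on $\Pr[Y_K=w\mid G^I(\alpha,Y)=z]$, which stays below $2^{-0.44b|K|}$ with slack because $0.99b|K|>0.44b|K|+O(1)$ when $b=20\log n$.

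The main obstacle is the bookkeeping that forces the union over the surviving witnesses --- the pairs $(I,K)$, of which there are exponentially many --- to sum to at most $1/n$. Extracting enough geometric decay from the Fourier error terms to beat this count is exactly where one must use that the gadget is inner product (so the gadget-restriction indicator is an \emph{exact}, short combination of characters on which the density of $Y$ acts without approximation loss) and that it has size $\Theta(\log n)$ rather than $\Theta(1)$ (so each ``lost'' block is charged an $n^{-\Omega(1)}$ factor); this is the same interface at which \Cref{lem:pointwise_uniform} and \Cref{prop:restore_density} were designed to cooperate, and it is the technical heart of the inner-product query-to-communication lifting of \cite{CFKMP19} that we are invoking, here repurposed to keep the rectangle dense across the classical phase of the hybrid protocol.
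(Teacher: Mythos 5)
Your route is the same Fourier-analytic mechanism as the paper's own proof (which follows \cite{CFKMP19}): expand the indicator of $G^I(\alpha,Y)=z$ exactly into characters using the linearity of the inner-product gadget, take the $S=\emptyset$ term as the ``uniform/independent'' main term, and control the remaining Fourier mass on average over a $\delta_X$-dense $\alpha$ via Parseval against the $\delta_Y$-dense $Y$, with $b=20\log n$ converting each lost coordinate into an $n^{-\Omega(1)}$ factor. The genuine gap is the step you yourself call the main obstacle and then defer to ``invoking'' \cite{CFKMP19}: actually closing the union bound over the exponentially many witnesses. In your formulation --- an expected $\ell_1$-bound on $\sum_{\emptyset\ne S\subseteq I}\Abs{\widehat q(\alpha^S)}$ followed by Markov --- the expectation is dominated by the $|S|=1$ bucket and therefore does not decay with $|I|$ at all, so a per-witness Markov bound cannot survive a union over the up to $2^{n}$ choices of $I$. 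This must be repaired either by observing that the error sum is monotone in $I$ (so it suffices to control it once, with $I=J\setminus K$), or by the paper's device: the pointwise notion of an $\eta$-biased value (\Cref{def:biased_value}), whose thresholds $\eta\cdot2^{-1.1b|I|}$ shrink with $|I|$, so that the second-moment count in \Cref{lem:few_non_biased_value} gives a per-$I$ failure probability of order $2^{-0.2b|I|}$ and, crucially, carries the extra factor $2^{-\delta_Y b|K|}$ coming from the collision probability of $Y_{I\uplus K}$, which is what survives the union over $K$; \Cref{lem:bias_is_not_bad} then turns biasedness into non-badness uniformly in $z$ and $\beta$, so no union over those is needed. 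Neither repair is present in your sketch.

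Relatedly, the slack you cite at the end ($0.99b|K|>0.44b|K|+O(1)$) is not the inequality that carries the argument. After expanding the event $Y_K=w$ over $u\in\Lambda^K$ (or summing over $\beta$ inside Parseval, as the paper does), the error term in the joint probability is only of order $2^{-\Theta(\delta_Y b|K|)/2}$ --- in the paper's accounting $\eta_K=2^{-\delta_Y b|K|/2.2}=2^{-0.45b|K|}$ --- which for $|K|\ge1$ dominates the main term $2^{-0.99b|K|}$; the conclusion holds because $0.45>0.44$ with just enough residual decay (about $2^{-\delta_Y b|K|/11}$ per $K$ in \Cref{lem:few_non_biased_value}) to beat the $\le n^{|K|}$ choices of $K$, and this is precisely where $b=20\log n$ is spent. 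So your plan is the right one and can be completed along the lines of \Cref{lem:pointwise_uniform} and the appendix proof, but the decisive quantitative bookkeeping --- the actual content of the paper's argument --- is missing from the proposal.
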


\begin{proof}[Proof of \Cref{lem:construct_rectangle}]

Based on $\Protocol$, the desired decision tree is constructed as in \Cref{alg:construct_rectangle}.

\begin{algorithm}[ht]
\caption{A decision tree yielding dense rectangles}
\label{alg:construct_rectangle}
\begin{algorithmic}[1]
\STATE $U,V\gets\Lambda^n$ and $J\gets[n]$.
\STATE Although $U,V$ may vary, we always define $R=U\times V$ and random variables $X\sim U,Y\sim V$.
\STATE The initial value of $m\in\{0,1\}^c$ and $z\in\{0,1\}^n$ can be arbitrary.
\FOR{$i\in[c]$}

\IF{$i$ is odd}
\label{line:communication_start}
\STATE $U_a\gets\LL x\in U:m_i(x,m_1\cdots m_{i-1})=a\RL$ for $a\in\{0,1\}$.
\STATE $m_i\gets\Argmax_{a\in\{0,1\}}|U_a|$ and $U\gets U_{m_i}$.
\ELSE
\STATE $V_a\gets\LL y\in V:m_i(y,m_1\cdots m_{i-1})=a\RL$ for $a\in\{0,1\}$.
\STATE $m_i\gets\Argmax_{a\in\{0,1\}}|V_a|$ and $V\gets V_{m_i}$.
\ENDIF
\label{line:communication_end}
\WHILE{$X_J$ or $Y_J$ is not $0.99$-dense}

\STATE Assume that $X_J$ is not $0.99$-dense.
It is symmetric for $Y_J$ not being dense.
\STATE $U\gets\LL x\in U:x_J\textrm{ is not }0.44\textrm{-bad for }Y_J\RL$.
\label{line:remove_bad_value}
\STATE Set $I\subseteq J$ to be any maximal subset such that $H_\infty(X_I)<0.99b|I|$.
\STATE Set $\alpha\in\Lambda^I$ to be any value such that $\Dist{X_I}(\alpha)>2^{-0.99b|I|}$.
\STATE Make queries to $I$ and get $z_I\in\{0,1\}^I$.
\STATE $U\gets\LL x\in U:x_I=\alpha\RL$, $V\gets\LL y\in V:G^I(\alpha,y_I)=z_I\RL$, and $J\gets J\setminus I$.

\ENDWHILE

\ENDFOR
\end{algorithmic}
\end{algorithm}

We first prove that any set of unqueried coordinates $J\subseteq[n]$, query answer $z\in\{0,1\}^{\bar J}$, transcript $m$, and rectangle $R$ obtained by \Cref{alg:construct_rectangle} satisfy that $R\subseteq R_m$ is dense on $(J,z)$.
Then we bound the number of queries.
To prove the density of $R$, the following loop invariants are sufficient.

\begin{proposition}
At the start and end of each iteration of the for-loop in \Cref{alg:construct_rectangle}, it holds that $R$ is dense on $(J,z_{\bar J})$.
At the start and end of each iteration of the while-loop, it holds that either $X_J$ is $0.44$-dense and $Y_J$ is $0.99$-dense, or $X_J$ is $0.99$-dense and $Y_J$ is $0.44$-dense, where the random variables $X\sim U,Y\sim V$.
\end{proposition}

\begin{proof}
We prove this by induction.
At the start of the first iteration ($i=1$) of the for-loop, $R$ is dense on $(J,z_{\bar J})$ trivially since $R=\Lambda^n\times\Lambda^n$ and $J=[n]$.
We assume by the induction hypothesis that $R$ is dense on $(J,z_{\bar J})$ at the start of the $i$-th iteration of the for-loop.
We consider odd $i$ and the case where $i$ is even is symmetric.

Before executing \Cref{line:communication_start}, $G^{\bar J}(X,Y)\equiv z_{\bar J}$ and $X_J,Y_J$ are $0.99$-dense by the induction hypothesis, where $X\sim U,Y\sim V$.
After executing \Cref{line:communication_end}, $U$ is replaced by $U'$ with $|U'|\ge|U|/2$.
For each nonempty $I\subseteq J$,
$$H_\infty(X'_I)\ge H_\infty(X_I)-1\ge0.99b|I|-1\ge0.98b|I|,$$
where $X'\sim U'$.
So $X'_J$ is at least $0.98$-dense.

\begin{quote}
We now temporarily turn to proving the loop invariant for the while-loop.
Upon entering the while-loop for the first time, $X_J$ is $0.98$-dense (and, of-course, $0.44$-dense), and $Y_J$ is still $0.99$-dense, where $X\sim U,Y\sim V$.
The loop invariant holds for the base case.
We assume by the induction hypothesis that $X_J$ is $0.44$-dense and $Y_J$ is $0.99$-dense at the start of some iteration of the while-loop.
The case where $X_J$ is $0.99$-dense and $Y_J$ is $0.44$-dense is symmetric.

Before executing \Cref{line:remove_bad_value}, $X_J$ is $0.44$-dense and $Y_J$ is $0.99$-dense by  the induction hypothesis, where $X\sim U,Y\sim V$.
After executing \Cref{line:remove_bad_value}, $U$ is replaced by $U'$ with the bad values removed.
Then $X'$ is not $0.44$-bad for $Y$ where $X'\sim U'$.
Let $I\subseteq J$ be any maximal subset such that $H_\infty(X'_I)<0.99b|I|$, $\alpha\in\Lambda^I$ be any value such that $\Dist{X'_I}(\alpha)>2^{-0.99b|I|}$, and $z_I\in\{0,1\}^I$ be the answer of querying coordinates $I$.
Now $U$ is replaced by $U''$ such that $X''_I\equiv\alpha$, where $X''\sim U''$.
So $X''=\LS X'\mid X'_I=\alpha\RS$ and by \Cref{prop:restore_density},
$$X''_{J\setminus I}=\LS X'_{J\setminus I}\mid X'_I=\alpha\RS\textrm{ is }0.99\textrm{-dense}.$$
Then $V$ is replaced by $V'$ such that $G^I(\alpha,Y'_I)\equiv z_I$, where $Y'\sim V'$.
We have $Y'=\LS Y\mid G^I(\alpha,Y_I)=z_I\RS$.
As $X'_J$ is not $0.44$-bad for $Y_J$ and $\Dist{X'_I}(\alpha)>0$, by \Cref{def:bad_value},
$$Y'_{J\setminus I}=\LS Y_{J\setminus I}\mid G^I(\alpha,Y_I)=z_I\RS\textrm{ is }0.44\textrm{-dense}.$$
Besides, $G^I(X'',Y')\equiv G^I(\alpha,Y'_I)\equiv z_I$ and $I$ is removed from $J$.
So the loop invariant holds at the end of the current iteration of the while-loop, and will still hold at the start of the next iteration of the while-loop.
\end{quote}

At the end of the $i$-th iteration of the for-loop, the loop condition of the while-loop is violated.
So $X_J,Y_J$ are both $0.99$-dense, where $X\sim U,Y\sim V$.
And $G^{\bar J}(X,Y)\equiv z_{\bar J}$ since whenever some $I$ is removed from $J$, we have $G^I(X,Y)\equiv z_I$.
Therefore, $R$ is dense on $(J,z_{\bar J})$ at the end of the $i$-th iteration of the for-loop, also at the start of the $(i+1)$-th iteration.
\end{proof}

Let the transcript $m\in\{0,1\}^c$ and the rectangle $R\subseteq\Lambda^n\times\Lambda^n$ be obtained by \Cref{alg:construct_rectangle}.
We prove that $R\subseteq R_m$ by induction on $i\in[c]$.

For $i\in[c]$, let $\Lambda^n\times\Lambda^n=\biguplus_{m'\in\{0,1\}^i}R_{m'}$ be the partition such that for each $m'\in\{0,1\}^i$, the transcript of the first $i$ bits is $m'$ on any input $(x,y)\in R_{m'}$.
Assume by the induction hypothesis that $R\subseteq R_{m_1\cdots m_{i-1}}$at the start of the $i$-th iteration of the for-loop.
After executing \Cref{line:communication_end}, $R\subseteq R_{m_1\cdots m_i}$.
For the rest of the for-loop, the rectangle $R$ is just replaced by its sub-rectangle.
So $R\subseteq R_{m_1\cdots m_i}$ at the end of the $i$-th iteration of the for-loop, also at the start of the $(i+1)$-th iteration.
And finally, $R\subseteq R_m$.

\bigskip

To bound the number of queries, we use the following potential function as in \cite{CFKMP19}:
$$\Delta(U,V,J)=2b|J|-H_\infty(X_J)-H_\infty(Y_J)$$
for $U,V\subseteq\Lambda^n$ and $J\subseteq[n]$, where $X\sim U,Y\sim V$.
It holds that $\Delta(U,V,J)\ge0$.
In \Cref{alg:construct_rectangle}, the potential function $\Delta(U,V,J)$ initially equals $0$.
Intuitively, during the execution, the potential function increases by at most $O(1)$ for each bit transmitted, and decreases by at least $O(b|I|)$ for each subset $I$ removed from $J$.
\begin{itemize}
\item Whenever executing \Cref{line:communication_start} - \Cref{line:communication_end}, either $U$ is replaced by $U'$ with $|U'|\ge|U|/2$, or $V$ is replaced by $V'$ with $|V'|\ge|V|/2$.
For the first case, the potential function increases by $H_\infty(X_J)-H_\infty(X'_J)\le1$, where $X'\sim U'$.
Similarly, for the second case, the potential function increases by $H_\infty(Y_J)-H_\infty(Y'_J)\le1$, where $Y'\sim V'$.
\item For each iteration of the while-loop triggered because $X_J$ is not $0.99$-dense.
Firstly, $U$ is replaced by $U'$ with the bad values removed.
Since $X_J$ is at least $0.44$-dense and $Y_J$ is $0.99$-dense.
By \Cref{lem:few_bad_value}, $X_J$ is $0.44$-bad for $Y_J$ with probability at most $1-|U'|/|U|\le1/n$.
So $|U'|\ge|U|/2$, and the potential function increases by at most $1$.

Now $X'_J$ is not $0.44$-bad for $Y_J$, where $X'\sim U'$.
Let $I\subseteq J$ be any maximal subset such that $H_\infty(X'_I)<0.99b|I|$, and $\alpha\in\Lambda^I$ be any value such that $\Dist{X'_I}(\alpha)>2^{-0.99b|I|}$.
$U$ is replaced by $U''$ such that $X''_I\equiv\alpha$, where $X''\sim U''$.
So $|U''|\ge|U'|\cdot2^{-0.99b|I|}$, and the potential function increases by $H_\infty(X'_J)-H_\infty(X''_J)\le0.99b|I|$.

Then $V$ is replaced by $V'$ such that $G^I(\alpha,Y'_I)\equiv z_I$ for some $z_I\subseteq\{0,1\}^I$, where $Y'\sim V'$.
As $X'_J$ is not $0.44$-dense for $Y_J$ and $\Dist{X'_I}(\alpha)>0$,
by \Cref{def:bad_value}, $|V'|/|V|=\Dist{G^I(\alpha,Y_I)}(z_I)\ge2^{-|I|-1}$.
The potential function increases by $H_\infty(Y_J)-H_\infty(Y'_J)\le|I|+1$.

Finally, $I$ is removed from $J$.
The potential function increases by
$$-2b|I|+H_\infty(X''_J)-H_\infty(X''_{J\setminus I})+H_\infty(Y'_J)-H_\infty(Y'_{J\setminus I})\le-2b|I|+0+b|I|=-b|I|,$$
where $H_\infty(X''_J)=H_\infty(X''_{J\setminus I})$ because $X''_I\equiv\alpha$.

Overall, the potential function increases by at most $1+0.99b|I|+|I|+1-b|I|\le-0.005b|I|$.
For iterations triggered because $Y_J$ is not $0.99$-dense, the argument is symmetric.
\end{itemize}
At the end, the potential function is at most $c-0.005b|\bar J|\ge0$.
The number of queries $|\bar J|\le\frac{200c}b$.
\end{proof}
\section{Lifting Theorem for Hybrid Classical-Quantum Communication}

Let $n$ be sufficiently large.
Recall that the gadget $G:\Lambda\times\Lambda\to\{0,1\}$ is the inner product function over the Boolean domain, where $\Lambda=\{0,1\}^b$ and $b=20\log n$.
The following is a lifting-style statement: given a hybrid protocol that transmits $c$ classical bits followed by $q$ quantum bits and solves $f\circ G^n$, there is a $O\LS\frac cb\RS$-depth decision tree such that the outer function $f$ restricted to any outcome of the query has approximate degree $O\LS\frac qb\RS$.

\begin{theorem}
\label{thm:hybrid_lifting}
Let $\Protocol(c,q)$ be a hybrid protocol that allows $c$ bits of deterministic communication in advance, and uses $q$ qubits of quantum communication with prior entanglement (as defined in \Cref{sec:hybrid_communication}).
For any $f:\{0,1\}^n\to\{\pm1\}$, if $\Protocol$ computes $f\circ G^n$ with error $0.1$, there exists a deterministic decision tree which queries at most $\frac{200c}b$ coordinates in $[n]$ such that: for any outcome of the query, $\Deg_{1/3}(f_{J,z})=O\LS\frac qb\RS$, where $J\subseteq[n]$ is the set of unqueried coordinates and $z\in\{0,1\}^{\bar J}$ is the query answer.
\end{theorem}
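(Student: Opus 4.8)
The plan is to chain the two halves of the machinery from the previous sections: \Cref{lem:construct_rectangle} for the classical phase, \Cref{thm:lifting_on_rectangle} for the quantum phase, with \Cref{prop:qcc_on_rectangle} as the bridge between them. First I would apply \Cref{lem:construct_rectangle} to the $c$-bit classical phase of $\Protocol$, which produces a deterministic decision tree querying at most $\frac{200c}{b}$ coordinates of $[n]$. Fix any outcome of this tree; it comes equipped with a transcript $m\in\{0,1\}^c$, the set $J\subseteq[n]$ of unqueried coordinates, the query answer $z\in\{0,1\}^{\bar J}$, and a rectangle $R=U\times V\subseteq R_m$ that is dense on $(J,z)$ in the sense of \Cref{def:rectangle_density}. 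Thus $G^{\bar J}(x,y)=z$ for all $(x,y)\in R$, and the marginals $X_J,Y_J$ of $X\sim U$ and $Y\sim V$ are both $0.99$-dense.

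Next I would transfer the cost of the quantum phase onto this rectangle. Since $\Protocol$ computes $f\circ G^n$ with error $0.1$ and the classical transcript $m$ is common knowledge on $R_m$, \Cref{prop:qcc_on_rectangle} gives $Q^*_{0.1}(F_m)\le q$, where $F_m$ denotes $f\circ G^n$ restricted to $R_m$; restricting further to $R\subseteq R_m$ and running the same protocol, $Q^*_{0.1}(F|_R)\le q$. It remains to recognise $F|_R$ as an instance of \Cref{thm:lifting_on_rectangle}. I would split $\Lambda^n=\Lambda^J\times\Lambda^{\bar J}$ and regard the $\bar J$-block as an auxiliary register in $\{0,1\}^\ell$ with $\ell=b\cdot|\bar J|$: write $u=(x,a)$ with $x=x_J\in\Lambda^J$, $a=x_{\bar J}\in\{0,1\}^\ell$, and likewise $v=(y,b)$, so that $U,V$ become subsets of $\Lambda^J\times\{0,1\}^\ell$. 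Because $G^{\bar J}$ is constantly $z$ on $R$,
$$F|_R(u,v)=f\LS G^J(x,y),z\RS=f_{J,z}\LS G^J(x,y)\RS,$$
which is exactly the function considered in \Cref{thm:lifting_on_rectangle} with outer function $f_{J,z}$; moreover the random variables playing the roles of ``$X$'' and ``$Y$'' there are precisely $X_J$ and $Y_J$, which are $0.99$-dense, so all hypotheses hold.

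Applying \Cref{thm:lifting_on_rectangle} then yields $q\ge Q^*_{0.1}(F|_R)=\Omega\LS\Deg_{1/3}(f_{J,z})\cdot b\RS$, hence $\Deg_{1/3}(f_{J,z})=O\LS\frac qb\RS$, and this holds for every outcome of the decision tree, whose depth is at most $\frac{200c}{b}$ — precisely the claimed statement. Essentially all the mathematical content has already been discharged in \Cref{lem:construct_rectangle} and \Cref{thm:lifting_on_rectangle}, so the only step requiring care is the identification of $F|_R$ with an instance of \Cref{thm:lifting_on_rectangle}: one must check that a rectangle living in $\Lambda^n\times\Lambda^n$ and dense on $(J,z)$ per \Cref{def:rectangle_density} really does satisfy the hypotheses of \Cref{thm:lifting_on_rectangle} after the $\bar J$-coordinates are absorbed into the auxiliary registers, i.e. that the two notions of $0.99$-density coincide under this identification. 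Since $0.99$-density of $X_J$ is verbatim what \Cref{thm:lifting_on_rectangle} demands of its ``$X$'', I expect no genuine obstacle here.
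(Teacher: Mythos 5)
Your proposal is correct and follows essentially the same route as the paper: apply \Cref{lem:construct_rectangle} to the classical phase, use \Cref{prop:qcc_on_rectangle} together with $R\subseteq R_m$ to bound $Q^*_{0.1}$ of the restriction by $q$, and invoke \Cref{thm:lifting_on_rectangle} on the dense rectangle with outer function $f_{J,z}$. Your explicit identification of the $\bar J$-coordinates with the auxiliary $\{0,1\}^\ell$ registers of \Cref{thm:lifting_on_rectangle} is exactly the (implicit) step in the paper's argument and is handled correctly.
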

\begin{proof}
Let $\Lambda^n\times\Lambda^n=\biguplus_{m\in\{0,1\}^c}R_m$ be a partition such that for each $m\in\{0,1\}^c$, $R_m$ contains all inputs on which the transcript of $\Protocol$ is $m$.
If $\Protocol$ computes $f\circ G^n$ with error $0.1$, $Q^*_{0.1}(F_m)\le q$ by \Cref{prop:qcc_on_rectangle}, where $F_m$ is the function $f\circ G^n$ with its input restricted to $R_m$.

By \Cref{lem:construct_rectangle}, there is a deterministic decision tree which queries at most $\frac{200c}b$ coordinates in $[n]$ such that for any set of unqueried coordinates $J\subseteq[n]$ and query answer $z\in\{0,1\}^{\bar J}$, we can find a transcript $m\in\{0,1\}^c$ and a rectangle $R\subseteq R_m$ which is dense on $(J,z)$.
Therefore,
\begin{itemize}
\item $G^{\bar J}(x,y)=z$ for all $(x,y)\in R$.
\item $X_J,Y_J$ are $0.99$-dense.
Define $F:U\times V\to\{\pm1\}$ such that
$$F(x,y)=f_{J,z}(G^J(x,y))=f\circ G^n(x,y),\qquad\forall\ (x,y)\in R.$$
By \Cref{thm:lifting_on_rectangle}, $Q^*_{0.1}(F)=\Omega\LS\Deg_{1/3}\LS f_{J,z}\RS\cdot b\RS$.
\end{itemize}
Since $R\subseteq R_m$, $Q^*_{0.1}(F)\le Q^*_{0.1}(F_m)\le q$.
So $\Deg_{1/3}\LS f_{J,z}\RS=O\LS\frac qb\RS$.
\end{proof}

As a corollary of \Cref{thm:hybrid_lifting}, we show the hardness of the classical-quantum trade-off.

\begin{theorem}
\label{thm:classical_quantum_tradeoff}
Let $\Protocol(c,q)$ be a hybrid protocol that allows $c$ bits of deterministic communication in advance, and uses $q$ qubits of quantum communication with prior entanglement.
For any function $f:\{0,1\}^n\to\{\pm1\}$, if $\Protocol$ computes $f\circ G^n$ with error $0.1$, we have
\begin{itemize}
\item $q=\Omega\LS\sqrt{\Deg(f)}\cdot b\RS$ if $c\le\frac{\Deg(f)\cdot b}{300}$.
\item $q=\Omega\LS\sqrt{\Bs(f)}\cdot b\RS$ if $c\le\frac{\Bs(f)\cdot b}{300}$.
\end{itemize}
\end{theorem}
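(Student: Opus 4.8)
The plan is to obtain this as a direct corollary of the hybrid lifting theorem \Cref{thm:hybrid_lifting}, together with the facts that $\Deg$ and $\Bs$ are both entropic (\Cref{prop:deg_is_entropic}, \Cref{prop:bs_is_entropic}) and the known polynomial relations $\Deg_{1/3}(g)=\Omega\LS\sqrt{\Deg(g)}\RS$ and $\Deg_{1/3}(g)=\Omega\LS\sqrt{\Bs(g)}\RS$, valid for every total Boolean function $g$ (the second is Nisan--Szegedy, the first the quadratic degree-versus-approximate-degree bound). First I would feed $\Protocol(c,q)$ into \Cref{thm:hybrid_lifting}: since $\Protocol$ computes $f\circ G^n$ with error $0.1$, this produces a fixed deterministic decision tree $T$ querying at most $\frac{200c}{b}$ coordinates, with the guarantee that every leaf of $T$ --- described by its set $J\subseteq[n]$ of unqueried coordinates and its answer $z\in\{0,1\}^{\bar J}$ --- satisfies $\Deg_{1/3}(f_{J,z})=O\LS\frac qb\RS$.

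Next I would descend $T$ greedily so as to keep $\Deg$ large. At an internal node querying a coordinate $i$, the fact that $\Deg$ is entropic (\Cref{prop:deg_is_entropic}, via \Cref{def:entropic_measure}) guarantees a value $z_i\in\{0,1\}$ for which the degree of the current restriction drops by at most $1$; we pass to that child. This is the same inductive step as in the proof of \Cref{prop:entropic_transitivity}, only applied along the coordinates revealed on a root-to-leaf path instead of along a prescribed subset, so it goes through verbatim for an adaptive tree. Following these choices to a leaf $(J,z)$ gives
\[
\Deg(f_{J,z})\ \ge\ \Deg(f)-|\bar J|\ \ge\ \Deg(f)-\frac{200c}{b}.
\]
Combining with the leaf guarantee $\Deg_{1/3}(f_{J,z})=O(q/b)$ and with $\Deg_{1/3}(f_{J,z})=\Omega\LS\sqrt{\Deg(f_{J,z})}\RS$ yields $q=\Omega\LS b\sqrt{\Deg(f)-200c/b}\RS$; and when $c\le\frac{\Deg(f)\cdot b}{300}$ we have $\frac{200c}{b}\le\frac23\Deg(f)$, hence $\Deg(f)-\frac{200c}{b}\ge\frac13\Deg(f)$ and $q=\Omega\LS\sqrt{\Deg(f)}\cdot b\RS$. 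The block-sensitivity statement is proved identically: descend the same tree $T$ greedily using that $\Bs$ is entropic (\Cref{prop:bs_is_entropic}) to reach a leaf with $\Bs(f_{J,z})\ge\Bs(f)-\frac{200c}{b}\ge\frac13\Bs(f)$ under the hypothesis $c\le\frac{\Bs(f)\cdot b}{300}$, and then invoke $\Deg_{1/3}(f_{J,z})=\Omega\LS\sqrt{\Bs(f_{J,z})}\RS$.

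The argument is short and essentially bookkeeping, so no single step is a serious obstacle; the two points that need care are: (i) checking that the single-coordinate entropicity property composes correctly with an \emph{adaptive} decision tree, which it does precisely because on any fixed root-to-leaf branch the queried coordinates are exposed one at a time, so \Cref{def:entropic_measure} applies to each in turn; and (ii) tracking constants so that the $\frac{1}{300}$ slack in the hypothesis dominates the at most $\frac{200c}{b}$ restrictions incurred along $T$. I would also flag that this route loses a square root compared with the ``ideal'' bound $q=\Omega(\Deg_{1/3}(f)\cdot b)$, because we can only control $\Deg$ and $\Bs$ (not $\Deg_{1/3}$ itself) under coordinate restrictions --- recovering the ideal bound is exactly the first open problem listed in the discussion.
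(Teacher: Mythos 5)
Your proposal is correct and takes essentially the same route as the paper: descend the decision tree from \Cref{thm:hybrid_lifting} greedily using the entropic property of $\Deg$ (resp.\ $\Bs$) to keep the restricted measure within $|\bar J|\le 200c/b$ of the original, then invoke $\Deg_{1/3}(g)=\Omega(\sqrt{\Deg(g)})$ (resp.\ $\Omega(\sqrt{\Bs(g)})$) and the leaf bound $\Deg_{1/3}(f_{J,z})=O(q/b)$. The paper's proof does exactly this (citing \cite{ABKRT21} and \cite{NS94} for the two square-root relations), and your remark that the entropic one-step argument composes along a root-to-leaf path in an adaptive tree is the same observation the paper's inductive walk relies on implicitly.
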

\begin{proof}
We construct an input $z\in\{0,1\}^n$ of the decision tree in \Cref{thm:hybrid_lifting} by taking a walk down the decision tree.
The set of unqueried coordinates $J$ initially equals $[n]$.
For each queried coordinate $i\in[n]$, there is a $z_i\in\{0,1\}$ such that $\Deg\LS f_{J\setminus\{i\},z_{\bar J}z_i}\RS\ge\Deg\LS f_{J,z_{\bar J}}\RS-1$ as $\Deg(\cdot)$ is entropic by \Cref{prop:deg_is_entropic}.
We make a query to $i$ and let the query answer be such a $z_i$.
Then $i$ is removed from $J$.
At the end, $\Deg\LS f_{J,z_{\bar J}}\RS\ge\Deg(f)-|\bar J|$.
By \Cref{thm:hybrid_lifting},
\begin{itemize}
\item $|\bar J|\le\frac{200c}b$.
\item $\Deg_{1/3}(f_{J,z})=O\LS\frac qb\RS$.
\end{itemize}
So $q=\Omega\LS\Deg_{1/3}\LS f_{J,z_{\bar J}}\RS\cdot b\RS=\Omega\LS\sqrt{\Deg\LS f_{J,z_{\bar J}}\RS}\cdot b\RS$ \cite[Theorem 4]{ABKRT21}.
If $c\le\frac{\Deg(f)\cdot b}{300}$,
$$\Deg\LS f_{J,z_{\bar J}}\RS\ge\Deg(f)-|\bar J|\ge\Deg(f)-\frac{200c}b\ge\frac{\Deg(f)}3,$$
and $q=\Omega\LS\sqrt{\Deg(f)}\cdot b\RS$.

\bigskip

The lower bound with respect to block sensitivity $\Bs(\cdot)$ follows the same argument as $\Bs(\cdot)$ is entropic by \Cref{prop:bs_is_entropic} and $\Deg_{1/3}\LS f_{J,z}\RS=\Omega\LS\sqrt{\Bs\LS f_{J,z}\RS}\RS$ \cite[Lemma 3.8]{NS94}.
\end{proof}

As a corollary of \Cref{thm:classical_quantum_tradeoff}, we get a nearly tight bound for read-once formula $f$.
A read-once formula, which consists of AND gates, OR gates, and NOT gates, is a formula in which each variable appears exactly once.
We may let $f$ output $1$ if the corresponding formula outputs $0$, and $-1$ if the formula outputs $1$.

\begin{corollary}
\label{coro:for_readonce}
Let $\Protocol(c,q)$ be a hybrid protocol that allows $c$ bits of deterministic communication in advance, and uses $q$ qubits of quantum communication with prior entanglement.
Let $f$ be a read-once formula on $n$ bits.
if $\Protocol$ computes $f\circ G^n$ with error $0.1$ and $c\le\frac{nb}{300}$, then $q=\widetilde\Theta\LS\sqrt{n}\cdot b\RS$.
\end{corollary}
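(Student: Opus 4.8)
The plan is to combine the upper and lower bounds. For the lower bound, I would invoke \Cref{thm:classical_quantum_tradeoff}: since $c\le \frac{nb}{300}$ and a read-once formula on $n$ bits satisfies $\Deg(f)=\Theta(n)$ (each variable is relevant, and the degree of a read-once formula equals the number of variables), in particular $c \le \frac{\Deg(f)\cdot b}{300'}$ for a suitable constant, so the theorem yields $q=\Omega(\sqrt{\Deg(f)}\cdot b)=\Omega(\sqrt n\cdot b)$. The only subtlety is matching the constant $1/300$ exactly; if $\Deg(f)\ge n$ this is immediate, and otherwise one rescales the hypothesis by a constant, which only affects the hidden constant in $\widetilde\Theta$. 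So the lower bound $q=\Omega(\sqrt n\cdot b)$ follows with essentially no work once $\Deg(f)=\Theta(n)$ is established.

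For the matching upper bound, I would exhibit a hybrid protocol with $c=0$ classical bits and $q=\widetilde O(\sqrt n\cdot b)$ qubits that computes $f\circ G^n$ with error $0.1$. The idea is to run the bounded-error quantum query algorithm for evaluating a read-once formula — by the results on quantum formula evaluation (Reichardt et al.), a read-once AND/OR formula on $n$ variables has bounded-error quantum query complexity $\widetilde O(\sqrt n)$ (and for the full binary case even $O(\sqrt n)$) — and simulate each query to the $i$-th input bit $G(x_i,y_i)$ by a quantum communication subroutine. Each such query to $G^n$ costs $O(b)$ qubits of communication: Alice sends her $b$-bit block $x_i$ (in superposition over the query index register) to Bob, Bob computes the inner product $G(x_i,y_i)=\langle x_i,y_i\rangle \bmod 2$ into a phase/answer register, and sends the $b$ qubits back, uncomputing as needed. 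Composing, $\widetilde O(\sqrt n)$ queries each of cost $O(b)$ gives total quantum communication $\widetilde O(\sqrt n\cdot b)$. Amplifying the success probability to $1-0.1$ costs only a constant factor, which is absorbed into the $\widetilde O$. This gives $q=\widetilde O(\sqrt n\cdot b)$ with $c=0\le\frac{nb}{300}$, so the protocol falls within the regime of the corollary.

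Combining the two bounds, for any hybrid protocol with $c\le\frac{nb}{300}$ we get $q=\Omega(\sqrt n\cdot b)$, and there exists one achieving $q=\widetilde O(\sqrt n\cdot b)$, hence $q=\widetilde\Theta(\sqrt n\cdot b)$ as claimed. I expect the main obstacle to be the upper-bound direction: one must carefully justify that the quantum query algorithm for read-once formula evaluation can be turned into a communication protocol with only an $O(b)$ multiplicative overhead per query and no classical preprocessing, i.e., that the query-to-communication simulation for the inner-product gadget is reversible and clean (no leftover garbage that would spoil interference across queries). This is a standard simulation — sending a block, computing a parity into a phase, and uncomputing — but it needs to be spelled out that it preserves the amplitudes the formula-evaluation algorithm relies on, and that error amplification stays within the stated budget. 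The lower-bound direction is essentially immediate from \Cref{thm:classical_quantum_tradeoff} together with $\Deg(f)=\Theta(n)$ for read-once formulas.
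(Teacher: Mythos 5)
Your proposal is correct and follows essentially the same route as the paper: the lower bound is \Cref{thm:classical_quantum_tradeoff} applied to a read-once formula (using that its degree is large), and the upper bound is the $\Theta(\sqrt n)$ bounded-error quantum query algorithm for read-once formula evaluation combined with the BCW simulation at $O(b)$ qubits per query, giving $q=O(\sqrt n\cdot b\log n)$. One small simplification you missed: the degree of a read-once formula on $n$ bits is exactly $n$, so the hypothesis $c\le nb/300$ matches the $c\le\Deg(f)\cdot b/300$ condition of \Cref{thm:classical_quantum_tradeoff} verbatim, with no need to rescale constants.
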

\begin{proof}
The degree of any read-once formula is $n$ \cite[Lemma 27]{ABKRT21}.
By \Cref{thm:classical_quantum_tradeoff}, when $c\le\frac{nb}{300}$, we obtain $q=\Omega\LS\sqrt n\cdot b\RS$.

On the other hand, the quantum query complexity of any read-once formula is $\Theta(\sqrt{n})$ \cite{CKK12}.
By the BCW simulation \cite{BCW98}, this yields a protocol with $q=O\LS\sqrt n\cdot b\log n\RS$.
\end{proof}

\bibliographystyle{alpha}
\bibliography{main}

\appendix

\section{Deferred Proofs}
\label{sec:appendix}

Let $n$ be sufficiently large.
Recall $G:\Lambda\times\Lambda\to\{0,1\}$ is the inner product function over the Boolean domain, where $\Lambda=\{0,1\}^b$ and $b=20\log n$.

\begin{proof}[Proof of \Cref{lem:pointwise_uniform}]
For any $z\in\{0,1\}^J$,
\begin{align*}
\Dist{G^J(X,Y)}(z) & =\sum_{\alpha,\beta\in\Lambda^J}\Dist X(\alpha)\Dist Y(\beta)\prod_{i\in J}\frac{1+(-1)^{z_i}(-1)^{\IP{\alpha_i}{\beta_i}}}2 \\
& =2^{-|J|}\sum_{\alpha,\beta\in\Lambda^J}\Dist X(\alpha)\Dist Y(\beta)\sum_{I\subseteq J}\Chi I(z)(-1)^{\IP{\alpha_I}{\beta_I}} \\
& =2^{-|J|}\sum_{I\subseteq J}\Chi I(z)\sum_{\alpha,\beta\in\Lambda^I}\Dist{X_I}(\alpha)\Dist{Y_I}(\beta)(-1)^{\IP\alpha\beta} \\
& =2^{-|J|}\LS1+\sum_{\emptyset\ne I\subseteq J}\Chi I(z)\sum_{\alpha,\beta\in\Lambda^I}\Dist{X_I}(\alpha)\Dist{Y_I}(\beta)(-1)^{\IP\alpha\beta}\RS.
\end{align*}
Let $H$ be the $2\times2$ Hadamard matrix.
\begin{align*}
\Abs{\sum_{\emptyset\ne I\subseteq J}\Chi I(z)\sum_{\alpha,\beta\in\Lambda^I}\Dist{X_I}(\alpha)\Dist{Y_I}(\beta)(-1)^{\IP\alpha\beta}} & \le\sum_{\emptyset\ne I\subseteq J}\Abs{\sum_{\alpha,\beta\in\Lambda^I}\Dist{X_I}(\alpha)\Dist{Y_I}(\beta)(-1)^{\IP\alpha\beta}} \\
& \le\sum_{\emptyset\ne I\subseteq J}\Abs{(\Dist{X_I})^\Trans H^{\otimes b|I|}(\Dist{Y_I})} \\
& \le\sum_{\emptyset\ne I\subseteq J}\Norm{\Dist{X_I}}\cdot\Norm{H^{\otimes b|I|}}\cdot\Norm{\Dist{Y_I}} \\
& \le\sum_{\emptyset\ne I\subseteq J}2^{-\delta_Xb|I|/2}\cdot2^{b|I|/2}\cdot2^{-\delta_Yb|I|/2} \\
& \le\sum_{\emptyset\ne I\subseteq J}2^{-\LS\delta_X+\delta_Y-1\RS b|I|/2} \\
& \le\sum_{\emptyset\ne I\subseteq J}n^{-4|I|}\le n^{-2}.
\qedhere
\end{align*}
\end{proof}

\begin{proof}[Proof of \Cref{prop:restore_density}]
Assume for the sake of contradiction that $X_{J\setminus I}\mid X_I=\alpha$ is not $\delta$-dense, there is a nonempty $K\subseteq J\setminus I$ and a $\beta\in\Lambda^K$ such that $\Dist{X_K\mid X_I=\alpha}(\beta)>2^{-\delta b|K|}$.
This implies that
$$\Dist{X_{I\uplus K}}(\alpha,\beta)=\Dist{X_I}(\alpha)\cdot\Dist{X_K\mid X_I=\alpha}(\beta)>2^{-\delta b\LS|I|+|K|\RS}.$$
It contradicts the maximality of $I$.
\end{proof}

\begin{proof}[Proof of \Cref{lem:few_bad_value}]
The proof is as in \cite[Section 5]{CFKMP19}.
For $I\subseteq J$, $K\subseteq J\setminus I$, and $\beta\in\Lambda^K$, let $\phi_{I,K,\beta}:\Lambda^I\to\Real$ denote the function that maps $\gamma\in\Lambda^I$ to $\Dist{Y_{I\uplus K}}(\gamma,\beta)$.
So $\phi_I=\Dist{Y_I}$ for $K=\emptyset$.

\begin{definition}
\label{def:biased_value}
A value $\alpha\in\Lambda^J$ is $\eta$-biased for $Y$ with respect to $K\subseteq J$ if for every $\beta\in\Lambda^K$ and nonempty $I\subseteq J\setminus K$, $\Abs{\widehat\phi_{I,K,\beta}(\alpha_I)}\le\eta\cdot2^{-1.1b|I|}$.
\end{definition}

\begin{lemma}
\label{lem:few_non_biased_value}
The probability that $X$ is not $\eta$-biased for $Y$ w.r.t. $K\subseteq J$ is at most $\frac{2^{-\delta_Yb|K|}}{\eta^2}\cdot n^{-2}$.
\end{lemma}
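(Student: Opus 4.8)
The plan is to prove Lemma~\ref{lem:few_non_biased_value} by a union bound over all pairs $(\beta,I)$ with $\beta\in\Lambda^K$ and $\emptyset\ne I\subseteq J\setminus K$, controlling the contribution of a single ``bad'' pair through a second-moment (Markov plus Parseval) estimate combined with the density of $X$.

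First I would fix such a pair $(\beta,I)$ and bound $\Pr_X\big[\Abs{\widehat\phi_{I,K,\beta}(X_I)}>\eta\cdot2^{-1.1b|I|}\big]$. Since $X$ is $\delta_X$-dense, $\Dist{X_I}(\gamma)\le2^{-\delta_Xb|I|}$ for every $\gamma\in\Lambda^I$, so this probability is at most $2^{-\delta_Xb|I|}$ times the number of $\gamma$ with $\Abs{\widehat\phi_{I,K,\beta}(\gamma)}>\eta\cdot2^{-1.1b|I|}$. Applying Markov's inequality to the squared Fourier coefficients bounds that count by $\eta^{-2}2^{2.2b|I|}\sum_{\gamma}\widehat\phi_{I,K,\beta}(\gamma)^2$, and Parseval over the group $\Lambda^I\cong(\mathbb Z/2)^{b|I|}$ — whose characters are exactly the gadget values $\gamma\mapsto(-1)^{\IP{\gamma}{\delta}}$ — gives $\sum_\gamma\widehat\phi_{I,K,\beta}(\gamma)^2=2^{-b|I|}\sum_{\delta\in\Lambda^I}\Dist{Y_{I\uplus K}}(\delta,\beta)^2$. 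Then I would use $\delta_Y$-density of $Y$ to write $\sum_\delta\Dist{Y_{I\uplus K}}(\delta,\beta)^2\le\big(\max_\delta\Dist{Y_{I\uplus K}}(\delta,\beta)\big)\sum_\delta\Dist{Y_{I\uplus K}}(\delta,\beta)\le2^{-\delta_Yb(|I|+|K|)}\Dist{Y_K}(\beta)$, the first factor by density and the second by marginalization. Collecting exponents, the probability for the pair $(\beta,I)$ is at most
\[
\frac{\Dist{Y_K}(\beta)}{\eta^2}\cdot2^{-\delta_Yb|K|}\cdot2^{(1.2-\delta_X-\delta_Y)b|I|}\le\frac{\Dist{Y_K}(\beta)}{\eta^2}\cdot2^{-\delta_Yb|K|}\cdot2^{-0.2b|I|},
\]
where the last inequality uses $\delta_X+\delta_Y\ge1.4$. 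The key point is to \emph{retain} the factor $\Dist{Y_K}(\beta)$ rather than bounding it by $2^{-\delta_Yb|K|}$: summing over the (up to $2^{b|K|}$) values of $\beta$, these add up to $1$, which is exactly what prevents a blow-up. Finally, $\sum_{\emptyset\ne I\subseteq J\setminus K}2^{-0.2b|I|}\le\sum_{i\ge1}(|J|\cdot2^{-0.2b})^i$; since $b=20\log n$ we have $2^{-0.2b}=n^{-4}$ and $|J|\le n$, so this geometric sum is at most $\sum_{i\ge1}n^{-3i}\le n^{-2}$ for $n$ large. Multiplying the two sums yields the claimed bound $\eta^{-2}\,2^{-\delta_Yb|K|}\,n^{-2}$.

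All the steps are elementary; the only delicate part is the exponent bookkeeping, specifically checking that (i) the gadget size $b=\Theta(\log n)$ is large enough for the union bound over the exponentially many subsets $I$ to collapse to $n^{-2}$, and (ii) the threshold $2^{-1.1b|I|}$ in the definition of biasedness leaves slack — the $2.2$ it contributes, offset by the $1$ from the Parseval normalization and the $\ge1.4$ from density, nets a strictly negative coefficient on $b|I|$. The main (mild) obstacle I anticipate is making the Fourier normalization consistent with the paper's convention and identifying $\widehat\phi_{I,K,\beta}(\alpha_I)$ with the Fourier coefficient at the set of coordinates picked out by $\alpha_I$, which is clean precisely because the inner-product gadget's characters over $\Lambda^I$ are the functions $(-1)^{\IP{\gamma}{\delta}}$.
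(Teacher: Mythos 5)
Your proposal is correct and follows essentially the same route as the paper: Parseval plus the $\delta_Y$-density of $Y$ to count the values $\alpha_I$ with a large Fourier coefficient, the $\delta_X$-density of $X$ to convert that count into a probability, and a union bound over $I$ whose geometric sum collapses to $n^{-2}$ because $b=20\log n$. The only difference is bookkeeping: you union-bound over pairs $(\beta,I)$ and retain the factor $\Dist{Y_K}(\beta)$ (summing to $1$), whereas the paper aggregates over $\beta$ at the threshold step via $\sum_{\beta}\widehat\phi_{I,K,\beta}(\alpha_I)^2>\eta^2 2^{-2.2b|I|}$; both give the identical final bound.
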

\begin{proof}
For every $I\subseteq J\setminus K$, $\alpha\in\Lambda^J$ is not $\eta$-biased due to $\alpha_I$ if there is a $\beta\in\Lambda^K$ such that $\Abs{\widehat\phi_{I,K,\beta}(\alpha_I)}>\eta\cdot2^{-1.1b|I|}$, and then $\sum_{\beta\in\Lambda^K}\widehat\phi_{I,K,\beta}(\alpha_I)^2>\eta^2\cdot 2^{-2.2b|I|}$.
Note that
$$\sum_{\alpha_I\in\Lambda^I,\beta\in\Lambda^K}\widehat\phi_{I,K,\beta}(\alpha_I)^2=2^{-b|I|}\sum_{\gamma\in\Lambda^{I\uplus K}}\Dist{Y_{I\uplus K}}(\gamma)^2\le2^{-b|I|}\cdot2^{-\delta_Yb\LS|I|+|K|\RS}.$$
The number of $\alpha_I$ that causes $\alpha$ to be not $\eta$-biased is at most $\frac{2^{-b|I|}\cdot2^{-\delta_Yb\LS|I|+|K|\RS}}{\eta^2\cdot 2^{-2.2b|I|}}$.
Since $X$ is $\delta_X$-dense, For any $\alpha_I$, $\Pr[X_I=\alpha_I]\le2^{-\delta_Xb|I|}$.
Hence, the probability that $X$ is not $\eta$-biased for $Y$ with respect to $K$ due to $\alpha_I$ is at most $\frac{2^{-(\delta_X+\delta_Y-1.2)b|I|-\delta_Yb|K|}}{\eta^2}\le\frac{2^{-0.2b|I|-\delta_Yb|K|}}{\eta^2}$.
By the union bound, the probability that $X$ is not $\eta$-biased for $Y$ with respect to $K$ is at most
\begin{equation*}
\sum_{\emptyset\ne I\subseteq J\setminus K}\frac{2^{-0.2b|I|-\delta_Yb|K|}}{\eta^2}=\frac{2^{-\delta_Yb|K|}}{\eta^2}\sum_{\emptyset\ne I\subseteq J\setminus K}2^{-0.2b|I|}\le\frac{2^{-\delta_Yb|K|}}{\eta^2}\cdot n^{-2}.
\qedhere
\end{equation*}
\end{proof}

\begin{lemma}
\label{lem:bias_is_not_bad}
If $\alpha\in\Lambda^J$ is $\frac12$-biased for $Y$ with respect to $K=\emptyset$, and is $2^{-\frac{\delta_Y}{2.2}b|K|}$-biased for $Y$ with respect to each nonempty $K\subseteq J$, it holds that $\alpha$ is not $0.44$-bad for $Y$.
\end{lemma}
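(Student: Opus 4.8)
The plan is to verify the two requirements of \Cref{def:bad_value} directly: for every $I\subseteq J$ and every $z\in\{0,1\}^I$, show (i) $\Dist{G^I(\alpha,Y)}(z)\ge 2^{-|I|-1}$, and (ii) the conditional distribution $Y_{J\setminus I}\mid G^I(\alpha,Y)=z$ is $0.44$-dense. The single tool for both parts is the Fourier expansion of the indicator
$$\Indicator\LM G^I(\alpha,y_I)=z\RM=2^{-|I|}\sum_{S\subseteq I}\Chi S(z)(-1)^{\IP{\alpha_S}{(y_I)_S}},$$
where $\IP{\alpha_S}{(y_I)_S}=\sum_{i\in S}\IP{\alpha_i}{(y_I)_i}\bmod 2$. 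Weighting this identity by an appropriate distribution on $\Lambda^{I}$ (or $\Lambda^{I\uplus K}$) and collapsing each inner sum to a coordinate-restricted Fourier coefficient, every nonempty term becomes $\Chi S(z)\,2^{b|S|}\widehat{\phi_{S,K,\beta}}(\alpha_S)$, which the bias hypothesis controls.

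For (i), expanding $\Dist{G^I(\alpha,Y)}(z)=\sum_{y_I}\Dist{Y_I}(y_I)\Indicator[G^I(\alpha,y_I)=z]$ and using the identity yields $\Dist{G^I(\alpha,Y)}(z)=2^{-|I|}\bigl(1+\sum_{\emptyset\ne S\subseteq I}\Chi S(z)\,2^{b|S|}\widehat{\phi_{S,\emptyset}}(\alpha_S)\bigr)$. Since $\alpha$ is $\tfrac12$-biased with respect to $K=\emptyset$, each summand has absolute value at most $\tfrac12\cdot2^{-0.1b|S|}$; because $2^{-0.1b}=n^{-2}$, summing $\binom{|I|}k2^{-0.1bk}$ over all sizes $k\ge 1$ gives a total $o(1)$, so the parenthesized quantity is at least $\tfrac12$ for $n$ large, which is (i).

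For (ii), fix a nonempty $K\subseteq J\setminus I$ and $\beta\in\Lambda^K$, and write $\Dist{Y_K\mid G^I(\alpha,Y)=z}(\beta)=\Dist{Y_K,G^I(\alpha,Y)}(\beta,z)\big/\Dist{G^I(\alpha,Y)}(z)$. Bounding the numerator by the same expansion, now weighting by $\Dist{Y_{I\uplus K}}(\,\cdot\,,\beta)$: the $S=\emptyset$ term contributes $2^{-|I|}\Dist{Y_K}(\beta)\le 2^{-|I|}2^{-\delta_Yb|K|}$ because $Y$ is $\delta_Y$-dense, while each nonempty $S$ contributes at most $2^{-|I|}\cdot2^{-\frac{\delta_Y}{2.2}b|K|}2^{-0.1b|S|}$ because $\alpha$ is $2^{-\frac{\delta_Y}{2.2}b|K|}$-biased with respect to $K$; summing the nonempty terms loses only a $(1+o(1))$ factor, and $2^{-\delta_Yb|K|}\le 2^{-\frac{\delta_Y}{2.2}b|K|}$. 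Dividing by the denominator bound from (i) gives $\Dist{Y_K\mid G^I(\alpha,Y)=z}(\beta)\le 4\cdot2^{-\frac{\delta_Y}{2.2}b|K|}$; since $\delta_Y\ge 0.99$ we have $\tfrac{\delta_Y}{2.2}\ge 0.45$, so the right-hand side is at most $2^{-0.44b|K|}$ once $b|K|\ge 200$, which holds as $b=20\log n$ and $n$ is large. Hence the conditional distribution is $0.44$-dense, proving (ii) and the lemma.

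The routine parts are the two Fourier expansions and the geometric-series tail bounds on the nonempty-$S$ sums. The one place that needs care, and in effect dictates the choice of the exponent $\tfrac{\delta_Y}{2.2}$ appearing in \Cref{def:biased_value} and \Cref{lem:few_non_biased_value}, is the constant bookkeeping in (ii): after paying the $O(1)$ factor from Bayes' rule and absorbing the density contribution, the surviving exponent $\tfrac{\delta_Y}{2.2}\ge 0.45$ must still beat the target $0.44$ with a $2^{\Omega(b)}$ margin, which is exactly where the slack between $0.99/2.2$ and $0.44$ and the largeness of the gadget size $b=\Theta(\log n)$ are both used.
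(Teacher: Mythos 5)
Your proof is correct and follows essentially the same approach as the paper's: Fourier-expand the indicator of the linear system $G^I(\alpha,y_I)=z$, so that the $\emptyset$ term is controlled by the density of $Y$ while each nonempty term becomes $2^{b|S|}\widehat\phi_{S,K,\beta}(\alpha_S)$ and is controlled by the bias hypothesis, then take a geometric-series tail bound and divide. The only cosmetic difference is that you spell out the $o(1)$ bookkeeping and the final constant comparison ($4\cdot 2^{-0.45b|K|}\le 2^{-0.44b|K|}$ from $b|K|\ge 200$) a bit more explicitly than the paper does.
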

\begin{proof}
We arbitrarily fix $I\subseteq J$ and $z\in\{0,1\}^I$.
For $K=\emptyset$,
\begin{align*}
\Dist{G^I(\alpha_I,Y_I)}(z) & =2^{-|I|}\LS1+\sum_{\emptyset\ne L\subseteq I}\Chi L(z)\sum_{\gamma\in\Lambda^L}\Dist{Y_L}(\gamma)(-1)^{\IP{\alpha_L}{\gamma}}\RS \\
& =2^{-|I|}\LS1+\sum_{\emptyset\ne L\subseteq I}\Chi L(z)\cdot2^{b|L|}\cdot\hatDist{Y_L}(\alpha_L)\RS \\
& \ge2^{-|I|}\LS1-\sum_{\emptyset\ne L\subseteq I}2^{b|L|}\cdot\frac12\cdot2^{-1.1b|I|}\RS \\
& =2^{-|I|}\LS1-\frac12\sum_{\emptyset\ne L\subseteq I}2^{-0.1b|I|}\RS \\[1ex]
& \ge2^{-|I|-1}.
\end{align*}
Similarly for nonempty $K\subseteq J\setminus I$ and $\beta\in\Lambda^K$,
\begin{align*}
\Dist{\LS Y_K,G^I(\alpha_I,Y_I)\RS}(\beta,z) & \le2^{-|I|}\LS\Dist{Y_K}(\beta)+2^{-\frac{\delta_Y}{2.2}b|K|}\RS \\
& \le2^{-|I|}\LS2^{-\delta_Yb|K|}+2^{-\frac{\delta_Y}{2.2}b|K|}\RS \\
& \le2^{-|I|}\cdot2\cdot2^{-\frac{\delta_Y}{2.2}b|K|} \\[1ex]
& =2^{-|I|+1-\frac{\delta_Y}{2.2}b|K|}.
\end{align*}
Therefore,
\begin{align*}
\Dist{Y_K\mid G^I(\alpha_I,Y_I)=z}(\beta) & =\frac{\Dist{\LS Y_K,G^I(\alpha_I,Y_I)\RS}(\beta,z)}{\Dist{G^I(\alpha_I,Y_I)}(z)} \\
& \le\frac{2^{-|I|+1-\frac{\delta_Y}{2.2}b|K|}}{2^{-|I|-1}} \\
& \le2^{-\frac{\delta_Y}{2.2}b|K|+2} \\[1ex]
& \le2^{-0.44b|K|}
\end{align*}
for any nonempty $K\subseteq J\setminus I$ and $\beta\in\Lambda^K$.
That is, $Y_{J\setminus I}\mid G^I(\alpha_I,Y_I)=z$ is $0.44$-dense.
\end{proof}

Combining \Cref{lem:few_non_biased_value} and \Cref{lem:bias_is_not_bad}, and a union bound over $K\subseteq J$, the probability that $X$ is $0.44$-bad for $Y$ is at most
\begin{align*}
4\cdot n^{-2}+\sum_{\emptyset\ne K\subseteq J}\frac{2^{-\delta_Yb|K|}}{2^{-\frac{\delta_Y}{1.1}b|K|}}\cdot n^{-2} & \le4\cdot n^{-2}+\sum_{\emptyset\ne K\subseteq J}2^{-0.09\delta_Yb|K|}\cdot n^{-2}\le\frac1n.
\qedhere
\end{align*}
\end{proof}

\end{document}